\documentclass[a4paper, intlimits, 10pt]{amsart}

\pagestyle{plain}
\usepackage[T1]{fontenc}
\usepackage{float}

\parindent0pt
%----------------------- general packages (fonts, language) ------------
\usepackage{times}
\usepackage[english]{babel}
\usepackage[T1]{fontenc}
\usepackage{enumitem}
\usepackage[short]{optidef}

\usepackage{graphicx}
\usepackage{amsmath,bm}
\usepackage{caption}
\usepackage{subcaption}
\usepackage{booktabs}
\usepackage{tabularx}
\usepackage{color}
\usepackage{eurosym}
\usepackage[numbers,sort&compress]{natbib}

%----------------------- Math Packages -------------------------------------------

%\usepackage[intlimits]{amsmath}
\usepackage{amsthm, amssymb, amsfonts}
\usepackage{amsmath}

\DeclareMathOperator*{\argmin}{arg\,min}

\usepackage{accents}

%----------------------- Math Environments -------------------------------------------

% THEOREMS -------------------------------------------------------
%\renewtheorem{Proof}{proof}
\newtheorem{theorem}{Theorem}[section]

\newtheorem{proposition}[theorem]{Proposition}

\newtheorem{example}[theorem]{Example}

\newtheorem{remark}[theorem]{Remark}
\numberwithin{equation}{section}

\begin{document}

\title{Relative Bound and Asymptotic Comparison of Expectile with Respect to Expected Shortfall}

\date{\today}

\author{Samuel Drapeau}
\thanks{National Science Foundation of China, Grants Numbers: 11971310 and 11671257; Grant ``Assessment of Risk and Uncertainty in Finance'' number AF0710020 from Shanghai Jiao Tong University; are gratefully acknowledged.}
\address{School of Mathematical Sciences \& Shanghai Advanced Institute for Finance (CAFR)\newline Shanghai Jiao Tong University, Shanghai, China}
\email{sdrapeau@saif.sjtu.edu.cn}
\urladdr{http://www.samuel-drapeau.info}

\author{Mekonnen Tadese}
\address{School of Mathematical Sciences\newline Shanghai Jiao Tong University, Shanghai, China}
\email{mekonnenta@sjtu.edu.cn}
\thanks{Both authors thanks St\'ephane Cr\'epey and Hans F\"ollmer for fruitful discussions.}

\begin{abstract}
    Expectile bears some interesting properties in comparison to the industry wide expected shortfall in terms of assessment of tail risk.
    We study the relationship between expectile and expected shortfall using duality results and the link to optimized certainty equivalent.
    Lower and upper bounds of expectile are derived in terms of expected shortfall as well as a characterization of expectile in terms of expected shortfall.
    Further, we study the asymptotic behavior of expectile with respect to expected shortfall as the confidence level goes to $1$ in terms of extreme value distributions.
    We use concentration inequalities to illustrate that the estimation of value at risk requires larger sample size than expected shortfall and expectile for heavy tail distributions when $\alpha$ is close to $1$.
    Illustrating the formulation of expectile in terms of expected shortfall, we also provide explicit or semi-explicit expressions of expectile and some simulation results for some classical distributions.

    \vspace{5pt}

    \noindent
    {Keywords:} Expectile; Expected Shortfall; Value at Risk; Extreme Value; Risk Measure. 
\end{abstract}

\maketitle
\section{Introduction}
The expectile is a generalization of quantile introduced by \citet{newey1987}.
It is defined as the argmin of a quadratic loss
\begin{equation*}
    e_\alpha(L)=\argmin\left\{ \alpha E\left[ \left( (L-m)^+ \right)^2 \right]+(1-\alpha)E\left[ \left( (L-m)^- \right)^2 \right] \right\}.
\end{equation*}
For $1/2\leq \alpha< 1$, the expectile is a coherent risk measure that corresponds to \citet{foellmer2002}'s shortfall risk with loss function $\ell(x)=\alpha x^+-(1-\alpha) x^-$.
Widely used in insurance and statistics, it has recently gained some interest in finance as it bears some interesting features for the assessment of tail risk in comparison to the industry wide expected shortfall risk measure introduced by \citet{artzner1999}.
From its definition, expectile is elicitable, which is a useful property in terms of backtesting, see \citet{gneiting2011}, \citet{bellini2015}, \citet{emmer2015}, \citet{ziegel2016}, and \citet{chen2018} for a discussion about the financial relevance.
In the seminal paper \citet{weber2006}, and later \citet{bellini2015}, \citet{ziegel2016}, \citet{delbaen2016}, it actually turns out that expectile is the only elicitable risk measure within the class of coherent and law invariant risk measures.
Expectile is also invariant under randomization, while expected shortfall is not, see \citet{weber2006} and \citet{guo2019}.
The property of invariance under randomization is closely related to the convexity of the acceptance set and rejection set of a risk measure, when risk is defined on the space of distributions.
That is, for expectile if both $L_1$ and $L_2$ are acceptable, then the randomized position  $L=L_1$ with probability $p$ and $L=L_2$ with probability $1-p$ with $p$ in $[0,1]$ is also acceptable, see \citep{weber2006} for the detail.
Finally, multivariate shortfall risk -- expectile being an example of which -- seems to be suitable in terms of systemic risk management and risk allocation, see \citet{armenti2018}.
Due to these appealing properties, several authors suggest expectile as an alternative to expected shortfall and value at risk, see \citep{emmer2015, bellini2014, bellini2017, bellini2015, chen2018} for instance.

The goal of this paper is to study the relationship between expectile and expected shortfall.
More specifically, the objective is to provide lower and upper bounds of expectile in terms of expected shortfall, formulate explicitly expectile and its Euler allocation as a function of expected shortfall, and compare the asymptotic behavior of expectile and its Euler allocation with respect to expected shortfall as the confidence level goes to $1$.
As for the bounds, our approach is based on duality results and the link between expectile and expected shortfall through optimized certainty equivalent.
For loss profile $L$ with zero mean, our first result mainly focus on the bounds 
\begin{equation}\label{eq:optbound}
    \left( 1-\frac{1-\alpha}{\alpha + (1-2\alpha)\beta} \right)ES_{\beta}(L)\leq
    e_{\alpha}(L)\leq \left( 1-\frac{1-\alpha}{\alpha} \right)ES_{\alpha}(L).
\end{equation}
As shown in Proposition \ref{prop:OGALA}, the optimal lower bound is in fact an equality
\begin{equation*}
    e_{\alpha}(L)
    =\left(1-\frac{1-\alpha}{\alpha + (1-2\alpha)\beta^\ast}\right)ES_{\beta^\ast}(L)
\end{equation*}
where $\beta^\ast \in [P\left[ L< e_{\alpha}(L) \right], P\left[ L\leq  e_{\alpha}(L) \right]] $.
For continuous distribution, the expression of $\beta^\ast$ is mentioned in \citet[Equation 7]{tylor2008} based on results by \citet{newey1987}.
We generalized this result to any distribution using optimized certainty equivalent.
As an application of this relation we can easily derive explicit or semi-explicit formulations of expectile for wide classes of distributions.
Let $L_1,\dots,L_d$ be loss profiles such that $L=\sum_{k=1}^d L_k$. 
Under some smoothness assumptions, in the same sprits of the optimal lower bound, the Euler allocations of expectile can also be formulated as a function of the Euler allocation of expected shortfall given by  
\begin{equation*}
    e_\alpha(L_k|L)=\left( 1-\frac{1-\alpha}{\alpha+ (1-2\alpha)\beta^\ast} \right)ES_{\beta^\ast}(L_k|L)+\frac{1-\alpha}{\alpha +(1-2\alpha)\beta^\ast}E[L_k].
\end{equation*}
As for the upper bound, \citet{delbaen2013} and \citet{ziegel2016} provide a comonotone least upper bound of expectile in terms of concave distortion risk measure.
Using this result, we show that the upper bound given by Relation \eqref{eq:optbound} is the smallest within the class of expected shortfalls dominating expectile.

According to these bounds, expected shortfall is more conservative than expectile.
We therefore, address their comparative asymptotic behavior as the confidence level goes to $1$.
In actuarial literature, asymptotic analysis is a subject of intensive research as it helps risk managers to model large losses with small amounts of data and to establish asymptotic relationships between risk measures, see \citet{hua2011}.
While \citet{hua2011}, \citet{tang2012} and \citet{mao2012} establish asymptotic relationship between expected shortfall and value at risk, \citet{bellini2017} and \citet{mao2015} provides asymptotic analysis of expectile in terms of value at risk when the loss profile belongs to the maximum domain of attraction of extreme value distributions.
There is in particular an asymptotic relationship between value at risk and expectile excerpting the tail index of the distribution belonging to the Frechet type for instance.
It is therefore possible to estimate the tail index of a distribution by comparing the empirical value of value at risk and expectile for large $\alpha$.
However, from an asymptotic point of view the estimation of value at risk may need large sample size as compared to expected shortfall and expectile for heavy tailed distributions.
From recent concentration inequality results from \citet{fournier2015} using Wasserstein distance, we provide estimations of the error for the empirical expected shortfall $ES_{\alpha,n}$ and empirical expectile $e_{\alpha,n}$ as a function of the confidence level $\alpha$ and their corresponding required sample size $n_{ES_{\alpha}}$ and $n_{e_{\alpha}}$.
For instance, in the case where the distribution has some moment $q>2$, we obtain
\begin{align*}
    P\left[ \left| ES_{\alpha, n} -ES_{\alpha}\right|\geq \varepsilon \right] & \leq C_1 n^{1-s}\varepsilon^{2(1-s)}(1-\alpha)^{2(1-s)}\\
    P\left[ \left| e_{\alpha, n} -e_{\alpha}\right|\geq \varepsilon \right] & \leq C_2 n^{1-s}\varepsilon^{2(1-s)}\left(\frac{1-\alpha}{\alpha}\right)^{2(1-s)}
\end{align*}
for any $2<s<q$ where the constant $C_1$ and $C_2$ are independent of $n$, $\varepsilon$ and $\alpha$.
In particular, as showed in Proposition \ref{prop:samplesize}, for Fr\'echet type distributions with a moment $q>2$, both $n_{ES_{\alpha}}$ and $n_{e_{\alpha}}$ are of the order $1/(1-\alpha)^2$ which are infinitesimal with respect to the corresponding $n_{q_\alpha}$ for the quantile as $\alpha$ goes to one.
Note that estimation for the empirical estimation of the expected shortfall and expectiles and more general risk measures has been the subject of recent studies, see \citet{gao2011}, \citet{holzmann2016}, \citet{kolla2019}, and \citet{daniel2020} for instance.
We use here the bounds in \citep{fournier2015} to get the explicit dependence in terms of the confidence level $\alpha$ which is new to our knowledge.

Using related results, when the loss profile belongs to the domain of attraction of either Weibull type $MDA(\Psi_\eta)$, Gumbel type $MDA(\Lambda)$ or Fr\'{e}chet type $MDA(\Phi_\eta)$, we establish asymptotic relationship between expectile and expected shortfall by providing both the first-order and second-order asymptotic expansion.
For a Fr\'echet type tail distribution with $\eta>1$, asymptotically the ratio of expectile to expected shortfall become strictly less than $1$.
In this case, it actually hold
\begin{equation*}
    e_{\alpha}(L) \sim \frac{(\eta - 1)^{\frac{\eta -1}{\eta}}}{\eta} ES_{\alpha}(L) \quad \text{and} \quad e_{\alpha}(L_k|L) \sim \frac{(\eta - 1)^{\frac{\eta -1}{\eta}}}{\eta} ES_{\alpha}(L_k|L).
\end{equation*}
This result also show that the upper bound provided by Relation \eqref{eq:optbound} is not asymptotically equivalent to $e_\alpha(L)$ in general.
It also allows an estimation of tail index based on empirical data.

We also consider the asymptotic behavior of the parameter $\beta^\ast$. 
For loss profiles whose distribution belongs to Fr\'{e}chet type $MDA(\Phi_\eta)$ with $\eta>1$, \citet{bellini2014} provide the asymptotic behavior of $\beta^\ast$ in terms of $\alpha$.
For Weibull type $MDA(\Psi_\eta)$ and Gumbel type $MDA(\Lambda)$, we show that $1-\alpha=o(1-\beta^\ast)$.
For Fr\'{e}chet case, we also provide a second-order asymptotic expansion for $(1-\beta^\ast)/(1-\alpha)$.

The paper is organized as follows.
In Section 2, aside definitions and notations, we revisit the link between expectile and expected shortfall through optimized certainty equivalent.
In Section 3, we address the lower and upper bounds of expectile in terms of expected shortfall as well as characterize expectile and its Euler allocations in terms of expected shortfall.
Section 4 focuses on asymptotic behavior of expectile in terms of expected shortfall according to the maximum domain of attractions of extreme value distributions to which the loss profile belongs.
Section 5 illustrate the results of Section 3 in terms of explicit or semi-explicit expression of expectile for commonly known distributions.
It also provide an illustrations for some of the asymptotic results of Section 4. 

\section{ Expectile Versus Expected Shortfall through Optimized Certainty Equivalent}
Let $(\Omega, \mathcal{F},P)$ be a probability space and $L^1$ be the set of integrable random variables identified in the almost sure sense.
For $a>0$ and $b\geq 0$ with $1/a\geq b$, denote by 
\begin{equation*}
    \mathcal{Q}_{a,b}=\left\{Q\ll P\colon b\leq \frac{dQ}{dP}\leq \frac{1}{a}\right\}.
\end{equation*}
Throughout, elements of $L^1$ are generically denoted by $L$ and considered as a loss profile.
Given such an $L$ in $L^1$, we denote by $F_L$ and $q_L$ its cumulative distribution and left-quantile function, respectively, that is
\begin{equation*}
    q_L(u) = \inf \left\{ m \colon F_L(m):=P[L\leq m] \geq u \right\}.
\end{equation*}
We also denote the right-quantile function of $L$ by $q^{+}_L$, that is $q^{+}_L(u) = \inf \{ m \colon F_L(m)> u \}$.
A function $R\colon L^1 \to \mathbb{R}$ is called a \emph{risk measure} if it is
\begin{enumerate}[label=\textbf{(\Roman*)}]
    \item \textbf{quasi-convex:} $R(\lambda L_1+(1-\lambda)L_2)\leq \max\{R(L_1), R(L_2)\}$ for every $0\leq \lambda \leq 1$.
    \item \textbf{monotone:} $R(L_1)\leq R(L_2)$ whenever $L_1\leq L_2$ almost surely.
\end{enumerate}
A risk measure is further called \emph{monetary} if it is additionally
\begin{enumerate}[label=\textbf{(\Roman*)}, resume]
    \item \textbf{cash-invariant:} $R(L-m)=R(L)-m$ for every $m$ in $\mathbb{R}$.
\end{enumerate}
Finally, a monetary risk measure is called \emph{coherent} if it is additionally
\begin{enumerate}[label=\textbf{(\Roman*)}, resume]
    \item \textbf{sub-additive:} $R(L_1+L_2)\leq R(L_1)+R(L_2)$.
\end{enumerate}
It is known that monetary risk measures are automatically convex, and, coherent monetary risk measures are \emph{positive-homogeneous}\footnote{$R(\lambda L)=\lambda R(L)$ for every $\lambda>0$.}.
For $L$ in $L^1$, we define
\begin{enumerate}[label=$\bullet$, fullwidth]
    \item \emph{Value at Risk}: for $0<\alpha <1$,
        \begin{equation*}
            V@R_\alpha(L)=\inf\left\{ m\colon P\left[ L\leq m \right]\geq \alpha \right\}=q_L(\alpha).
        \end{equation*}
    \item \emph{Expected Shortfall}: for $0\leq  \alpha < 1$,
        \begin{equation*}
            ES_{\alpha}(L)=\frac{1}{1-\alpha}\int_\alpha^1 V@R_u(L) du = \frac{1}{1-\alpha}\int_{\alpha}^{1}q_L(u)du.
        \end{equation*}
    \item \emph{Expectile}: for $1/2\leq \alpha < 1$, the $\alpha$-expectile of $L$ is defined as a unique number $e_\alpha(L)$  solving 
        \begin{equation*}
            \alpha E\left[ (L-e_\alpha(L))^+ \right]= (1-\alpha) E\left[ (L-e_\alpha(L))^- \right].
        \end{equation*}
\end{enumerate}

The value at risk is cash invariant, monotone and positive-homogeneous, it is however not sub-additive, see \citep{artzner1999,tasche2002}.
The expected shortfall is a special case of an optimized certainty equivalent, while the expectile corresponds to the shortfall risk with loss function $\ell(x)=\alpha x^+-(1-\alpha) x^-$ in the standard definition, \citep{weber2006,ziegel2016,foellmer2002}.
Indeed, $\ell$ is increasing, convex whenever $\alpha\geq 1/2$ and such that $\inf \ell(x)<0$ whenever $\alpha>0$.
Hence, the expectile can be seen as a scaled version of an optimized certainty equivalent, see \citep{bental2007}.
In the literature, see for instance \citep{bellini2014,newey1987}, expectile is also defined as
\begin{equation*}
    \argmin\left\{ \alpha E\left[ \left( (L-m)^+ \right)^2 \right]+(1-\alpha) E\left[ \left( (L-m)^- \right)^2 \right] \right\},
\end{equation*}
for $L$ in $L^2$.
However, due to the first order condition this coincides with the present definition.

Let us recall the following known properties of expectile and expected shortfall.
\begin{proposition}\label{prop:SRES}
    The expectile and expected shortfall are law invariant monetary risk measures and it holds
    \begin{align*}
        ES_{\alpha}(L) & = \min\left\{ m+\frac{1}{1-\alpha}E\left[ (L-m)^+ \right]\colon m\in \mathbb{R} \right\}\\
                       & = q_L(\alpha) +\frac{1}{1-\alpha}E\left[ \left( L-q_L(\alpha) \right)^+ \right]\\
                       & = \max\left\{ E^Q[L]\colon Q\in \mathcal{Q}_{1-\alpha,0} \right\}
    \end{align*}
    with optimal density 
    \begin{equation*}
        \frac{dQ^\ast}{dP}=\frac{1}{1-\alpha}\left(1_{\{L>q_L(\alpha)\}}+k1_{\{L= q_L(\alpha)\}}\right)
    \end{equation*}
    where $k$ is a constant such that $E[dQ^\ast/dP]=1$ and
    \begin{align}
        e_{\alpha}(L) & = \max_{\frac{1-\alpha}{\alpha}< \gamma <1}\left\{(1-\gamma)ES_{\frac{(1+\gamma)\alpha-1)}{(2\alpha-1)\gamma}}(L)+\gamma E[L]\right\}\label{eq:SRES01}\\
                      & = \max_{\frac{1-\alpha}{\alpha}< \gamma< 1}\int_{0}^{1} ES_{u}(L)\mu^{\gamma}(du) \label{eq:SRES02}\\
                      & = \max \left\{ E^Q[L] \colon Q\in \mathcal{Q}_{(1-\alpha)/\gamma\alpha,\gamma} \text{ for some }\gamma\in \left[\frac{1-\alpha}{\alpha},1\right]\right\}\label{eq:SRES03}
    \end{align}
    with optimal density
    \begin{equation*}
        \frac{dQ^\ast}{dP}
        =\frac{\alpha1_{\{L>e_\alpha(L)\}}+(1-\alpha) 1_{\{L \leq e_\alpha(L)\}}}{\alpha + (1- 2\alpha)P[L\leq e_\alpha(L)]}
    \end{equation*}
    where $\mu^\gamma=( 1-\gamma)\delta_{\frac{(1+\gamma)\alpha-1)}{(2\alpha-1)\gamma}}+\gamma \delta_0$ is a parameterized family of distribution on $[0,1]$.
\end{proposition}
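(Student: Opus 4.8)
The plan is to handle the three families of claims separately, exploiting throughout the optimized certainty equivalent structure of this section. The law invariance is immediate since both $ES_\alpha$ and $e_\alpha$ depend on $L$ only through $F_L$, and the monetary -- in fact coherent -- properties are classical: coherence of $ES_\alpha$ is standard, while coherence of $e_\alpha$ for $\alpha \geq 1/2$ comes from the convexity and monotonicity of the loss $\ell(x) = \alpha x^+ - (1-\alpha)x^-$. I would simply record these and move to the representations.

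For the expected shortfall identities I would study $g(m) = m + \frac{1}{1-\alpha}E[(L-m)^+]$. It is convex with $g'(m) = 1 - \frac{1}{1-\alpha}P[L>m]$, using $\frac{d}{dm}E[(L-m)^+] = -(1 - F_L(m))$ and convexity to cover the non-differentiability points, so the minimum is reached exactly where $F_L(m) = \alpha$, that is at $m = q_L(\alpha)$. Substituting gives the second line, and its equality with the first is the quantile-integral identity $\int_\alpha^1 q_L(u)\,du = (1-\alpha)q_L(\alpha) + E[(L-q_L(\alpha))^+]$. The dual line is then the Hardy--Littlewood bathtub argument: maximizing $E[L\,dQ/dP]$ subject to $0 \leq dQ/dP \leq 1/(1-\alpha)$ and $E[dQ/dP] = 1$ concentrates all admissible mass on the largest realizations of $L$, producing the stated optimal density with the atom-balancing constant $k$.

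For the expectile the defining first order condition is the engine. Rewriting $\alpha E[(L-e)^+] = (1-\alpha)E[(L-e)^-]$ as $E[(L-e)\varphi] = 0$ with $\varphi = \alpha 1_{\{L>e\}} + (1-\alpha)1_{\{L\leq e\}}$ and $E[\varphi] = \alpha + (1-2\alpha)P[L\leq e]$ yields $e = E^{Q^\ast}[L]$ for $dQ^\ast/dP = \varphi/E[\varphi]$, which is exactly the claimed density. To turn this into the maximization \eqref{eq:SRES03} I would show $E^Q[L] \leq e$ for every $Q$ whose density lies in $[\gamma, \gamma\alpha/(1-\alpha)]$: since $E[\varphi]=1$ one has $E^Q[L]-e = E[\varphi(L-e)]$, and bounding the density above by $\gamma\alpha/(1-\alpha)$ on $\{L>e\}$ and below by $\gamma$ on $\{L\leq e\}$ gives $E[\varphi(L-e)] \leq \frac{\gamma\alpha}{1-\alpha}E[(L-e)^+] - \gamma E[(L-e)^-]$, which vanishes once the first order condition is inserted. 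As $Q^\ast$ realizes equality and its minimal density $\gamma^\ast = (1-\alpha)/(\alpha+(1-2\alpha)P[L\leq e]) = \essinf dQ^\ast/dP$ lies in $[\frac{1-\alpha}{\alpha},1]$, this proves \eqref{eq:SRES03} together with the optimality of the displayed density.

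Passing from \eqref{eq:SRES03} to \eqref{eq:SRES01} I would fix $\gamma$ and compute the inner maximum by the substitution $dQ/dP = \gamma + (1-\gamma)\psi$: the constraint $\gamma \leq dQ/dP \leq \gamma\alpha/(1-\alpha)$ becomes $0 \leq \psi \leq 1/(1-\beta)$ with $\beta = \frac{(1+\gamma)\alpha-1}{(2\alpha-1)\gamma}$, so the already proved expected shortfall dual gives the inner maximum $\gamma E[L] + (1-\gamma)ES_{\beta}(L)$; maximizing over $\gamma$ yields \eqref{eq:SRES01}, and \eqref{eq:SRES02} is the same expression rewritten through $\mu^\gamma$ upon noting $ES_0(L) = E[L]$. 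The main obstacle I anticipate is the bookkeeping around atoms and boundary parameters: making the minimizer argument for $g$ and the identification of $\gamma^\ast$ robust when $F_L$ jumps at $q_L(\alpha)$ or at $e$ (where $k$ and the interval $[P[L<e],P[L\leq e]]$ appear), and verifying that $\gamma \mapsto \beta$ is an increasing bijection of $(\frac{1-\alpha}{\alpha},1)$ onto $(0,1)$ so that the two maxima genuinely coincide.
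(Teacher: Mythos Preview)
Your argument is correct, with only a small notational slip: in the paragraph proving \eqref{eq:SRES03} you reuse the symbol $\varphi$ first for the unnormalized weight $\alpha 1_{\{L>e\}}+(1-\alpha)1_{\{L\le e\}}$ and then, two lines later, for the generic density $dQ/dP$; the inequality $E[\varphi(L-e)]\le \frac{\gamma\alpha}{1-\alpha}E[(L-e)^+]-\gamma E[(L-e)^-]$ only makes sense with the second meaning, so rename one of the two.

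The route, however, is genuinely different from the paper's. The paper first proves a single abstract result about the optimized certainty equivalent $R_{a,b}(L)=\inf_m\{m+E[\ell_{a,b}(L-m)]\}$ for the piecewise linear loss $\ell_{a,b}(x)=x^+/a-bx^-$, establishing in one stroke the optimizer $m^\ast=q_L(\lambda(a,b))$, the formula $R_{a,b}=(1-b)ES_{\lambda(a,b)}+bE[L]$, the dual over $\mathcal{Q}_{a,b}$, and the shortfall--OCE relation $\inf\{m:E[\ell_{a,b}(L-m)]\le 0\}=\sup_{\gamma}R_{a/\gamma,b\gamma}$. Proposition~\ref{prop:SRES} then drops out by specializing to $(a,b)=(1-\alpha,0)$ for $ES_\alpha$ and to $(a,b)=(1/2\alpha,2(1-\alpha))$ for $e_\alpha$; the expectile representations \eqref{eq:SRES01}--\eqref{eq:SRES03} are literally read off from the general formula with $\lambda(ab/\gamma,\gamma)=\frac{(1+\gamma)\alpha-1}{(2\alpha-1)\gamma}$. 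Your proof instead attacks each identity separately: a direct convex-analysis minimization for $ES_\alpha$, a Hardy--Littlewood rearrangement for its dual, and for the expectile a direct verification that $Q^\ast$ attains the supremum together with the affine substitution $dQ/dP=\gamma+(1-\gamma)\psi$ to reduce the inner problem to an $ES_\beta$ dual. Your approach is more elementary and fully self-contained, with no appeal to the general OCE/divergence duality; the paper's approach is shorter once Proposition~\ref{prop:OCE} is in hand and makes the structural link between $ES$ and $e_\alpha$ through the one-parameter family $R_{a/\gamma,b\gamma}$ explicit, which is exactly what is reused in the bound Proposition~\ref{prop:bound}.
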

These results can be found or derived from \citep{bental2007, foellmer2002, artzner1999, bellini2014}.
Interestingly though, they are strongly connected through the optimized certainty equivalent from \citet{bental2007}.
For the sake of readability and further computation we expose briefly this connection.
\begin{proposition}\label{prop:OCE}
    For a loss function $\ell_{a,b}(x) := x^+ / a - b x^-$ where $0<a<1$ and $0\leq b\leq 1$, the optimized certainty equivalent defined as
    \begin{equation}\label{eq:OCE01}
        R_{a,b}(L)=\inf \left\{ m+E\left[ \ell_{a,b}(L-m) \right] \colon m \in \mathbb{R}\right\}, \quad L\in L^1
    \end{equation}
    is a law invariant coherent risk measure such that
    \begin{align}
        \label{eq:OCE02} R_{a,b}(L) & = q_{L}\left( \lambda(a,b) \right)+E\left[ \ell_{a,b}\left( L-q_{L}\left(\lambda(a,b) \right) \right) \right]\\
        \label{eq:OCE03}            & = \frac{1}{a} \int_{\lambda(a,b)}^{1} q_{L}(u) du+b\int_{0}^{\lambda(a,b)} q_{L}(u) du\\
        \label{eq:OCE04}            & = (1-b)ES_{\lambda(a,b)}(L)+b E[L]\\
        \label{eq:OCE05}            & = \sup\left\{ E^Q[L]\colon Q\in \mathcal{Q}_{a,b}\right\}
    \end{align}
    where $\lambda(a,b)=(1- a)/(1 - ab)$.
    Furthermore, for $0<b\leq 1$ it holds
    \begin{multline}\label{eq:OCE06}
        \inf\left\{ m \colon E\left[ \ell_{a,b}(L-m) \right] \leq 0\right\} = \sup_{a \leq \gamma \leq 1/b}R_{a/\gamma,b\gamma}(L)\\
        = \sup\left\{ E^Q[L]\colon Q\in \mathcal{Q}_{a/\gamma,b\gamma} \text{ for some } \gamma\in [a,1/b]\right\}.
    \end{multline}
\end{proposition}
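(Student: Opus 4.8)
The plan is to read \eqref{eq:OCE02}--\eqref{eq:OCE05} as a specialization of the optimized certainty equivalent calculus of \citet{bental2007} and to reserve the real work for \eqref{eq:OCE06}. Throughout I would write $g(m):=E[\ell_{a,b}(L-m)]$ and record that $\ell_{a,b}$ is piecewise linear with slope $b$ on $(-\infty,0)$ and slope $1/a$ on $(0,\infty)$; since $0<a<1$ and $0\le b\le 1$ force $b\le 1<1/a$, the function $\ell_{a,b}$ is convex, nondecreasing, positively homogeneous, and vanishes at $0$. Coherence and law invariance of $R_{a,b}$ then follow from the general OCE theory, so I would simply cite \citet{bental2007} for that part and concentrate on the representations.

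For \eqref{eq:OCE02}--\eqref{eq:OCE04} the map $m\mapsto m+g(m)$ is convex (an affine inner map composed with the convex $\ell_{a,b}$), so a minimizer is characterized by $0$ lying in its subdifferential. Computing the one-sided derivatives of $g$, which mix $P[L>m]$ weighted by $1/a$ and $P[L<m]$ weighted by $b$, shows that the minimizing condition reads $P[L<m]\le \lambda(a,b)\le P[L\le m]$, i.e.\ $m^\ast=q_L(\lambda(a,b))$, which is \eqref{eq:OCE02}. Substituting $m^\ast$ and expanding $E[(L-m^\ast)^+]$ and $E[(L-m^\ast)^-]$ through the quantile integrals $\int_{\lambda}^1 q_L$ and $\int_0^{\lambda}q_L$, the coefficient of $m^\ast$ cancels precisely because $\tfrac{1-\lambda}{a}+b\lambda=1$ at $\lambda=\lambda(a,b)$, giving \eqref{eq:OCE03}; the companion identity $\tfrac{1-b}{1-\lambda}+b=\tfrac1a$ then recasts \eqref{eq:OCE03} as the mixture \eqref{eq:OCE04}. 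These are routine quantile computations, so I would only display the two algebraic identities that make the cancellations work.

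To obtain the dual representation \eqref{eq:OCE05} I would avoid a minimax argument and instead build it from \eqref{eq:OCE04} together with the dual representation of $ES_\lambda$ in Proposition \ref{prop:SRES}. The affine map $dQ_1/dP\mapsto dQ/dP:=(1-b)\,dQ_1/dP+b$ carries $\mathcal{Q}_{1-\lambda,0}$ bijectively onto $\mathcal{Q}_{a,b}$: it sends the bound interval $[0,1/(1-\lambda)]$ to $[b,\tfrac{1-b}{1-\lambda}+b]=[b,1/a]$, it preserves the normalization since $E[(1-b)\,dQ_1/dP+b]=1$, and it satisfies $E^Q[L]=(1-b)E^{Q_1}[L]+bE[L]$. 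Taking the supremum over $Q_1$ then turns \eqref{eq:OCE04} into \eqref{eq:OCE05}.

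The heart of the proof is \eqref{eq:OCE06}. The key is the scaling identity $\ell_{a/\gamma,b\gamma}=\gamma\,\ell_{a,b}$, which yields $R_{a/\gamma,b\gamma}(L)=\inf_m\{m+\gamma g(m)\}$ for $\gamma\in[a,1/b]$, so that $\phi(\gamma):=R_{a/\gamma,b\gamma}(L)$ is \emph{concave}, being an infimum of functions affine in $\gamma$. Writing $m_0:=\inf\{m:g(m)\le 0\}$, continuity of the finite convex function $g$ gives $g(m_0)=0$ exactly when $b>0$ (for $b=0$ the set collapses to $\{m:L\le m\text{ a.s.}\}$ with infimum $q_L(1)$, which is why the hypothesis excludes it). The upper bound is then immediate, $\phi(\gamma)\le m_0+\gamma g(m_0)=m_0$ for every admissible $\gamma$. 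For the matching lower bound I would show that $m_0$ is itself a minimizer of $m+\gamma^\ast g(m)$ for some admissible $\gamma^\ast$: the subdifferential $\partial g(m_0)$ consists of values $-\tfrac1a P[L> m_0]-bP[L< m_0]$ with the two probabilities summing to at most $1$, hence $\partial g(m_0)\subseteq[-1/a,-b]$, and this is exactly the range traced by $-1/\gamma^\ast$ as $\gamma^\ast$ runs over $[a,1/b]$; choosing $\gamma^\ast$ with $-1/\gamma^\ast\in\partial g(m_0)$ puts $0$ in the subdifferential of $m+\gamma^\ast g(m)$ at $m_0$, so $\phi(\gamma^\ast)=m_0+\gamma^\ast g(m_0)=m_0$ and $\sup_\gamma\phi(\gamma)=m_0$. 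The second equality in \eqref{eq:OCE06} follows by inserting \eqref{eq:OCE05} into each $R_{a/\gamma,b\gamma}$ and merging the two suprema. The step I expect to be the main obstacle is precisely this attainment for general, possibly atomic, distributions: showing that an \emph{admissible} $\gamma^\ast$ exists making $m_0$ solve the inner minimization, which is where the one-sided-derivative bookkeeping and the precise range $[-1/a,-b]$ of $\partial g$ must be handled with care.
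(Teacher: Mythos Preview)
Your argument is correct and, for \eqref{eq:OCE02}--\eqref{eq:OCE04}, essentially identical to the paper's. The two dual identities are where you diverge. For \eqref{eq:OCE05} the paper invokes the general OCE duality of \citet[Theorem 4.2]{bental2007} directly, observing that $\ell_{a,b}^\ast$ is the convex indicator of $[b,1/a]$; your route via the affine density map $dQ_1/dP\mapsto (1-b)\,dQ_1/dP+b$ pushing $\mathcal{Q}_{1-\lambda,0}$ onto $\mathcal{Q}_{a,b}$ is equally valid and arguably more transparent, but be aware that in this paper Proposition~\ref{prop:SRES} is \emph{deduced from} Proposition~\ref{prop:OCE}, so citing it here is circular---pull the $ES$ dual representation from \citet{foellmer2016} or another independent source instead. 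For \eqref{eq:OCE06} the paper again defers to \citet[Section 5.2]{bental2007} for the abstract relation $\inf\{m:E[\ell(L-m)]\le 0\}=\sup_{1/\gamma\in\mathrm{dom}(\ell^\ast)}\inf_m\{m+\gamma E[\ell(L-m)]\}$ and simply reads off $\gamma\in[a,1/b]$ from $\mathrm{dom}(\ell_{a,b}^\ast)=[b,1/a]$; your self-contained proof via the scaling identity $\ell_{a/\gamma,b\gamma}=\gamma\,\ell_{a,b}$ and the subdifferential inclusion $\partial g(m_0)\subseteq[-1/a,-b]$ unpacks exactly that argument and has the merit of making explicit why an admissible $\gamma^\ast$ exists even for atomic laws, which you rightly flag as the delicate step. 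One small imprecision: $\partial g(m_0)$ is the interval $[g'_-(m_0),g'_+(m_0)]$, with endpoints $-\tfrac{1}{a}P[L\geq m_0]-bP[L<m_0]$ and $-\tfrac{1}{a}P[L>m_0]-bP[L\leq m_0]$ respectively, not a single value of the form you display; the inclusion in $[-1/a,-b]$ you need holds regardless.
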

\begin{proof}
    Following \citep{bental2007}, the  optimal $m^\ast$ in definition \eqref{eq:OCE01} satisfies
    \begin{equation*}
        \frac{1}{a}P\left[ L > m^\ast \right]+bP\left[ L\leq m^\ast \right]\leq 1\leq\frac{1}{a}P\left[ L \geq m^\ast \right]+bP\left[ L< m^\ast \right].
    \end{equation*}
    Rearranging, we get $P[ L < m^\ast ]\leq \lambda(a,b)\leq P[ L\leq m^\ast ]$ showing that $m^\ast=q_L(\lambda(a,b))$.
    Plugging the optimizer into \eqref{eq:OCE01} yields \eqref{eq:OCE02}.
    From \eqref{eq:OCE02} to \eqref{eq:OCE03} comes from the fact that $q_L\sim L$.
    As for \eqref{eq:OCE04}
    \begin{equation*}
        \begin{split}
            \frac{1}{a}\int_{\lambda(a.b)}^{1}q_L(u)du + b \int_{0}^{\lambda(a,b)}q_L(u)du
            &=\left(\frac{1}{a}-b\right)\int_{\lambda(a,b)}^{1}q_L(u)du+b E[L]\\
            &=(1-b)ES_{\lambda(a,b)}(L)+bE[L].
        \end{split}
    \end{equation*}
    The Relation \eqref{eq:OCE03} implies that the optimized certainty equivalent is a law invariant and coherent risk measure.
    The Relation \eqref{eq:OCE05} follows from the general robust representation of optimized certainty equivalent in terms of divergences, that is
    \begin{equation*}
        \inf\left\{ m+E\left[ \ell_{a,b}\left( L-m \right) \right] \colon m \in \mathbb{R}\right\}=\sup\left\{ E^Q\left[ L \right] -E\left[ \ell_{a,b}^\ast\left( \frac{dQ}{dP} \right) \right]\colon \frac{dQ}{dP}\in L^\infty\right\}
    \end{equation*}
    see \citep[Theorem 4.2]{bental2007}, since the convex conjugate\footnote{ $\ell_{a,b}^\ast(x)=\sup\{x\cdot y-\ell_{a,b}(y)\colon y\in \mathbb{R}^d\}$.} $\ell_{a,b}^\ast(x) = 0$ if $b\leq x\leq 1/a$ and $\infty$ otherwise.
    As for the last Relation \eqref{eq:OCE06}, it comes from the general relation between optimized certainty equivalent and shortfall risk \citep[Section 5.2]{bental2007} where
    \begin{align*}
        \inf \left\{ m\colon E\left[ \ell_{a,b}(L-m)\leq 0 \right] \right\}  &=\sup_{1/\gamma \in \mathrm{dom}(\ell_{a,b}^\ast)}\inf\left\{ m+\gamma E\left[ \ell_{a,b}\left( L-m \right) \right] \right\}
    \end{align*}
    which gives the result.
\end{proof}
\begin{proof}[Proof of Proposition \ref{prop:SRES}]
    The relations for the expected shortfall follows directly from Proposition \ref{prop:OCE} by noticing that $ES_{\alpha}(L)=R_{a,b}(L)$ for $a = 1-\alpha$ and $b=0$.
    As for the relations for the expectile, they follow from \eqref{eq:OCE06} as $e_{\alpha}(L)=\inf\{ m\colon E[\ell_{a,b}( L-m )]\leq 0 \}$ for $a = 1/2\alpha$ and $b=2(1-\alpha)$ which fulfills the conditions of Proposition \ref{prop:OCE} as $1/2\leq \alpha<1$.
    As for the optimal density for expected shortfall, see \citet{foellmer2016, neil2015} and for expectile it is given in \citep[Proposition 8]{bellini2014}.
\end{proof}

\begin{remark}
    Relations \eqref{eq:SRES01}--\eqref{eq:SRES03} provide the link between expectile and expected shortfall.
    One sees in particular, that while expected shortfall is comonotone, the expectile is not.
    Indeed, Relation \eqref{eq:SRES02} is the Kusuoka representation which can not fulfill the assumptions of \citep[Theorem 4.93, p. 260]{foellmer2016}.
    On the other hand, as showed in \citep{weber2006} while expectile is invariant under randomization, the expected shortfall is not. 
\end{remark}

\section{Expectile as a Function of Expected Shortfall}
Based on Relation \eqref{eq:SRES01} we provide bounds for the expectile in terms of expected shortfall in the spirit of \citep[Proposition 9]{bellini2014}.
The upper bound $(1-(1-\alpha)/\alpha)ES_\alpha$ is to our knowledge new, while the larger upper bound $ES_{\frac{2\alpha-1}{\alpha}}$ is given in \citep{delbaen2013}.
The present proof uses the relation between optimized certainty equivalent and expectile.

\begin{proposition}\label{prop:bound}
    Let $L$ be in $L^1$ with zero mean\footnote{
        Due to translation invariance, in the case where $E[L]\neq 0$, we get
        \begin{multline*}
            \left( 1-\frac{1-\alpha}{\alpha + (1-2\alpha)\beta} \right)ES_{\beta}(L)+\frac{1-\alpha}{\alpha + (1-2\alpha)\beta}E[L]
            \leq \\
            e_{\alpha}(L)\leq \left( 1-\frac{1-\alpha}{\alpha} \right)ES_{\alpha}(L)+\frac{1-\alpha}{\alpha} E[L]\leq ES_{\frac{2\alpha-1}{\alpha}}(L).
        \end{multline*}
    }. 
    Then for each $0<\beta<1$, it holds
    \begin{equation*}
        \left( 1-\frac{1-\alpha}{\alpha + (1-2\alpha)\beta} \right)ES_{\beta}(L)\leq
        e_{\alpha}(L)
        \leq \left( 1-\frac{1-\alpha}{\alpha} \right)ES_{\alpha}(L)\leq ES_{\frac{2\alpha-1}{\alpha}}(L).
    \end{equation*}
\end{proposition}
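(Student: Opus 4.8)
The plan is to read the displayed chain as three separate inequalities and to deduce each from the representation \eqref{eq:SRES01} of Proposition \ref{prop:SRES} together with the defining equation of the expectile. Throughout I use $E[L]=0$, which kills the term $\gamma E[L]$ in \eqref{eq:SRES01}, so that $e_\alpha(L)=\max_{(1-\alpha)/\alpha<\gamma<1}(1-\gamma)ES_{\beta(\gamma)}(L)$ with $\beta(\gamma)=\frac{(1+\gamma)\alpha-1}{(2\alpha-1)\gamma}$. The key preliminary computation is a change of variables: solving $\beta=\beta(\gamma)$ for $\gamma$ yields $\gamma=\frac{1-\alpha}{\alpha+(1-2\alpha)\beta}$, an increasing bijection of $\beta\in(0,1)$ onto $\gamma\in(\frac{1-\alpha}{\alpha},1)$, under which $1-\gamma=1-\frac{1-\alpha}{\alpha+(1-2\alpha)\beta}$.

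The \emph{lower bound} is then immediate: for a fixed $\beta\in(0,1)$ the associated $\gamma$ lies in the feasible range, and since the right-hand side of \eqref{eq:SRES01} is a maximum, evaluating the bracket at this single $\gamma$ already gives $e_\alpha(L)\ge(1-\gamma)ES_\beta(L)=\big(1-\frac{1-\alpha}{\alpha+(1-2\alpha)\beta}\big)ES_\beta(L)$, for every $\beta$. For the \emph{middle (upper) bound} I first reduce it to a single inequality. Subtracting the two sides of the defining equation and using $E[(L-e)^+]-E[(L-e)^-]=E[L-e]=-e$, one gets $e_\alpha(L)=\frac{2\alpha-1}{1-\alpha}E[(L-e_\alpha(L))^+]$. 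Writing $e=e_\alpha(L)$ and dividing by $\frac{2\alpha-1}{1-\alpha}>0$, the target $e\le\frac{2\alpha-1}{\alpha}ES_\alpha(L)$ becomes equivalent to $\alpha E[(L-e)^+]\le(1-\alpha)ES_\alpha(L)=\int_\alpha^1 q_L(u)\,du$. I would prove this by splitting on the position of $e$ relative to $q_L(\alpha)$. If $e\ge q_L(\alpha)$, then $(L-e)^+\le(L-q_L(\alpha))^+$ gives $E[(L-e)^+]\le E[(L-q_L(\alpha))^+]$; combined with the identity $ES_\alpha(L)=q_L(\alpha)+\frac1{1-\alpha}E[(L-q_L(\alpha))^+]$ from Proposition \ref{prop:SRES}, the claim reduces to $q_L(\alpha)+E[(L-q_L(\alpha))^+]\ge0$, which is exactly the elementary bound $E[(L-q_L(\alpha))^+]\ge E[L-q_L(\alpha)]=-q_L(\alpha)$. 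If instead $e<q_L(\alpha)$, then $P[L\le e]<\alpha$, so the level $\beta^\ast\le P[L\le e]$ attaining the maximum in \eqref{eq:SRES01} satisfies $\beta^\ast<\alpha$, whence $ES_{\beta^\ast}(L)\le ES_\alpha(L)$; since the maximiser obeys $1-\gamma^\ast<1-\frac{1-\alpha}{\alpha}=\frac{2\alpha-1}{\alpha}$ and $ES_{\beta^\ast}(L)\ge E[L]=0$, I conclude $e=(1-\gamma^\ast)ES_{\beta^\ast}(L)\le\frac{2\alpha-1}{\alpha}ES_\alpha(L)$.

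I expect this dichotomy to be the \textbf{main obstacle}. For a general $L$ the numbers $e_\alpha(L)$ and $q_L(\alpha)$ are not comparable (either order can occur), and $q_L(\alpha)$ itself may be negative even for $\alpha$ near $1$; hence no single monotonicity argument in the level of $ES$ handles both regimes, and it is precisely the factor $\alpha$ on the left that makes each case close. One must therefore keep the two mechanisms separate: a threshold-monotonicity estimate when $e$ dominates the quantile, and the Kusuoka representation when it does not.

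Finally, for $\frac{2\alpha-1}{\alpha}ES_\alpha(L)\le ES_{(2\alpha-1)/\alpha}(L)$ I would argue by a mixing construction. Put $\theta=\frac{2\alpha-1}{\alpha}$, so $1-\theta=\frac{1-\alpha}{\alpha}$ and the feasible densities for $ES_\theta$ are those with $0\le\frac{dQ}{dP}\le\frac1{1-\theta}=\frac{\alpha}{1-\alpha}$. Taking the optimal density $Z^\ast$ for $ES_\alpha(L)$ from Proposition \ref{prop:SRES}, which satisfies $0\le Z^\ast\le\frac1{1-\alpha}$, form $\tilde Z=\theta Z^\ast+(1-\theta)$. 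Then $E[\tilde Z]=1$, and its largest value $\frac{\theta}{1-\alpha}+(1-\theta)$ equals $\frac{\alpha}{1-\alpha}$, the computation collapsing exactly to the identity $\theta=\frac{2\alpha-1}{\alpha}$; thus $\tilde Z$ is admissible for $ES_\theta$. Using $E[L]=0$ one obtains $ES_{\theta}(L)\ge E[\tilde ZL]=\theta E[Z^\ast L]=\theta\,ES_\alpha(L)$, which is the last inequality and completes the chain.
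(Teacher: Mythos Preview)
Your argument is correct in all three parts, but for the two upper bounds it takes a markedly different and longer route than the paper.

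For the lower bound you and the paper do the same thing: evaluate the supremum in \eqref{eq:SRES01} at a single feasible $\gamma$, after the change of variables $\gamma\leftrightarrow\beta$.

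For the middle inequality $e_\alpha(L)\le\frac{2\alpha-1}{\alpha}ES_\alpha(L)$ the paper avoids your case split entirely. Working with $a=1/2\alpha$, $b=2(1-\alpha)$ so that $ab=(1-\alpha)/\alpha$, it simply notes the pointwise domination $\ell_{ab/\gamma,\gamma}\le\ell_{ab,ab}$ for every $\gamma\in[ab,1]$; hence $R_{ab/\gamma,\gamma}(L)\le R_{ab,ab}(L)$, and taking the supremum over $\gamma$ gives $e_\alpha(L)\le R_{ab,ab}(L)$. Since $\lambda(ab,ab)=\alpha$, Relation \eqref{eq:OCE04} yields $R_{ab,ab}(L)=\frac{2\alpha-1}{\alpha}ES_\alpha(L)$ directly. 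The last inequality follows by the same mechanism from $\ell_{ab,ab}\le\ell_{ab,0}$, giving $R_{ab,ab}(L)\le R_{ab,0}(L)=ES_{(2\alpha-1)/\alpha}(L)$. So the ``main obstacle'' you identify---the dichotomy on the position of $e_\alpha(L)$ relative to $q_L(\alpha)$---is an artifact of your approach, not of the result.

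Your argument is more elementary in flavour (only the defining equation of $e_\alpha$ and the dual form of $ES$), but your Case~2 silently uses that the maximiser $\beta^\ast$ in \eqref{eq:SRES01} satisfies $\beta^\ast\le P[L\le e_\alpha(L)]$. That is precisely the content of Proposition~\ref{prop:OGALA}, stated \emph{after} the present proposition. There is no circularity---the proof of Proposition~\ref{prop:OGALA} does not invoke Proposition~\ref{prop:bound}---but you should either supply this localisation of $\beta^\ast$ inline or make the forward reference explicit. Your density-mixing proof of the final inequality is a clean self-contained alternative to the paper's loss-function comparison and is worth keeping.
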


\begin{proof}
    Let $1/2\leq \alpha<1$ be given.
    On the one hand, from the proof of Propositions \ref{prop:SRES} and \ref{prop:OCE}, $e_{\alpha}(L)=\sup_{ab \leq \gamma \leq 1}R_{ab/\gamma , \gamma}(L)$ for $a = 1/2\alpha$ and $b=2(1-\alpha)$.
    It implies that $e_{\alpha}(L)\geq R_{ab/\gamma,\gamma}(L)$ and therefore from \eqref{eq:OCE04} it follows that $e_{\alpha}(L)\geq (1-\gamma)ES_{\lambda(ab/\gamma,\gamma)}(L)$ for every $ab\leq \gamma\leq 1$.
    Solving $\gamma$ for $\lambda(ab/\gamma,\gamma)=\beta$ yields the left hand inequality.
    On the other hand, we have $\ell_{ab/\gamma,\gamma} \leq \ell_{ab,ab}\leq \ell_{ab,0}$, showing together with $E[L]=0$ that $e_{\alpha}(L)\leq R_{ab,ab}(L)\leq R_{ab,0}(L)$.
    Since $ab = (1-\alpha)/\alpha$ and $\lambda(ab,ab)=\alpha$, as a result of \eqref{eq:OCE04} the right hand side inequalities also hold.
\end{proof}

If we set $\beta=\alpha$, the lower bound corresponds to the one stated in \citep[Proposition 9]{bellini2014}, that is
\begin{equation}\label{eq:bound}
    \left( 1-\frac{1}{2\alpha} \right)ES_{\alpha}(L)\leq e_{\alpha}(L).
\end{equation}
As for the lower bound, from \eqref{eq:SRES01}, it is immediate that there exists $\beta^\ast$ satisfying the equality in the above proposition.
When $F_L$ is continuous, from \citep[Equation $7$]{tylor2008} we get an optimal $\beta^\ast=P[L\leq e_\alpha(L)]$. 
We generalized this result for any distribution and formulate expectile as a convex combination of expected shortfalls.

\begin{proposition}\label{prop:OGALA}
    Let $L$ be in $L^1$ not identically constant,\footnote{If $L$ is identically constant, then $\beta^\ast$ is any number in $(0,1)$.} it holds that 
    \begin{equation*}
        e_{\alpha}(L)
        =\left(1-\frac{1-\alpha}{\alpha +(1-2\alpha)\beta^\ast}\right)ES_{\beta^\ast}(L)+\frac{1-\alpha}{\alpha + (1- 2\alpha)\beta^\ast}E[L],
    \end{equation*}
    where $\beta^\ast \in \left[P\left[ L< e_{\alpha}(L) \right],P\left[ L\leq  e_{\alpha}(L) \right]\right]$.
\end{proposition}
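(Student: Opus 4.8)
The plan is to read off the identity from the representation \eqref{eq:SRES01} after reparametrising the maximisation by $\beta$, and then to locate the maximiser precisely. Throughout write $e:=e_\alpha(L)$ and, for $\beta\in(0,1)$, set $D(\beta):=\alpha+(1-2\alpha)\beta$ and $\gamma(\beta):=(1-\alpha)/D(\beta)$. First I would record the elementary algebra that $\gamma\mapsto\beta=((1+\gamma)\alpha-1)/((2\alpha-1)\gamma)$ is a strictly increasing bijection of $(\tfrac{1-\alpha}{\alpha},1)$ onto $(0,1)$ whose inverse is $\beta\mapsto\gamma(\beta)$, and that $1-\gamma(\beta)=(2\alpha-1)(1-\beta)/D(\beta)$. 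Setting $\phi(\beta):=(1-\gamma(\beta))ES_\beta(L)+\gamma(\beta)E[L]$, relation \eqref{eq:SRES01} then becomes $e=\sup_{0<\beta<1}\phi(\beta)$, and the asserted formula is nothing but $\phi(\beta^\ast)=e$. Thus it suffices to prove that the supremum is attained at \emph{every} $\beta^\ast\in[P[L<e],P[L\le e]]$. The case $\alpha=1/2$ is degenerate ($\gamma\equiv1$, $\phi\equiv E[L]=e$) and handled at once, so I assume $\alpha>1/2$, in which case $e>E[L]$ and $0<P[L<e]\le P[L\le e]<1$ for non-constant $L$, so the target interval is a nonempty subinterval of $(0,1)$.

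Next I would reduce the problem to a one-dimensional concave maximisation. Using $(1-\beta)ES_\beta(L)=\int_\beta^1 q_L(u)\,du=:H(\beta)$ one rewrites $\phi(\beta)=\big((2\alpha-1)H(\beta)+(1-\alpha)E[L]\big)/D(\beta)=e+G(\beta)/D(\beta)$, where $G(\beta):=(2\alpha-1)H(\beta)+(1-\alpha)E[L]-e\,D(\beta)$ and $D(\beta)>0$ on $(0,1)$. Since $H'=-q_L$ with $q_L$ nondecreasing, $G$ is concave, with one-sided derivatives $G'_-(\beta)=(2\alpha-1)(e-q_L(\beta))$ and $G'_+(\beta)=(2\alpha-1)(e-q^+_L(\beta))$. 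Because $D>0$, we have $\phi(\beta)=e\iff G(\beta)=0$ and $\phi(\beta)\le e\iff G(\beta)\le0$.

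The proof is then finished by two observations. For attainment, the cleanest route uses the optimal density of Proposition \ref{prop:SRES}: since $E[L\,1_{\{L>e\}}]=\int_{P[L\le e]}^1 q_L(u)\,du=H(P[L\le e])$ by the Galois relation $\{u:q_L(u)>e\}=(P[L\le e],1]$, the numerator of $E^{Q^\ast}[L]$ equals $(1-\alpha)E[L]+(2\alpha-1)H(P[L\le e])$, whence $e=E^{Q^\ast}[L]=\phi(P[L\le e])$; equivalently one may note $\phi\to E[L]<e$ as $\beta\to0^+$ and $\beta\to1^-$ and invoke continuity. Either way $\sup\phi=e$ is attained, so $\max G=0$ and the maximiser set of $\phi$ equals $\{G=0\}=\arg\max G$. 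For the localisation, concavity gives $\arg\max G=\{\beta:G'_-(\beta)\ge0\ge G'_+(\beta)\}=\{\beta:q_L(\beta)\le e\le q^+_L(\beta)\}$, and the Galois equivalences $q_L(\beta)\le e\iff\beta\le P[L\le e]$ and $q^+_L(\beta)\ge e\iff\beta\ge P[L<e]$ identify this set as exactly $[P[L<e],P[L\le e]]$. Hence $\phi(\beta^\ast)=e$ for every $\beta^\ast$ in this interval, and substituting $1-\gamma(\beta^\ast)$ and $\gamma(\beta^\ast)$ yields the claimed convex combination.

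The step I expect to be the real obstacle is this last localisation, precisely because $q_L$ need not be continuous: the maximiser of $\phi$ generally does not solve a first-order equation $q_L(\beta)=e$. An atom of $L$ at $e$ creates a genuine plateau on $[P[L<e],P[L\le e]]$, whereas a gap of the support straddling $e$ makes $q_L$ jump over the value $e$, so that \emph{no} $\beta$ satisfies $q_L(\beta)=e$ at all, and a naive first-order computation then misses the optimiser. Replacing derivatives by the one-sided derivatives of the concave function $G$ and reading off the argmax through the Galois inequalities is exactly what makes the argument uniform over atoms, gaps and continuous distributions, and is where care is needed.
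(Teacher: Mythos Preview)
Your proof is correct but takes a genuinely different route from the paper. The paper's argument is purely algebraic and very short: for any $\beta^\ast\in[P[L<e_\alpha],P[L\le e_\alpha]]$ one has $e_\alpha\in[q_L(\beta^\ast),q^+_L(\beta^\ast)]$, so the Acerbi--Tasche formula gives $ES_{\beta^\ast}(L)=e_\alpha+E[(L-e_\alpha)^+]/(1-\beta^\ast)$; combining this with the first-order condition $e_\alpha-E[L]=(2\alpha-1)E[(L-e_\alpha)^+]/(1-\alpha)$ and eliminating $E[(L-e_\alpha)^+]$ yields the identity directly. Your route instead identifies the formula as the attained maximum in the Kusuoka-type representation \eqref{eq:SRES01}: after the change of variable $\gamma\leftrightarrow\beta$ you reduce to maximising a concave function $G$ and locate its argmax via one-sided derivatives and the Galois correspondence for quantiles. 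The paper's approach is shorter and more elementary. Yours is more structural: it makes explicit that Proposition~\ref{prop:OGALA} is exactly the saturated case of the lower bound in Proposition~\ref{prop:bound}, and it shows in one stroke that $[P[L<e_\alpha],P[L\le e_\alpha]]$ is not merely a sufficient range for $\beta^\ast$ but the \emph{exact} set of $\beta$ for which the equality holds---a point the paper's proof does not address.
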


\begin{proof}
    First, let us show that if $L$ is not identically constant, then every $\beta^\ast$ in $[P[ L< e_\alpha(L)],P[ L\leq  e_\alpha(L)]]$ is strictly between $0$ and $1$.
    Since $E[L]\leq e_\alpha(L)$ and $0<P[L<E[L]]<1$, it hold that  $0<\beta^\ast\leq 1$.
    If  $\beta^\ast=1$, then $P[ L>  e_\alpha(L)]]=0$ and hence $E[(L-e_\alpha(L))^+]=0$.
    The first order condition can be written as 
    \begin{equation}\label{eq:foc}
        e_\alpha(L)-E[L]=\frac{(2\alpha-1)E[(L-e_\alpha(L))^+]}{1-\alpha}.
    \end{equation}
    It follows that $e_\alpha(L)=E[L]$ which contradict the fact that $0<P[L\leq E[L]]<1$.
    Hence, $\beta^\ast$ must be in $(0,1)$.
    By the definition of $\beta^\ast$, it holds that $e_\alpha(L)$ is in $[q_L(\beta^\ast),q^+_L(\beta^\ast)]$.
    As a result of \citep[Proposition 4.2]{acerbi2002b}, it holds that 
    \begin{equation*}
        ES_{\beta^\ast}(L)=e_\alpha(L)+\frac{E[(L-e_\alpha(L))^+]}{1-\beta^\ast}.
    \end{equation*}
    Together with Relation \eqref{eq:foc}, this gives 
    \begin{equation*}
        ES_{\beta^\ast}(L)=e_\alpha(L)+\frac{1-\alpha}{(2\alpha-1)(1-\beta^\ast)}(e_\alpha(L)-E[L]).
    \end{equation*}
    Solving for $e_\alpha(L)$ gives the required expressions of $e_\alpha$.
\end{proof}

From the proof of Proposition \ref{prop:OGALA}, it is easy to see that  the inequality \eqref{eq:bound} become equality, that is, the optimal $\beta^\ast=\alpha$ if and only if $e_\alpha$ is the $\alpha$-quantile.
When $F_L$ is strictly increasing and continuous, it holds that $e_\alpha(L) = q_L(F_L(e_\alpha))$.
Hence, in this special case $\beta^\ast=\alpha$ if and only if the expectile is a value at risk at confidence level $\alpha$.
This is the case for instance when $q_L(\alpha)=(2\alpha-1)/\sqrt{\alpha (1-\alpha)}$, see \citet{koenker1993}.

\begin{remark}
    If $F_L$ is strictly increasing and continuous, then $\beta^\ast$ uniquely solves
    \begin{equation}\label{eq:opt}
        q_L(\beta^\ast)=\left(1-\frac{1-\alpha}{\alpha + (1-2\alpha)\beta^\ast}\right)ES_{\beta^\ast}(L)+\frac{1-\alpha}{\alpha+ (1-2\alpha)\beta^\ast}E[L].
    \end{equation} 
\end{remark}

Let $L_1,\dots,L_d$ be in $L^1$ such that $L=\sum_{k=1}^d L_k$. 
For any risk measure $R$, the Euler risk contribution of position $L_k$ to the risk capital $R(L)$ is defined as 
\begin{equation}\label{eq:euler}
    R(L_k|L):=\lim_{\varepsilon \to 0}\frac{R(L+\varepsilon L_k )-R(L)}{\varepsilon},
\end{equation}
provided that the limit exist for each $k=1,\dots d$.
It is well known, see for instance \citep{kalkbrener2005, tasche2008}, that if $R=ES_\alpha$, then
\begin{equation*}
    ES_\alpha(L_k|L)= ES_\alpha(L_k|L>q_L(\alpha))
\end{equation*}  
provided that the limit defined in \eqref{eq:euler} exists. 
Similarly, for the case where $R=e_\alpha$ such that the limit defined in \eqref{eq:euler} exist, according to \citep{emmer2015} we also get   
\begin{equation*}
    e_\alpha(L_k|L)=\frac{\alpha E[L_k 1_{\{L>e_\alpha(L)\}}]+(1-\alpha)E[L_k 1_{\{L\leq e_\alpha(L)\}}]}{\alpha +(1-2\alpha)P[L \leq e_\alpha(L)]}.
\end{equation*}
Hence, in the same sprit of Proposition \ref{prop:OGALA}, the Euler risk contributions of position $L_k$ to the expectile risk capital can also be formulated as a function of its Euler risk contribution to the expected shortfall risk capital.

\begin{proposition}\label{prop:euler}
    Let $L_1,\dots,L_d$ be in $L^1$ such that the limit defined in \eqref{eq:euler} exist for both $ES_{\beta^\ast}$ and $e_\alpha$, where $\beta^\ast=P[L\leq e_\alpha(L)]$.
    Then the Euler risk contribution of position $L_k$ to the risk capital $e_\alpha(L)$ is given by 
    \begin{equation*}
        e_\alpha(L_k|L)=\left( 1-\frac{1-\alpha}{\alpha+ (1-2\alpha)\beta^\ast} \right)ES_{\beta^\ast}(L_k|L)+\frac{1-\alpha}{\alpha +(1-2\alpha)\beta^\ast}E[L_k].
    \end{equation*}
\end{proposition}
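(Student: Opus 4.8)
The plan is to verify the identity by direct substitution of the two Euler allocation formulas recalled immediately before the statement, and then to reduce everything to the same arithmetic identity that already drives the proof of Proposition \ref{prop:OGALA}, now carried out on signed masses rather than on $ES$ and $E[L]$ themselves. To fix notation, set $A=\{L>e_\alpha(L)\}$, $B=\{L\leq e_\alpha(L)\}$, and $D=\alpha+(1-2\alpha)\beta^\ast$. By \citep{emmer2015} the expectile contribution is
\[
    e_\alpha(L_k|L)=\frac{\alpha E[L_k 1_A]+(1-\alpha)E[L_k 1_B]}{D},
\]
and, since $\beta^\ast=P[L\leq e_\alpha(L)]=F_L(e_\alpha(L))$ forces $e_\alpha(L)\in[q_L(\beta^\ast),q^+_L(\beta^\ast)]$ exactly as in Proposition \ref{prop:OGALA}, the expected shortfall contribution takes the form
\[
    ES_{\beta^\ast}(L_k|L)=E[L_k\mid L>e_\alpha(L)]=\frac{1}{1-\beta^\ast}E[L_k 1_A].
\]

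First I would record the arithmetic identity that already appears implicitly in Proposition \ref{prop:OGALA}, namely
\[
    1-\frac{1-\alpha}{D}=\frac{D-(1-\alpha)}{D}=\frac{(2\alpha-1)(1-\beta^\ast)}{D},
\]
using $D-(1-\alpha)=(2\alpha-1)-(2\alpha-1)\beta^\ast=(2\alpha-1)(1-\beta^\ast)$. Multiplying the expected shortfall contribution by this coefficient cancels the factor $1/(1-\beta^\ast)$ and leaves $(2\alpha-1)E[L_k 1_A]/D$. I would then add the remaining term $\frac{1-\alpha}{D}E[L_k]=\frac{1-\alpha}{D}\bigl(E[L_k 1_A]+E[L_k 1_B]\bigr)$ and collect coefficients: the $E[L_k 1_A]$ terms give $(2\alpha-1)+(1-\alpha)=\alpha$, while $E[L_k 1_B]$ retains the factor $(1-\alpha)$. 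The claimed right-hand side therefore collapses to $\bigl(\alpha E[L_k 1_A]+(1-\alpha)E[L_k 1_B]\bigr)/D$, which is precisely $e_\alpha(L_k|L)$, completing the verification.

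The main obstacle is not the algebra but justifying the clean form of the expected shortfall contribution when $F_L$ carries an atom at $e_\alpha(L)$: there the event $\{L>q_L(\beta^\ast)\}$ may differ from $A$ on $\{L=e_\alpha(L)\}$, and the conditional expectation need not reduce to $E[L_k 1_A]/(1-\beta^\ast)$. I would resolve this by observing that the standing hypothesis, that the Euler limit \eqref{eq:euler} exists for $ES_{\beta^\ast}$, already forces $F_L$ to be regular at the level $q_L(\beta^\ast)$, so that perturbing $L$ does not shift probability mass discontinuously across that level; in particular $P[L=e_\alpha(L)]=0$ and the two set descriptions coincide up to a null set. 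I would state this regularity consequence explicitly before the substitution, after which the remaining computation is the same linear bookkeeping as in Proposition \ref{prop:OGALA}.
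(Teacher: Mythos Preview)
Your proof is correct, and in fact the paper gives no proof of this proposition at all: it simply states the two allocation formulas for $e_\alpha$ and $ES_{\beta^\ast}$ immediately before, and leaves the identity to the reader. Your direct substitution of those two formulas followed by the same coefficient reduction as in Proposition \ref{prop:OGALA} is exactly the computation the paper implicitly has in mind, so the approaches coincide.

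Your treatment of the atom at $e_\alpha(L)$ is actually more careful than the paper's. The paper writes $ES_\alpha(L_k\mid L)=E[L_k\mid L>q_L(\alpha)]$ without comment, which already presumes the regularity you spell out; your observation that the existence of the Euler derivative for $ES_{\beta^\ast}$ rules out mass at the threshold is a legitimate way to close that gap, and the paper does not address it.
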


We now turn to the question of the upper bound.
If $(\Omega,\mathcal{F},P)$ is non-atomic, from \citep{delbaen2013} and \citep{ziegel2016}, we get an other upper bound of expectile 
    \begin{equation*}
        R_{\varphi}(L):=\int_0^1 \varphi'(t)q_L(1-t)dt
    \end{equation*}
which is a distortion function corresponding to the concave distortion function $\varphi\colon [0,1]\to [0,1]$, given by $\varphi(t)=\alpha t/((2\alpha-1)t+1-\alpha)$.
Furthermore, $R_{\varphi}$ is the least one from the class of law-invariant coherent and comonotonic risk measure dominating $e_\alpha$.
It also holds that $e_\alpha(1_A)=R_{\varphi}(1_A)=\varphi(P[A])$ for each $A\in \mathcal{F}$. 
Since the upper bound given in Proposition \ref{prop:bound} is also coherent and comonotone, it follows in particular that 
\begin{equation*}
    R_{\varphi}(L)\leq \left(1-\frac{1-\alpha}{\alpha}\right)ES_\alpha(L)+\frac{1-\alpha}{\alpha}E[L].
\end{equation*}
However, the upper bound $(1-(1-\alpha)/\alpha)ES_\alpha(L)+(1-\alpha)E[L]/\alpha $ is the least one within the class of expected shortfall in the sense stated in the following proposition.
\begin{proposition}\label{prop:minimal}
    Suppose $(\Omega,\mathcal{F},P)$ be non-atomic. Then
    \begin{equation*}
        \left( 1-\frac{1-\alpha}{\alpha} \right)ES_{\alpha}(L)+\frac{1-\alpha}{\alpha} E[L]
    \end{equation*}
    is the smallest risk measure of the form $(1-\lambda)ES_\beta(L)+\lambda ES_\delta (L)$ with $0\leq \lambda\leq 1$, $0\leq \beta <1$ and $0\leq \delta < 1$ uniformly dominating $e_\alpha(L)$ for $L$ in $L^1$.
\end{proposition}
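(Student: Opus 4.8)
The plan is to linearise the whole problem by passing to the spectral (distortion) representation and reading the statement off as a comparison of distortion functions. Every risk measure $G=(1-\lambda)ES_\beta(L)+\lambda ES_\delta(L)$ of the prescribed form is comonotone, coherent and law invariant, and admits the spectral representation $G(L)=\int_0^1 q_L(u)\phi_G(u)\,du$ with $\phi_G=(1-\lambda)\phi_\beta+\lambda\phi_\delta$, where $\phi_\gamma(u)=\mathbf 1_{\{u\ge\gamma\}}/(1-\gamma)$ is the spectrum of $ES_\gamma$. On a non-atomic space $q_L$ ranges over all non-decreasing integrable functions, so, writing $\Phi_G(t):=\int_t^1\phi_G(u)\,du$, integrating by parts gives for two such measures $G_1(L)-G_2(L)=\int_0^1(\Phi_{G_1}(t)-\Phi_{G_2}(t))\,dq_L(t)$; hence $G_1\ge G_2$ uniformly on $L^1$ if and only if $\Phi_{G_1}\ge\Phi_{G_2}$ pointwise on $[0,1]$. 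Since $R_\varphi$ is the least comonotone, coherent, law-invariant risk measure dominating $e_\alpha$, a family member $G$ dominates $e_\alpha$ uniformly if and only if it dominates $R_\varphi$, i.e. if and only if $\Phi_G\ge\Phi_\varphi$, where a change of variable gives $\Phi_\varphi(t)=\varphi(1-t)$.

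A direct computation gives $\Phi_{ES_\gamma}(t)=\min\{1,(1-t)/(1-\gamma)\}$, so the candidate $\Psi^\ast:=(1-\tfrac{1-\alpha}\alpha)ES_\alpha+\tfrac{1-\alpha}\alpha E$ --- which lies in the family (levels $\alpha$ and $0$) and dominates $e_\alpha$ by Proposition \ref{prop:bound} --- has distortion $\Phi_{\Psi^\ast}(t)=\min\{\,1-\tfrac{1-\alpha}\alpha t,\ \tfrac\alpha{1-\alpha}(1-t)\,\}$, with kink at $t=\alpha$ where it takes the value $\alpha$. Now $\Phi_\varphi$ is strictly concave with $\Phi_\varphi(0)=1$, $\Phi_\varphi(1)=0$, $\Phi_\varphi'(0^+)=-\varphi'(1)=-\tfrac{1-\alpha}\alpha$ and $\Phi_\varphi'(1^-)=-\varphi'(0)=-\tfrac\alpha{1-\alpha}$, so the two affine pieces of $\Phi_{\Psi^\ast}$ are exactly the tangents of $\Phi_\varphi$ at the endpoints $t=0$ and $t=1$; in particular $\Phi_\varphi\le\Phi_{\Psi^\ast}$ with equality only at $t\in\{0,1\}$. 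Because the dominating family members are not totally ordered (one checks that there exist dominating combinations, e.g. mixtures of two expected shortfalls at interior levels, whose distortions cross $\Phi_{\Psi^\ast}$), ``smallest'' has to be read as minimal, and it suffices to prove: if $G$ is of the prescribed form with $\Phi_\varphi\le\Phi_G\le\Phi_{\Psi^\ast}$ on $[0,1]$, then $G=\Psi^\ast$.

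The core step is a rigidity argument at the endpoints. At $t=0$ the sandwich $\Phi_\varphi(0)=\Phi_G(0)=\Phi_{\Psi^\ast}(0)=1$ together with $\Phi_\varphi\le\Phi_G\le\Phi_{\Psi^\ast}$ and the common one-sided slope of $\Phi_\varphi$ and $\Phi_{\Psi^\ast}$ there forces $\Phi_G'(0^+)=-\tfrac{1-\alpha}\alpha$. On the other hand $\Phi_G'(0^+)$ equals $0$ if both levels $\beta,\delta$ are positive, $-1$ if both vanish, and $-(\text{the weight carried by }ES_0)$ if exactly one level is $0$; since $-\tfrac{1-\alpha}\alpha\in(-1,0)$, this forces one of the two expected shortfalls to be the mean $ES_0=E$, carrying weight $\tfrac{1-\alpha}\alpha$. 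But then $\Phi_G$ has a single kink (at the remaining level) and is piecewise affine with only two pieces; the symmetric argument at $t=1$ forces the second slope to be $-\tfrac\alpha{1-\alpha}$. A concave two-piece profile with $\Phi_G(0)=1$, $\Phi_G(1)=0$ and slopes $-\tfrac{1-\alpha}\alpha,\,-\tfrac\alpha{1-\alpha}$ necessarily has its pieces lying on the two tangent lines above, hence kink at $t=\alpha$, so $\Phi_G=\Phi_{\Psi^\ast}$ and $G=\Psi^\ast$. I expect the rigidity to be the main obstacle: the inequality $\Phi_G\ge\Phi_\varphi$ alone is far too weak, and what makes the argument close is the tangency of $\Psi^\ast$ to $\Phi_\varphi$ at both endpoints, which pins the weight on the mean, kills one kink, and thereby collapses the admissible distortion to exactly two affine pieces.
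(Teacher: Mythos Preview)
Your proof is correct and, in one important respect, more careful than the paper's own argument. Both you and the paper pass to the distortion representation (your $\Phi_G(t)$ is the paper's $\varphi_G(1-t)$), and both identify $\Psi^\ast$ with the pointwise minimum of the two endpoint tangent lines to the strictly concave curve $\Phi_\varphi$. The paper, however, goes on to assert the stronger claim that \emph{every} dominating family member $G$ satisfies $\varphi_{\lambda,\beta,\delta}\ge\varphi_{(1-\alpha)/\alpha,\alpha,0}$ pointwise, which would make $\Psi^\ast$ a genuine minimum; this is argued only by an informal tangency picture and is in fact false. For $\alpha=0.9$, the combination $G=\tfrac12 ES_{0.95}+\tfrac12 ES_{0.5}$ has $\varphi_G(t)=\tfrac12\min(20t,1)+\tfrac12\min(2t,1)$; one checks $\varphi_G\ge\varphi$ on $[0,1]$ (with equality at $t=1/4$), so $G$ dominates $e_\alpha$, yet $\varphi_G(0.1)=0.6<0.9=\varphi_{(1-\alpha)/\alpha,\alpha,0}(0.1)$.

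You anticipate exactly this by observing that the dominating family is not totally ordered and that ``smallest'' must be read as ``minimal'', and you then prove minimality by a clean endpoint-rigidity argument: the sandwich $\Phi_\varphi\le\Phi_G\le\Phi_{\Psi^\ast}$ pins the one-sided derivative of $\Phi_G$ at $t=0$, the case analysis of $\Phi_G'(0^+)\in\{0,-\lambda,-(1-\lambda),-1\}$ forces one component to be $ES_0$ with weight $(1-\alpha)/\alpha$, and the resulting two-piece profile is then fixed by the symmetric constraint at $t=1$. This is a genuinely different --- and sound --- route to the conclusion, and it repairs the gap in the paper's geometric assertion.
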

\begin{proof}
    Note that 
    \begin{equation*}
        (1-\lambda)ES_\beta(L)+\lambda ES_\delta(L)=R_{\varphi_{\lambda, \beta,\delta}}(L):= \int_0^1 \varphi_{\lambda,\beta,\delta}'(t)q_L(1-t)dt
    \end{equation*}
    for the concave distortion function 
    \begin{equation*}
        \varphi_{\lambda, \beta, \delta}(t) := (1-\lambda)\left(\frac{t}{1-\beta}\wedge 1\right) + \lambda \left( \frac{t}{1-\delta} \wedge 1\right) \quad \text{for }0\leq t\leq 1,
    \end{equation*}
    which is continuous and strictly increasing with $\varphi_{\lambda,\beta,\delta}(0)=0$ and $\varphi_{\lambda,\beta,\delta}(1)=1$.
    For $\beta=\alpha$, $\lambda=(1-\alpha)/\alpha$ and $\delta=0$, we have 
    \begin{equation*}
        R_{\varphi_{\lambda, \beta,\delta}}(L)=\left( 1-\frac{1-\alpha}{\alpha} \right)ES_{\alpha}(L)+\frac{1-\alpha}{\alpha} E[L].
    \end{equation*}

    On the one hand, let $\varphi(t^\ast)>\varphi_{\lambda, \beta,\delta}(t^\ast)$ for some $t^\ast$ in $(0,1)$. 
    Since $(\Omega,\mathcal{F})$ is non-atomic, there exist $A\in \mathcal{F}$ such that $P[A]=t^\ast$. 
    Following \citep{foellmer2016} and \citep{shapiro2013}, we get $R_{\varphi_{\lambda,\beta,\delta}}(1_A)<\varphi(P[A])=e_\alpha(1_A)$ and therefore $R_{\varphi_{\lambda,\beta,\delta}}$ can not dominate $e_\alpha$.
    Hence, for every $0 \leq \lambda \leq 1$, $0\leq  \beta < 1$ and $0\leq \delta < 1$, we have $R_{\varphi_{\lambda,\beta,\delta}}$ dominate $e_\alpha$ only if $\varphi \leq \varphi_{\lambda,\beta,\delta}$.

    On the other hand, for every $0 \leq \lambda \leq 1$, $0\leq  \beta < 1$ and $0\leq  \delta < 1$  such that $\varphi \leq \varphi_{\lambda,\beta,\delta}$, it holds 
    \begin{equation*}
        R_{\varphi_{\lambda,\beta,\delta}}(L)\geq e_\alpha(L).
    \end{equation*}
    In this case, $\varphi_{(1-\alpha)/\alpha, \alpha,0} \leq \varphi_{\lambda, \beta,\delta}$.
    Indeed, since $\varphi_{(1-\alpha)/\alpha, \alpha,0}$ is tangent to $\varphi$ at the point $(0,0)$ and $(1,1)$, and $\varphi$ is strictly concave, $\varphi_{\lambda, \beta,\delta}(t)<\varphi_{(1-\alpha)/\alpha, \alpha,0}(t)$ for some $t$ in $(0,1)$ implies there exist $t^\ast$ in $(0,1)$ such that $\varphi(t^\ast)>\varphi_{\lambda, \beta,\delta}(t^\ast)$ for some $t^\ast$ in $(0,1)$, see Figure \ref{fig:primitive}.
    By a similar argument, it follows that $ R_{\varphi_{\lambda,\beta,\delta}}$ can not dominate $e_\alpha$. 
    Therefore, $R_{\varphi_{\lambda,\beta,\delta}}$ is the better one when $\lambda=(1-\alpha)/\alpha$, $\beta=\alpha$ and $\delta=0$.
\end{proof}
\begin{figure}[H]
    \centering
    \includegraphics[width=0.8\textwidth]{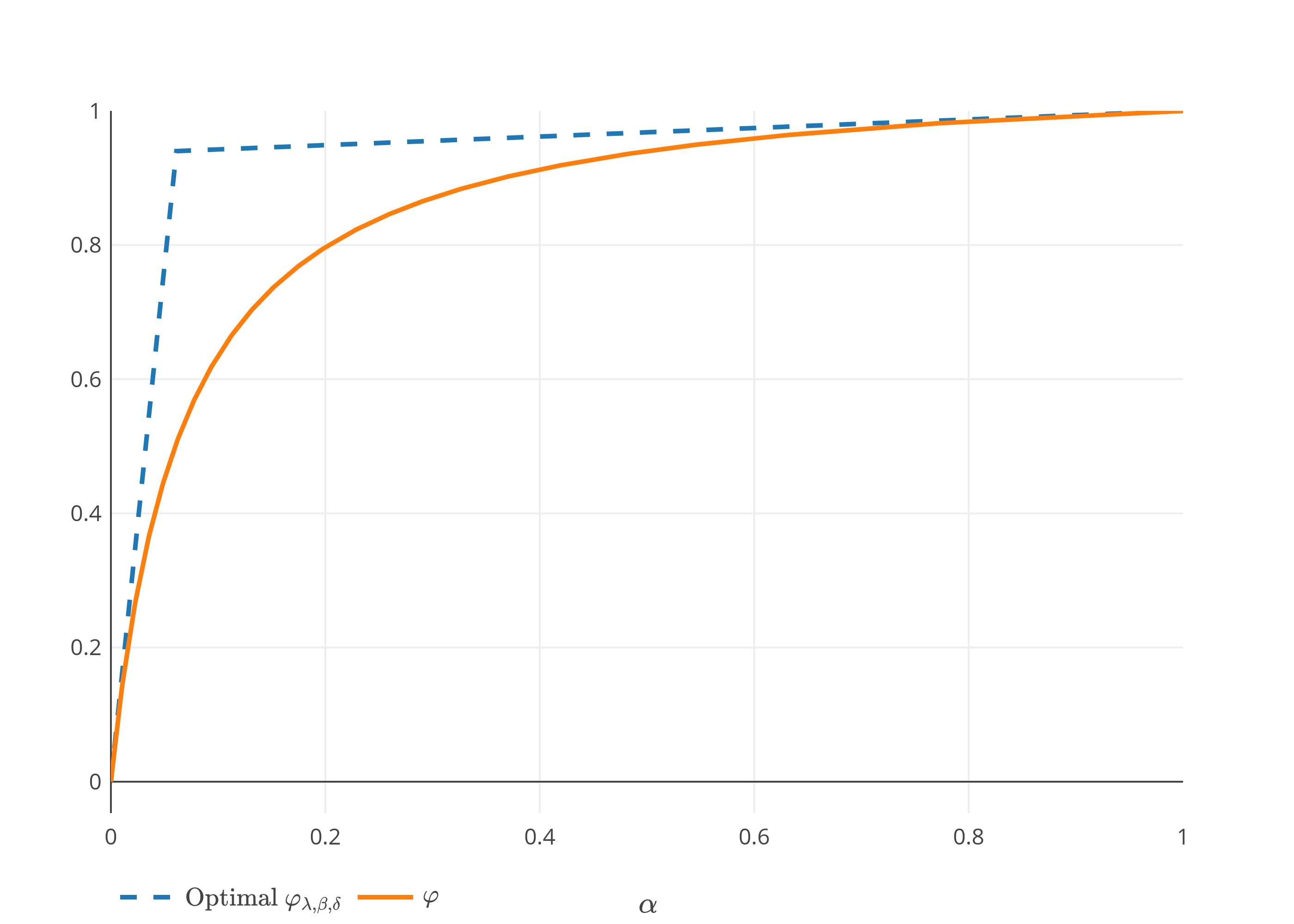}
    \caption{Graph of $\varphi$ and optimal $\varphi_{\lambda,\beta,\delta}$ for $\alpha=94\%$.}
    \label{fig:primitive}
\end{figure}

\begin{remark}
    If  $(\Omega,\mathcal{F},P)$ contains atoms, Proposition \ref{prop:minimal} may not be true in general.
    For instance, when $\Omega=\{\omega_1,\omega_2\}$ with $P[\omega_1]=P[\omega_2]=0.5$, $\alpha=9/10$, $\lambda=\delta=0$ and  $\beta=4/9$, we get $\varphi(t)=9t/(8t+1)$,
    \begin{equation*}
        \varphi_{\lambda, \beta,\delta}(t)=\begin{cases}
            \frac{9t}{5} & \text{for }0\leq t\leq \frac{5}{9}\\
            1 & \text{for }\frac{5}{9}<t\leq 1           
        \end{cases} \quad \text{and} \quad \varphi_{(1-\alpha)/\alpha, \alpha,0}(t)=\begin{cases}
            9t & \text{for }0\leq t\leq \frac{1}{9}\\
            \frac{t+8}{9} & \text{for }\frac{1}{9}<t\leq 1
        \end{cases}.
    \end{equation*}
    Every $L$ in $L^1$ is of the form $L=x_1 1_{\{\omega_1\}}+x_2 1_{\{\omega_2\}}$ for some $x_1$ and $x_2$ in $\mathbb{R}$.
    Without loss of generality, assume $x_1\leq x_2$, a simple computation yields 
    $e_\alpha(L)=ES_\beta(L) =R_{\varphi}(L)=0.9(x_2-x_1)+x_1$ and $(1-(1-\alpha)/\alpha)ES_{\alpha}+(1-\alpha) E[L]/\alpha = 17(x_2-x_1)/18+x_1$.
    This implies that  $ES_\beta$ dominates $e_\alpha$, but it is dominated by $(1-(1-\alpha)/\alpha)ES_{\alpha}+(1-\alpha) E[L]/\alpha$.
    Hence, $(1-(1-\alpha)/\alpha)ES_{\alpha}+(1-\alpha) E[L]/\alpha$ is not the least one.
\end{remark}

\section{Expectile Versus Expected Shortfall: Asymptotic Comparison}\label{sec:04}
Throughout this section, we consider a loss profile $L$ with zero mean and for ease of notation its cumulative distribution is denoted by $F$.
For ease of notations, we also use 
\begin{equation*}
    e_{\alpha} := e_\alpha(L), \quad q_{\alpha} := q_L(\alpha) \quad  \text{and}\quad ES_{\alpha} := ES_{\alpha}(L).
\end{equation*}
Given $L_1, \ldots, L_n$, independent copies of $L$, we denote by $F_n$ the empirical cumulative distribution function of the empirical measure $(\sum_{k=1}^n \delta_{L_k})/n$.
Correspondingly, we use the notations $q_{\alpha,n}$, $ES_{\alpha, n}$ and $e_{\alpha,n}$ for the value at risk, expected shortfall and expectile of the empirical measure, respectively.
We also use the standard notation $f(\alpha)\sim g(\alpha)$ and $f(\alpha) = o(g(\alpha))$ as $\alpha$ goes to $1$ which means that $\lim_{\alpha\nearrow 1}f(\alpha)/g(\alpha)=1$ and $\lim_{\alpha \nearrow 1} f(\alpha)/g(\alpha) = 0$, respectively.
For $f(\alpha)\sim g(\alpha)$ we may equivalently write $f(\alpha)=g(\alpha)+o(g(\alpha))$. 

For a given risk level $\alpha$, expectile and value at risk are less conservative than expected shortfall, that is, $e_\alpha\leq ES_\alpha$ and $q_\alpha \leq ES_\alpha$.
Expectile can be less or more conservative than value at risk depending on the considered loss profile, see \citep{bellini2017}.
When $F$ is in the maximum domain of attraction of an extreme value distribution function, \citep{bellini2017} and \citep{mao2015} give asymptotic comparison between value at risk and expectile. 
For Fr\'{e}chet type $MDA(\Phi_\eta)$ with $\eta>1$, from \citep{bellini2017} we have the relation 
\begin{equation}\label{eq:expectileQ}
    e_\alpha \sim (\eta-1)^{\frac{-1}{\eta}}q_\alpha, \quad \alpha \nearrow 1.
\end{equation}
Using this asymptotic result, \citep{daouia2018} introduces the extreme expectile estimator 
\begin{equation*}
    \hat{e}_\alpha :=(\hat{\eta}-1)^{\frac{-1}{\hat{\eta}}}q_{\alpha,n}, 
\end{equation*}
where $\hat{\eta}$ is the Hill estimator of $\eta$.
Relation \eqref{eq:expectileQ} also allows for instance to provide estimation of tail index using the ratio of empirical expectile to empirical value at risk for large values of $\alpha$.
However, when looking at the tails, extreme quantile estimation may needs  extremely very large sample size compared to expected shortfall and expectile in the case of heavy tail distributions as $\alpha$ close to $1$.
This fact is explained in Proposition \ref{prop:concentration} and \ref{prop:samplesize} below and this motivates the study of asymptotic comparison of expectile with respect to expected shortfall instead of value at risk.
Therefore, in this section we are comparing the number of sample size required for the estimation of extreme quantile with respect to expected shortfall and expectile when the sample is taken from distribution $F$ that belongs to a Fr\'{e}chet type $MDA(\Phi_\eta)$.
We also consider the asymptotic behavior of expectile with respect to expected shortfall and the optimal $\beta^\ast$ when $F$ belongs to extreme value distributions as the confidence level $\alpha$ goes to $1$.
These propositions are based on concentration inequalities using Wasserstein distance.
Such an approach has recently been adopted by \citet{daniel2020} to provide error bounds for classes of empirical risk measures.
In our context we focus however on the sensitivity with respect to the confidence level $\alpha$ and make an approach which is slightly different using the results in \citet{fournier2015}.

The asymptotic comparison uses techniques from extreme value theory.
We say $F$ is in the maximum domain of attraction of an extreme value distribution function $H$, denoted by $MDA(H)$, if 
\begin{equation*}
    \lim_{n\nearrow \infty} F^n(c_n x+d_n)=H(x)
\end{equation*}
for some constants $c_n>0$ and $d_n\in \mathbb{R}, n\in \{1,2,\dots\}$.
It is well known that extreme value distribution $H$ belongs to either one of the following three categories:
Weibull\footnote{$\Psi_\eta(x)=\exp(-(-x)^\eta)$ for $x<0$.} ($\Psi_\eta$), Gumbel \footnote{$\Lambda(x)=\exp(-e^{-x})$ for $x \in \mathbb{R}$.} ($\Lambda$) or Fr\'{e}chet\footnote{$\Phi_\eta(x)=\exp(-x^{-\eta})$ for $x>0$.} ($\Phi_\eta$), where $\eta>0$, see \citep{neil2015,tang2012,mao2012,bellini2017} for more discussion in the present context.

Let $U(t):=q_{1-1/t}$ for $t>1$.
The condition that $F$ belongs to the maximum domain of attraction can be equivalently given by the extended regular variation of $U$.
Recall that a measurable function $f:\mathbb{R}_+\to \mathbb{R}$ is said to be of extended regular variation with parameter $\eta\in \mathbb{R}$, denoted by $f\in ERV_\eta$, if there exist a function $a:\mathbb{R}_+\to \mathbb{R}_+$ such that for each $x>0$, 
\begin{equation*}
    \lim_{t\nearrow \infty}\frac{f(tx)-f(t)}{a(t)}=\begin{cases}
        \frac{x^\eta-1}{\eta}, &  \eta\neq 0\\
        \ln{x}, & \eta=0
    \end{cases}.
\end{equation*}
It is known that $F$ is in the maximum domain of attractions of Fr\'{e}chet type $MDA(\Phi_\eta)$, with $\eta>0$  if and only if  $U\in ERV_{\frac{1}{\eta}}$. 
$F$ is in the maximum domain of attractions of Weibull type $MDA(\Psi_\eta)$, with $\eta>0$ if and only if $U\in ERV_{-\frac{1}{\eta}}$.
Finally, $F$ is in the maximum domain of attraction of Gumbel type $MDA(\Lambda)$ if and only if  $U\in ERV_{0}$, see \citep[Theorem 1.1.6]{dehaan2006} for instance.

The Wasserstein distance between the probability measures with cumulative distributions $F_n$ and $F$ is defined as 
\begin{equation*}
    w(F_n,F)= \inf\{E[|Z-Y|]\colon Z\sim F_n \quad \text{and} \quad Y\sim F\}.
\end{equation*}
We consider the following assumptions on $L$:
\begin{align}
    \mathcal{E}_{k,r}(L) &  < \infty \quad \text{for some}\quad  k>1, r>0  \quad \text{or} \label{eq:assumption01}\\
    \mathcal{E}_{k,r}(L) & < \infty \quad \text{for some}\quad  k\in (0,1),  r>0, \label{eq:assumption02}\\
    m_q(L) & <\infty  \quad \text{for some}\quad  q>2 \label{eq:assumption03}
\end{align}
where $m_q(L)= E[|L|^q]$ and $\mathcal{E}_{k,r}(L)=E[\exp(r|L|^k)]$.
A simple application of concentration results by \citet{fournier2015} yields the following concentration inequalities for expected shortfall and expectile.
\begin{proposition}\label{prop:concentration}
    Let $L_1,\dots,L_n$ be a random sample from $F$ such that either assumption \eqref{eq:assumption01}, \eqref{eq:assumption02} or \eqref{eq:assumption03} holds.
    If $1/2\leq \alpha<1$, for all $n\geq 1$ and $0<\varepsilon \leq \frac{\alpha}{1-\alpha}$, it holds 
    \begin{equation*}
        P\left[ \left| ES_{\alpha,n} - ES_{\alpha} \right|\geq \varepsilon \right]\leq B(n,\varepsilon(1-\alpha))
    \end{equation*}
    and 
    \begin{equation*}
        P\left[ \left| e_{\alpha,n} - e_{\alpha} \right|\geq \varepsilon \right]\leq B\left(n,\frac{\varepsilon (1-\alpha)}{\alpha}\right)
    \end{equation*}
    where
    \begin{equation*}
        B(n,h)=C
        \begin{cases}
            \displaystyle \exp\left(-cn h^2\right) & \text{under assumption \eqref{eq:assumption01}}\\
            \displaystyle \exp\left(-cn^{s}h^2\right) & \text{for all }0<s<k \text{ under assumption \eqref{eq:assumption02}}\\
            \displaystyle n^{1-s}h^{2(1-s)} & \text{for all }2<s<q \quad \text{ under assumption \eqref{eq:assumption03}}
        \end{cases}
    \end{equation*}
    where $C$ and $c$ are positive constant that depends only on $k$, $r$, $q$, $s$, $\mathcal{E}_{k,r}$ and $m_q$ depending on the set of assumptions.
    
    It follows that the sample sizes required for expected shortfall and expectile with a given precision $\varepsilon$ and confidence $\gamma$ is given by
    \begin{equation*}
        n_{ES_\alpha} = H(\gamma, \varepsilon(1-\alpha)) \quad \text{and} \quad n_{e_\alpha} = H\left(\gamma, \frac{\varepsilon (1-\alpha)}{\alpha}\right),
    \end{equation*}
    respectively, where
    \begin{equation*}
       H(\gamma,h)=
        \begin{cases}
            \frac{-\ln\left(\frac{\gamma}{C}\right)}{c} h^{-2} & \text{under assumption \eqref{eq:assumption01}}\\
            \left(\frac{-\ln\left(\frac{\gamma}{2C}\right)}{c}\right)^{\frac{1}{s}}h^{-\frac{2}{s}} & \text{for all } 0< s< k \text{ under assumption \eqref{eq:assumption02}}\\
            \left( \frac{C}{\gamma} \right)^{\frac{1}{s-1}} h^{-2} & \text{for all } 2<s<q\text{ under assumption \eqref{eq:assumption03}}
        \end{cases}
    \end{equation*}
\end{proposition}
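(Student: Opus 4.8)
The plan is to realize both $ES_\alpha$ and $e_\alpha$ as Lipschitz functionals of the underlying distribution with respect to the Wasserstein distance $w$, with Lipschitz constants that can be read off from the dual representations in Proposition \ref{prop:SRES}, and then to feed the resulting pathwise bounds $|ES_{\alpha,n}-ES_\alpha|\le \frac{1}{1-\alpha}w(F_n,F)$ and $|e_{\alpha,n}-e_\alpha|\le \frac{\alpha}{1-\alpha}w(F_n,F)$ into the concentration estimates for $w(F_n,F)$ due to \citet{fournier2015}. In this way the entire probabilistic content is concentrated in the Wasserstein deviation bound, while the dependence on the confidence level $\alpha$ is carried solely by the two Lipschitz constants.

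First I would prove the following Lipschitz lemma: if $\rho(L)=\sup\{E^Q[L]\colon Q\in\mathcal{Q}\}$ is a law-invariant coherent risk measure whose dual set satisfies $M:=\sup_{Q\in\mathcal{Q}}\|dQ/dP\|_\infty<\infty$, then $|\rho(L)-\rho(L')|\le M\,w(F_L,F_{L'})$. The argument is short: for fixed $Q$ one has $E^Q[L]-E^Q[L']=E[(dQ/dP)(L-L')]\le M\,E[|L-L'|]$, so $\rho(L)-\rho(L')\le M\,E[|L-L'|]$ and, by symmetry, $|\rho(L)-\rho(L')|\le M\,E[|L-L'|]$; since $\rho$ is law invariant this inequality holds for every coupling $(L,L')$ of $F_L$ and $F_{L'}$, and taking the infimum over couplings gives the claim. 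Applying this to the representation $ES_\alpha(L)=\max\{E^Q[L]\colon Q\in\mathcal{Q}_{1-\alpha,0}\}$ yields $M=1/(1-\alpha)$, and applying it to Relation \eqref{eq:SRES03}, where every admissible density is bounded by $\gamma\alpha/(1-\alpha)\le \alpha/(1-\alpha)$, yields $M=\alpha/(1-\alpha)$.

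With the two Lipschitz bounds in hand, the reduction is immediate: pathwise in the sample, $P[|ES_{\alpha,n}-ES_\alpha|\ge\varepsilon]\le P[w(F_n,F)\ge\varepsilon(1-\alpha)]$ and $P[|e_{\alpha,n}-e_\alpha|\ge\varepsilon]\le P[w(F_n,F)\ge\varepsilon(1-\alpha)/\alpha]$. Here the hypothesis $0<\varepsilon\le\alpha/(1-\alpha)$ is precisely what forces the arguments $h=\varepsilon(1-\alpha)$ and $h=\varepsilon(1-\alpha)/\alpha$ to lie in $(0,1]$, which is the regime in which the \citet{fournier2015} deviation bounds for $w(F_n,F)$ in dimension one take the three clean forms collected in $B(n,h)$, according to whether the exponential moment assumption \eqref{eq:assumption01}, the stretched-exponential assumption \eqref{eq:assumption02}, or the polynomial moment assumption \eqref{eq:assumption03} holds. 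Quoting their theorem specialized to $d=1$ and $p=1$ then closes both concentration estimates.

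Finally, the required sample sizes are obtained by inverting $B(n,h)=\gamma$ in $n$: in the sub-Gaussian case $C\exp(-cnh^2)=\gamma$ gives $n=-\ln(\gamma/C)\,h^{-2}/c$, in the polynomial case $Cn^{1-s}h^{2(1-s)}=\gamma$ gives $n=(C/\gamma)^{1/(s-1)}h^{-2}$, and the stretched-exponential case is analogous; substituting $h=\varepsilon(1-\alpha)$ and $h=\varepsilon(1-\alpha)/\alpha$ produces $n_{ES_\alpha}=H(\gamma,\varepsilon(1-\alpha))$ and $n_{e_\alpha}=H(\gamma,\varepsilon(1-\alpha)/\alpha)$. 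The main obstacle is the Lipschitz lemma, and more precisely pinning down the sharp constant $\alpha/(1-\alpha)$ for the expectile: this requires the dual description \eqref{eq:SRES03} with a uniform bound on the densities over the whole distribution-independent dual set, rather than the optimal density of Proposition \ref{prop:SRES}, which depends on $L$ through $\beta^\ast$. The remaining work — verifying that $\varepsilon\le\alpha/(1-\alpha)$ places us in the $h\le 1$ regime and matching the three cases of Fournier and Guillin's bound — is routine bookkeeping.
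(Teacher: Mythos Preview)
Your proposal is correct and follows the same overall architecture as the paper: establish Lipschitz bounds $|ES_{\alpha,n}-ES_\alpha|\le \frac{1}{1-\alpha}w(F_n,F)$ and $|e_{\alpha,n}-e_\alpha|\le \frac{\alpha}{1-\alpha}w(F_n,F)$, plug these into the Fournier--Guillin concentration bounds for $w(F_n,F)$, and invert in $n$. The one noteworthy difference is how the Lipschitz constants are obtained. The paper derives the expected-shortfall bound directly from the quantile-integral identity $w(F_n,F)=\int_0^1|q_{u,n}-q_u|\,du$ together with $ES_\alpha=\frac{1}{1-\alpha}\int_\alpha^1 q_u\,du$, and for the expectile simply cites \citet{bellini2014}. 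You instead prove a unified lemma: any law-invariant coherent risk measure with dual densities bounded by $M$ is $M$-Lipschitz in Wasserstein distance, and then read off $M=1/(1-\alpha)$ and $M=\alpha/(1-\alpha)$ from the dual sets in Proposition~\ref{prop:SRES}. Your route is more conceptual and transferable to other risk measures; the paper's is slightly more direct for $ES$ and offloads the expectile case to the literature. One small point you gloss over: the raw Fournier--Guillin bounds in cases \eqref{eq:assumption02} and \eqref{eq:assumption03} each consist of a sum of two terms, and the paper does a short but non-automatic step to show that one term dominates the other (for case \eqref{eq:assumption03} this involves maximizing $x\mapsto x^{s-1}\varepsilon^{2(s-1)}e^{-cx\varepsilon^2}$), which is what produces the single-term form of $B(n,h)$; your ``routine bookkeeping'' remark is fair, but this is where the constraint $h\le 1$ is actually used.
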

\begin{proof}
    According to \citep[Theorem $2$]{fournier2015}, for all $n\geq 1$ and $0<\varepsilon \leq 1$ it holds\footnote{For the third case, it is possible to have $0<s<q$. However the case where $0<s\leq 2$ is not the best choice in terms of bounds.} 
    \begin{multline}\label{eq:wasserstein}
        P[w(F_n, F)\geq\varepsilon]\\
        \leq C 
        \begin{cases}
            \exp(-cn\varepsilon^2) & \text{under assumption \eqref{eq:assumption01}}\\
            \exp(-cn\varepsilon^2)+\exp(-cn^s\varepsilon^{s}) & \text{for all }0<s<k\text{ under assumption \eqref{eq:assumption02}}\\
            \exp(-cn\varepsilon^2)+ n^{1-s}\varepsilon^{-s} & \text{for all }2<s<q \text{ under assumption \eqref{eq:assumption03}}
        \end{cases}
    \end{multline}
    Note that the constants $C$ and $c$ are positive and depends only on $r$, $k$, $s$ and $\mathcal{E}_{k,r}(L)$ for the case  \eqref{eq:assumption01},  on $r$, $k$ and $\mathcal{E}_{k,r}(L)$ for the case  \eqref{eq:assumption02} and on  $q$ and $m_q(L)$ for the case \eqref{eq:assumption03}, see \citep{fournier2015}.
    Let us treat the cases separately.
    \begin{itemize}[fullwidth]
        \item Under assumption \eqref{eq:assumption01}, if holds immediately that 
            \begin{equation*}
                P[w(F_n, F)\geq\varepsilon]\leq C \exp\left(-cn\varepsilon^2 \right) .
            \end{equation*}
        \item Under assumption \eqref{eq:assumption02}, since $0<s<k<1$, $n\geq 1$ and $\varepsilon\leq 1$, it follows that $n\varepsilon^2 \geq n^{s}\varepsilon^2$ and $n^s \varepsilon^s\geq n^s \varepsilon^2$.
            Hence
            \begin{equation*}
                P[w(F_n, F)\geq\varepsilon]\leq 2C \exp\left(-cn^{s}\varepsilon^2 \right) \quad \text{for all} \quad 0<s<k.
            \end{equation*}
        \item Under assumption \eqref{eq:assumption03}, for all $2<s<q$, since $\varepsilon \leq 1$ and $s-2>0$, it holds 
            \begin{align*}
                P[w(F_n, F)\geq\varepsilon] & \leq  C n^{1-s}\varepsilon^{2(1-s)}\left(\varepsilon^{s-2}+n^{s-1}\varepsilon^{2(s-1)}\exp\left(-cn \varepsilon^2\right)\right)\\
                                            & \leq Cn^{1-s}\varepsilon^{2(1-s)}\left( 1 + n^{s-1}\varepsilon^{2(s-1)}\exp\left(-cn \varepsilon^2\right)\right) 
            \end{align*}
            The function $x \mapsto x^{s-1}\varepsilon^{2(s-1)}\exp\left(-cx \varepsilon^2\right)$ is maximum at $x=(s-1)/(c\varepsilon^2)>0$.
            Plugging back this maximum, we get 
            \begin{equation*}
                P[w(F_n, F)\geq\varepsilon]\leq  C n^{1-s}\varepsilon^{2(1-s)}\left( 1+\frac{(s-1)^{s-1}}{c^{s-1}}e^{1-s} \right)
            \end{equation*}
            the right hand side being independent of $\varepsilon$ and $n$.
    \end{itemize}
    All together it follows that
    \begin{equation}\label{eq:wasserstein01}
        P[w(F_n, F)\geq\varepsilon]\leq B(n,\varepsilon).
    \end{equation}
    Since $w(F_n,F)=\int_0^1|q_{u,n}-q_u|du$, see \citep{major1978} for instance, for each $\alpha$ in $(0,1)$ the definition of expected shortfall gives 
    \begin{equation}\label{eq:wasserstein02}
        |ES_{\alpha,n}-ES_{\alpha}|=\frac{1}{1-\alpha}\left|\int_{\alpha}^1 (q_{u,n}-q_u)du\right| \leq \frac{1}{1-\alpha}w(F_n,F).
    \end{equation}
    For each $\alpha$ in $[1/2, 1)$ from \citep{bellini2014}, it also holds
    \begin{equation}\label{eq:wasserstein03}
        |e_{\alpha,n}-e_\alpha|\leq \frac{\alpha}{1-\alpha}w(F_n,F).
    \end{equation}
    The Relations \eqref{eq:wasserstein01} -\eqref{eq:wasserstein03} yields the required concentration inequalities.
    As for the sample size fix a confidence level $\gamma$ such that $B(n, \varepsilon (1-\alpha)) \leq \gamma$ for expected shortfall and $B\left(n, \frac{\varepsilon (1-\alpha)}{\alpha}\right) \leq \gamma$ for the expectile and solving for $n$ yields the required result.
\end{proof}

According to Proposition \ref{prop:concentration}, for a given precision $\varepsilon$ and confidence $\gamma$, it follows that $n_{ES_\alpha}$ and $n_{e_\alpha}$ tends to $\infty$.
Furthermore, under either of the assumptions \eqref{eq:assumption01}, \eqref{eq:assumption02} or \eqref{eq:assumption03}, it holds 
\begin{equation*}
    n_{ES_\alpha} \sim n_{e_\alpha} \quad \alpha \nearrow 1.
\end{equation*}
However, this is not  the case in general for value at risk versus expected shortfall and  value at risk versus expectile.
For instance, suppose $F$ has a strictly positive and continuous density function $f$ on the interval $[q_\alpha-\delta, q_\alpha+\delta]$ for some $\delta>0$.
According to  \citep[Corollary $2.1$]{gao2011} for every $n\geq 1$ and $\varepsilon$ in $(0,\delta]$, it holds  
\begin{equation*}
    P[|q_{\alpha,n}-q_\alpha|\geq \varepsilon]\leq 4 \exp(-2n \varepsilon^2 \delta^2_\alpha),
\end{equation*}
where 
\begin{equation*}
    \delta_\alpha = \inf_{x\in [q_\alpha-\delta, q_\alpha+\delta]}f(x).
\end{equation*}
It implies that sample size for a given precision $\varepsilon$ in $(0,\delta]$ and confidence $\gamma$ is given by 
\begin{equation}\label{eq:quantileconcent}
    n_{q_\alpha} = \frac{-\ln\left(\frac{\gamma}{4}\right)}{2 \varepsilon^2 }\delta_{\alpha}^{-2}.
\end{equation}
Hence, $n_{q_\alpha}$ depends on the relative behavior of $\delta_\alpha$ as $\alpha$ goes to $1$.
Assuming that the density $f$ become non-increasing for large enough values, we get $\delta_\alpha=f(q_\alpha+\delta)$.
When $F$ belongs to the Fr\'{e}chet type $MDA(\Phi_\eta)$, it holds that $1-F$ is in $RV_{-\eta}$.
From \citep[Proposition B.1.9]{dehaan2006}, we get $\lim_{x\nearrow \infty} xf(x)/(1-F(x))=\eta$ and this implies that  
\begin{equation}\label{eq:asymptotic}
    \frac{\delta_\alpha}{1-\alpha}=\frac{f(q_\alpha+\delta)}{1-F(q_\alpha)}\sim \frac{q_\alpha f(q_\alpha)}{1-F(q_\alpha)}\frac{1}{q_\alpha} \to 0, \quad \alpha \nearrow 1.
\end{equation}
Hence, asymptotically for heavy tailed distributions, we may have $\delta_\alpha =o(1-\alpha)$ and $n_{q_\alpha}$ tends to  $\infty$ faster than  $n_{ES_\alpha}$ and $n_{e_\alpha}$  as shown in the following proposition.
\begin{proposition}\label{prop:samplesize}
    Suppose that $F$ belongs to a Fr\'{e}chet type $MDA(\Phi_\eta)$ with $\eta>1$ and some moment $q>2$.
    Suppose further that the density function $f$ of $F$ is strictly positive and decreasing for large enough values.
    Then, for every $\varepsilon \leq \alpha/(1-\alpha)$, and confidence level $\gamma$, as $\alpha$ goes to $1$ it holds
    \begin{equation*}
        \frac{n_{ES_\alpha}}{n_{q_\alpha}}=o(1), \quad \text{and} \quad  \frac{n_{e_\alpha}}{n_{q_\alpha}}=o(1).
    \end{equation*}
\end{proposition}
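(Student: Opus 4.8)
The plan is simply to assemble the explicit sample-size formulas already derived in Proposition \ref{prop:concentration} and \eqref{eq:quantileconcent}, and then invoke the asymptotic \eqref{eq:asymptotic} to compare orders of magnitude in $\alpha$. Since $F$ has a moment $q>2$, assumption \eqref{eq:assumption03} holds, so Proposition \ref{prop:concentration} gives, for any fixed $2<s<q$,
\begin{equation*}
    n_{ES_\alpha}=\left(\frac{C}{\gamma}\right)^{\frac{1}{s-1}}\varepsilon^{-2}(1-\alpha)^{-2}
    \quad\text{and}\quad
    n_{e_\alpha}=\left(\frac{C}{\gamma}\right)^{\frac{1}{s-1}}\varepsilon^{-2}\alpha^{2}(1-\alpha)^{-2}.
\end{equation*}
Both are therefore of exact order $(1-\alpha)^{-2}$ as $\alpha\nearrow 1$, the factor $\alpha^{2}$ in $n_{e_\alpha}$ converging to $1$.

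Next I would record the quantile sample size from \eqref{eq:quantileconcent}. The hypotheses that $f$ is strictly positive and decreasing for large values guarantee $\delta_\alpha=f(q_\alpha+\delta)$ for $\alpha$ close to $1$, whence
\begin{equation*}
    n_{q_\alpha}=\frac{-\ln(\gamma/4)}{2\varepsilon^{2}}\,\delta_\alpha^{-2}.
\end{equation*}
Forming the two ratios then cancels the common $\varepsilon^{-2}$ and absorbs the $\gamma$-dependent constants into a single positive constant $K$ independent of $\alpha$, leaving
\begin{equation*}
    \frac{n_{ES_\alpha}}{n_{q_\alpha}}=K\left(\frac{\delta_\alpha}{1-\alpha}\right)^{2}
    \quad\text{and}\quad
    \frac{n_{e_\alpha}}{n_{q_\alpha}}=K\,\alpha^{2}\left(\frac{\delta_\alpha}{1-\alpha}\right)^{2}.
\end{equation*}
Finally, since $F$ belongs to $MDA(\Phi_\eta)$ the tail $1-F$ is in $RV_{-\eta}$, so \eqref{eq:asymptotic} yields $\delta_\alpha/(1-\alpha)\to 0$ as $\alpha\nearrow 1$; squaring and multiplying by the bounded factors $K$ and $\alpha^{2}\to 1$ shows both ratios are $o(1)$, which is the claim.

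There is essentially no obstacle here beyond bookkeeping: the genuine analytic content already lives in the preceding derivation of \eqref{eq:asymptotic}, whose crux is the von Mises-type relation $\lim_{x\nearrow\infty}xf(x)/(1-F(x))=\eta$ from \citep[Proposition B.1.9]{dehaan2006} together with $q_\alpha\to\infty$, forcing $\delta_\alpha=o(1-\alpha)$. The one point requiring a word of care is that \eqref{eq:quantileconcent} is valid only for $\varepsilon\in(0,\delta]$, whereas Proposition \ref{prop:concentration} requires $0<\varepsilon\leq\alpha/(1-\alpha)$; for $\alpha$ near $1$ the latter constraint is the binding one and is already imposed in the statement, so both formulas apply simultaneously and the comparison of orders is legitimate. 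Note also that the role of $\eta>1$ is only to secure the zero-mean setting; the vanishing of $\delta_\alpha/(1-\alpha)$ itself follows for any $\eta>0$ from $q_\alpha\to\infty$.
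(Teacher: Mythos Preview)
Your proof is correct and follows essentially the same route as the paper's: write out the sample-size formulas from Proposition~\ref{prop:concentration} (under assumption \eqref{eq:assumption03}) and \eqref{eq:quantileconcent}, form the ratios to obtain a constant times $(\delta_\alpha/(1-\alpha))^{2}$, and conclude via \eqref{eq:asymptotic}. One minor slip in your closing discussion: for $\alpha$ near $1$ the constraint $\varepsilon\leq\alpha/(1-\alpha)$ is the \emph{looser} one (the right-hand side tends to $\infty$), so the binding restriction is actually $\varepsilon\leq\delta$ from \eqref{eq:quantileconcent}, not the other way around; the paper sidesteps this by tacitly fixing $\delta=1$.
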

\begin{proof}
    Since $q_\alpha$ goes to $\infty$ as $\alpha$ goes to $1$, by assumption, $f$ is strictly positive and continuous on the interval $[q_\alpha-1,q_\alpha+1]$.
    A simple application of Proposition \ref{prop:concentration} and Relation \eqref{eq:quantileconcent} yields  
    \begin{equation*}
        \frac{n_{ES_\alpha}}{n_{q_\alpha}}= \tilde{C}\frac{\delta_{\alpha}^2}{(1-\alpha)^2} 
    \end{equation*}
    for some constant $\tilde{C}$.
    As a result of Relation \eqref{eq:asymptotic}, $n_{ES_\alpha}/n_{q_\alpha}$ goes to $0$ as  $\alpha$ goes to $1$.
    Likewise, $n_{e_\alpha}/n_{q_\alpha}=o(1)$ as $\alpha$ goes to $1$. 
\end{proof}

We now turn to the asymptotic behavior of the optimal $\beta^\ast$ and expectile with respect to expected shortfall.
A direct application of \citep{bellini2014, bellini2017, mao2012} yields
\begin{proposition}\label{prop:frechet}
    For $F$ in the maximum domain of attraction of Fr\'{e}chet type $MDA(\Phi_\eta)$ with $\eta>1$, as the confidence level $\alpha$ goes to $1$, we have that
    \begin{equation*}
        \frac{1-\beta^\ast}{1-\alpha}\sim \eta-1, \quad \text{and}\quad e_\alpha\sim \frac{(\eta-1)^{\frac{\eta-1}{\eta}}}{\eta}ES_\alpha.
    \end{equation*} 
\end{proposition}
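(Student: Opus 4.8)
The plan is to express both statements through the quantile $q_\alpha$ and to invoke two first-order relations together with the regular variation of the tail. Throughout I would write $\bar F := 1-F$, recall that $MDA(\Phi_\eta)$ is equivalent to $\bar F \in RV_{-\eta}$ (equivalently $U\in ERV_{1/\eta}$), and note that $q_\alpha \nearrow \infty$ as $\alpha \nearrow 1$. I would work in the standard continuous setting, so that $\bar F(q_\alpha)=1-\alpha$ and the interval in Proposition \ref{prop:OGALA} degenerates to the single value $\beta^\ast = F(e_\alpha)$, whence $1-\beta^\ast = \bar F(e_\alpha)$.

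For the first relation I would write
\begin{equation*}
    \frac{1-\beta^\ast}{1-\alpha} = \frac{\bar F(e_\alpha)}{\bar F(q_\alpha)} = \frac{\bar F\big((e_\alpha/q_\alpha)\,q_\alpha\big)}{\bar F(q_\alpha)}.
\end{equation*}
By the cited relation \eqref{eq:expectileQ} the ratio $t_\alpha := e_\alpha/q_\alpha \to (\eta-1)^{-1/\eta}$, which stays in a compact subset of $(0,\infty)$. Applying the uniform convergence theorem for $\bar F\in RV_{-\eta}$ then gives $\bar F(t_\alpha q_\alpha)/\bar F(q_\alpha) \to \big((\eta-1)^{-1/\eta}\big)^{-\eta} = \eta-1$, which is the first claim.

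For the second claim I would first establish $ES_\alpha \sim \frac{\eta}{\eta-1}\,q_\alpha$. Setting $t=1/(1-\alpha)$ and changing variables $u=1-1/s$ in $ES_\alpha = \frac{1}{1-\alpha}\int_\alpha^1 q_u\,du$ yields $ES_\alpha = t\int_t^\infty U(s)\,s^{-2}\,ds$ with $U\in RV_{1/\eta}$; since $\eta>1$ the integrand lies in $RV_{1/\eta-2}$ with index below $-1$, so Karamata's theorem gives $\int_t^\infty U(s)\,s^{-2}\,ds \sim \frac{\eta}{\eta-1}\,U(t)/t$ and hence $ES_\alpha \sim \frac{\eta}{\eta-1}\,U(t)=\frac{\eta}{\eta-1}\,q_\alpha$ (alternatively this is exactly the mean-excess asymptotics one may quote from \citep{mao2012}). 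Combining this with \eqref{eq:expectileQ},
\begin{equation*}
    \frac{e_\alpha}{ES_\alpha} \sim \frac{(\eta-1)^{-1/\eta}\,q_\alpha}{\frac{\eta}{\eta-1}\,q_\alpha} = \frac{(\eta-1)^{(\eta-1)/\eta}}{\eta},
\end{equation*}
which is the second claim.

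The main obstacle is the first claim: one cannot simply substitute $\lim e_\alpha/q_\alpha$ into $\bar F$, because the argument ratio $t_\alpha$ is merely converging, not constant. This is precisely where the uniform convergence theorem for regularly varying functions is needed, so that $\bar F(t_\alpha q_\alpha)/\bar F(q_\alpha)-t_\alpha^{-\eta}\to 0$ uniformly over $t_\alpha$ in a compact set and the moving argument may be replaced by its limit; Potter's bounds would serve equally well. Everything else reduces to a routine application of Karamata's theorem and the asymptotics \eqref{eq:expectileQ} already recorded in the text.
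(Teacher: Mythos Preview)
Your argument is correct and follows the same overall strategy as the paper: for the second claim both you and the paper factor through $q_\alpha$ by combining $e_\alpha\sim(\eta-1)^{-1/\eta}q_\alpha$ (relation \eqref{eq:expectileQ}) with $ES_\alpha\sim\tfrac{\eta}{\eta-1}q_\alpha$, the latter simply cited from \citep{mao2012} in the paper while you rederive it via Karamata. For the first claim the paper merely cites \citep{bellini2014}, whereas you supply a self-contained argument by writing $(1-\beta^\ast)/(1-\alpha)=\bar F(e_\alpha)/\bar F(q_\alpha)$ and invoking the uniform convergence theorem for $\bar F\in RV_{-\eta}$ together with $e_\alpha/q_\alpha\to(\eta-1)^{-1/\eta}$; this is a clean route and essentially what underlies the cited result. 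One small caveat: you restrict to the continuous case to get $\bar F(q_\alpha)=1-\alpha$ and $1-\beta^\ast=\bar F(e_\alpha)$, while the proposition is stated for arbitrary $F\in MDA(\Phi_\eta)$. This is easily repaired, since regular variation of $\bar F$ forces $\bar F(x-)/\bar F(x)\to1$, so that $1-\alpha\sim\bar F(q_\alpha)$ and, using the bracket for $\beta^\ast$ in Proposition~\ref{prop:OGALA}, $1-\beta^\ast\sim\bar F(e_\alpha)$ hold in general; your ratio argument then goes through unchanged.
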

\begin{proof}
    The relation between $\beta^\ast$ and $\alpha$ is given in \citep{bellini2014}, and  from \citep{mao2012} we have
    \begin{equation*}
        ES_\alpha\sim \frac{\eta}{\eta-1}q_\alpha.
    \end{equation*}
    Together with Relation \eqref{eq:expectileQ} this yields the required result.
\end{proof}
Beyond this first order expansion, a second order one is useful to determine the rate of convergence. 
In order to do so, we impose a second-order regular variation condition on $F$.
A measurable function $f:\mathbb{R} \to \mathbb{R}$ is said to be of regular variation with parameter $\eta\in \mathbb{R}$, denoted by $f\in RV_\eta$, if $\lim_{t\nearrow \infty}\frac{f(tx)}{f(t)}=x^\eta$, for each $x\in \mathbb{R}$. 
A regularly varying function $f:\mathbb{R}\to \mathbb{R}$ which is eventually positive is said to be of second-order regular variation with first-order parameter $\eta\in \mathbb{R}$ and second-order parameter $\rho\leq 0$, denoted by $f\in 2RV_{\eta,\rho}$, if $f\in RV_\eta$ and there exists a measurable function $A(t)$ which does not change sign eventually and converges to $0$ as $t$ goes to $\infty$ such that, for each $x>0$ 
\begin{equation*}
    \lim_{t\nearrow\infty}\frac{f(tx)/f(t)-x^\eta}{A(t)}=
    \begin{cases}
        x^\eta \frac{x^\rho-1}{\rho}& \text{ if }\rho\neq 0\\
        x^\eta \ln{x}& \text{ if }\rho=0
    \end{cases}
\end{equation*}
see, \citep{dehaan2006,maolv2012} for further properties of regular variations.
The function $A$ is in $RV_{\rho}$, see \citep[Theorem 2.3.3]{dehaan2006}, and is called the auxiliary function for $f$. 

\begin{proposition}\label{prop:frechet2}
    For $F$ in the maximum domain of attraction of Fr\'{e}chet type $MDA(\Phi_\eta)$ such that $1-F\in 2RV_{-\eta,\rho}$ with $\eta>1$, $\rho\leq 0$ and auxiliary function $A$, as the confidence level $\alpha$ goes to $1$, it holds
    \begin{equation*}
     \frac{e_\alpha}{ES_\alpha}=
     \frac{(\eta-1)^{\frac{\eta-1}{\eta}}(2\alpha-1)^{\frac{1}{\eta}}}{\eta}\left(1-C_{\eta,\rho}A(q_\alpha)+o\left(A(q_{\alpha})\right) \right),
 \end{equation*}
 where \begin{equation*}
     C_{\eta,\rho}
     =\begin{cases}
         \frac{\eta-1}{\rho \eta}\left(\frac{1-(\eta-1)^{-\frac{\rho}{\eta}}}{\eta-\rho-1}\right), & \rho \neq 0\\
      \frac{1}{\eta}\left(\ln{(\eta-1)}+\frac{1}{\eta-1}\right), &  \rho=0
     \end{cases}.
 \end{equation*}
    Furthermore, 
    \begin{equation*}
        \frac{1-\beta^\ast}{1-\alpha}=\frac{\eta-1}{2\alpha-1}\left[1-\frac{(\eta-1)^{-\frac{\rho}{\eta}}}{\eta-\rho-1}A(q_\alpha)+o\left( A(q_\alpha) \right))\right].
    \end{equation*}
 \end{proposition}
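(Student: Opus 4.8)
The plan is to reduce both assertions to the single implicit equation for the expectile coming from the first-order condition \eqref{eq:foc}. Writing $\bar{F}=1-F$ and using $E[L]=0$, relation \eqref{eq:foc} becomes $\tfrac{1-\alpha}{2\alpha-1}\,e_\alpha = E[(L-e_\alpha)^+] = \int_{e_\alpha}^\infty \bar{F}(u)\,du$. Since in the relevant tail we may take $F$ to be continuous and strictly increasing, we also have $\bar{F}(q_\alpha)=1-\alpha$ and, by the characterization of $\beta^\ast$ in Proposition \ref{prop:OGALA}, $1-\beta^\ast=\bar{F}(e_\alpha)$. Everything then hinges on a second-order Karamata expansion: under $1-F\in 2RV_{-\eta,\rho}$ with auxiliary function $A$,
\begin{equation*}
    \int_x^\infty \bar{F}(u)\,du = \frac{x\bar{F}(x)}{\eta-1}\left(1 + \frac{A(x)}{\eta-\rho-1} + o(A(x))\right), \qquad x\to\infty,
\end{equation*}
with the logarithmic modification when $\rho=0$. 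I would prove this by the substitution $u=xs$, writing $\int_x^\infty\bar{F}(u)\,du = x\bar{F}(x)\int_1^\infty \bar{F}(xs)/\bar{F}(x)\,ds$ and integrating the defining limit of $2RV_{-\eta,\rho}$ term by term, the integrals $\int_1^\infty s^{-\eta}\,ds$ and $\int_1^\infty s^{-\eta}(s^\rho-1)/\rho\,ds$ being elementary and convergent precisely because $\eta>1$ and $\eta-\rho-1>0$.

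Next I would extract three building blocks. Evaluating the expansion at $x=q_\alpha$ and using $ES_\alpha = q_\alpha + \tfrac{1}{1-\alpha}\int_{q_\alpha}^\infty\bar{F}$ gives $q_\alpha/ES_\alpha = \tfrac{\eta-1}{\eta}\bigl(1 - \tfrac{A(q_\alpha)}{\eta(\eta-\rho-1)} + o(A(q_\alpha))\bigr)$. Feeding $x=e_\alpha$ into the implicit equation and solving for $\bar{F}(e_\alpha)$ yields $\bar{F}(e_\alpha) = \tfrac{(\eta-1)(1-\alpha)}{2\alpha-1}\bigl(1 - \tfrac{A(e_\alpha)}{\eta-\rho-1} + o(A(e_\alpha))\bigr)$; converting the argument of the auxiliary function through $A\in RV_\rho$ and the first-order relation $e_\alpha\sim(\eta-1)^{-1/\eta}q_\alpha$ of \eqref{eq:expectileQ}, so that $A(e_\alpha)\sim(\eta-1)^{-\rho/\eta}A(q_\alpha)$, produces exactly the stated expansion of $(1-\beta^\ast)/(1-\alpha)$ after dividing by $\bar{F}(q_\alpha)=1-\alpha$. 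Finally, comparing this last expression for $\bar{F}(e_\alpha)/\bar{F}(q_\alpha)$ with the second-order relation $\bar{F}(e_\alpha)/\bar{F}(q_\alpha) = (e_\alpha/q_\alpha)^{-\eta}\bigl(1+A(q_\alpha)(r_0^\rho-1)/\rho+o(A(q_\alpha))\bigr)$, where $r_0=(\eta-1)^{-1/\eta}$, lets me solve for $e_\alpha/q_\alpha$; multiplying the resulting expansion by the one for $q_\alpha/ES_\alpha$ collapses, after a short algebraic simplification, to the claimed factor $\tfrac{(\eta-1)^{(\eta-1)/\eta}(2\alpha-1)^{1/\eta}}{\eta}$ times $(1-C_{\eta,\rho}A(q_\alpha)+o(A(q_\alpha)))$.

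The main obstacle is the inversion step: to read off $e_\alpha/q_\alpha$ I must substitute the $\alpha$-dependent ratio $x=e_\alpha/q_\alpha$, which tends to $r_0$ but is not fixed, into the second-order relation for $\bar{F}(xt)/\bar{F}(t)$, and the pointwise definition of $2RV$ is not by itself enough. I would close this gap with a uniform second-order bound of Drees type (see \citep{dehaan2006}), which controls the error in the $2RV$ convergence uniformly for $x$ near $r_0$; this is what legitimises replacing $x$ by $r_0$ inside the $A(q_\alpha)$-correction while keeping it exact in the leading power. Two routine simplifications accompany this: first, that $(2\alpha-1)A(q_\alpha)=A(q_\alpha)+o(A(q_\alpha))$ since $1-\alpha\to0$, so the exact prefactor $(2\alpha-1)^{1/\eta}$ survives while $2\alpha-1$ may be set to $1$ in every correction coefficient; and second, that the degenerate case $\rho=0$ must be run separately, replacing $(x^\rho-1)/\rho$ by $\ln x$ throughout, which is where the logarithmic form of $C_{\eta,0}$ originates.
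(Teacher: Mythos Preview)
Your argument is correct and ultimately rests on the same ingredients as the paper's, but the packaging differs. The paper does not redo any asymptotics: it simply imports the ready-made second-order expansion of $e_\alpha/q_\alpha$ from \citep[Theorem 3.1]{mao2015a} and that of $ES_\alpha/q_\alpha$ from \citep[Theorem 4.5]{mao2012} (after transferring $2RV$ from $1-F$ to $U$ via \citep[Theorem 2.3.9]{dehaan2006}) and divides; for $(1-\beta^\ast)/(1-\alpha)$ it lifts the intermediate Karamata-type identity $E[(L-e_\alpha)^+]=\tfrac{e_\alpha(1-\beta^\ast)}{\eta-1}\bigl(1+\tfrac{A(e_\alpha)}{\eta-\rho-1}+o\bigr)$ from the proof of \citep[Theorem 3.1]{mao2015a}, converts $A(e_\alpha)$ to $A(q_\alpha)$ exactly as you do, and matches it against the first-order condition. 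Your route is a self-contained rederivation of those cited results: you prove the second-order Karamata expansion for $\int_x^\infty\bar F$, use it once at $q_\alpha$ to get $q_\alpha/ES_\alpha$ and once at $e_\alpha$ to get $\bar F(e_\alpha)$, and then invert the $2RV$ relation for $\bar F(e_\alpha)/\bar F(q_\alpha)$ to recover $e_\alpha/q_\alpha$---which reproduces precisely the $B_{\eta,\rho}$-coefficient the paper quotes. The net content is identical; your version trades citations for transparency and, in particular, makes explicit the Drees-type uniformity needed when substituting the moving ratio $e_\alpha/q_\alpha$ into the $2RV$ limit, a point the paper leaves inside the cited references.
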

 \begin{proof}
    Let $1-F$ be in $2RV_{-\eta,\rho}$ with $\eta>1$, $\rho\leq 0$ and auxiliary function $A$.
    According to \citep[Theorem 3.1]{mao2015a}, we get 
    \begin{equation*}
        e_\alpha=\left( \frac{2\alpha-1}{\eta-1}\right)^{\frac{1}{\eta}}q_\alpha\left[1+ B_{\eta,\rho} A(q_\alpha) + o\left( A(q_\alpha) \right)\right],
    \end{equation*}
    where 
    \begin{equation*}
        B_{\eta,\rho}=
        \begin{cases}
            \frac{1}{\eta\rho}\left[\frac{(\eta-1)^{1-\frac{\rho}{\eta}}}{\eta-\rho-1}-1\right] & \rho\neq 0\\
            -\frac{1}{\eta}\ln{(\eta-1)} & \rho =0
        \end{cases}.        
    \end{equation*}
    The condition $1-F\in 2RV_{-\eta,\rho}$ with auxiliary function $A$ is equivalent to $U$ is in $2RV_{\frac{1}{\eta},\frac{\rho}{\eta}}$ with auxiliary function $\eta^{-2}A(U)$, see \citep[Theorem 2.3.9]{dehaan2006}.
    Hence, by  \citep[Theorem 4.5]{mao2012}, we get
    \begin{equation*}
        ES_\alpha = \frac{\eta}{\eta-1}q_\alpha\left[1+\frac{1}{\eta(\eta-\rho-1)}A(q_\alpha) + o\left( A(q_\alpha) \right)\right].
    \end{equation*} 
    It follows that
    \begin{multline*}
        \frac{e_\alpha}{ES_\alpha}=
        \frac{(\eta-1)^{\frac{\eta-1}{\eta}}(2\alpha-1)^{\frac{1}{\eta}}}{\eta} \frac{1+ B_{\eta,\rho} A(q_\alpha) + o\left( A(q_\alpha) \right)}{1+\frac{1}{\eta(\eta-\rho-1)}A(q_\alpha) + o\left( A(q_\alpha)\right)}\\
        =\frac{(\eta-1)^{\frac{\eta-1}{\eta}}(2\alpha-1)^{\frac{1}{\eta}}}{\eta}\left[1+ \left(B_{\eta,\rho}-\frac{1}{\eta(\eta-\rho-1)}\right) A(q_\alpha) + o\left(A(q_\alpha)\right)\right]
    \end{multline*}
    which gives the required result for $e_\alpha/ES_\alpha$.
    
    As for the ratio of $(1-\beta^\ast)/(1-\alpha)$, from \citep[Proof of Theorem 3.1]{mao2015a}, we have 
     \begin{equation*}
         E[(L-e_\alpha)^+] = \frac{ e_\alpha (1-\beta^\ast)}{\eta-1}\left[1+\frac{1}{\eta-\rho-1}A(e_\alpha)+ o\left( A(e_\alpha) \right)\right].
     \end{equation*}
     From Relation \eqref{eq:expectileQ}, we have $e_\alpha\sim (\eta-1)^{-1/\eta}q_\alpha$.
     Since $A\in RV_\rho$, a straightforward application of \citep[Proposition B.1.10]{dehaan2006} yields $A(e_\alpha)=(\eta-1)^{-\frac{\rho}{\eta}}A(q_\alpha)+ o(A(q_\alpha))$ showing that  
     \begin{equation*}
         \frac{E[(L-e_\alpha)^+]}{e_\alpha} = \frac{1-\beta^\ast}{\eta-1}\left[1+\frac{(\eta-1)^{-\frac{\rho}{\eta}}}{\eta-\rho-1}A(q_\alpha) + o\left(A\left( q_\alpha \right)\right)\right].
     \end{equation*}
     The first order condition given by Relation \eqref{eq:foc} can also be written as 
     \begin{equation*}
         \frac{1-\alpha}{2\alpha-1} =  \frac{E[(L-e_\alpha)^+]}{e_\alpha}.
     \end{equation*}
     Combining the last two equations gives the required result for $(1-\beta^\ast)/(1-\alpha)$.
 \end{proof}

 \begin{remark}\label{rem:FR}
     When $E[L]\neq 0$, for $\tilde{L}=L-E[L]$, following \citep{mao2015a}, we get that $1-\tilde{F}$ is in $2RV_{-\eta,(-1)\vee \rho}$ with auxiliary function $A^\ast(x)=\eta E[L]/x+A(x)$, where $\tilde{F}$ is the cumulative distribution of $L-E[L]$.
 \end{remark}

 For a given confidence level $\alpha$, expectile is always less conservative than expected shortfall, that is, $e_\alpha\leq ES_\alpha$.
Proposition \ref{prop:frechet} implies that this inequality is more pronounced for Fr\'{e}chet type $MDA(\Phi_\eta)$ when $\eta$ is close to $2$.
For $\eta=2$, it holds that $ES_\alpha\sim 2e_\alpha$.
For heavy tail index $\eta$ sufficiently close to 1, $e_\alpha$ and $ES_\alpha$ are asymptotically equivalent.

According to \citep{brazauskas2008} and \citep{holzmann2016}, for a random sample from  $F$ having a finite mean  the empirical estimators  $ES_{\alpha,n}$ and $e_{\alpha,n}$ converges almost surely to $ES_\alpha$ and $e_\alpha$, respectively.
If $F$ is further a Fr\'{e}chet type $MDA(\Phi_\eta)$ with $\eta>1$, then almost surely
\begin{equation*}
    \lim_{\alpha\nearrow 1} \left(\lim_{n\nearrow \infty} \frac{e_{\alpha, n}}{ES_{\alpha,n}}\right)= \frac{(\eta-1)^{\frac{\eta-1}{\eta}}}{\eta}.
\end{equation*}
For the comparison of the empirical ratio $e_{\alpha,n}/ES_{\alpha,n}$ with the theoretical ratio $e_\alpha/ES_\alpha$, we provide a simulation study for Pareto and Student $t$ distributions with different regularity tail index $\eta$ and sample size $n$, see Example \ref{eg:Pareto} and \ref{eg:Student} for more discussion.

In the same sprit of Proposition \ref{prop:frechet}, the Euler allocations of expectile can also be formulated as a function of expected shortfall in the sense of the following proposition.
\begin{proposition}
    Let $L_1,\dots,L_d$ be a non-negative loss profiles in $L^1$ with continuous distribution such that  $L=\sum_{k=1}^d L_k$.
    If the cumulative distribution of $L_1$ belongs to a Fr\'{e}chet type $MDA(\Phi_\eta)$ with $\eta>1$ such that  
    \begin{equation*}
        \lim_{t\nearrow \infty} \frac{P[L_1>tx_1,\dots, L_d>tx_d]}{P[L_1>t]} 
    \end{equation*}
    exist for all $(x_1,\dots,x_d)\in [0,\infty]^d/\{\mathbf{0}\}$ and not identically equal to $0$, then 
    \begin{equation*}
        e_\alpha(L_k|L)\sim \frac{(\eta-1)^{\frac{\eta-1}{\eta}}}{\eta} ES_\alpha(L_k|L).
    \end{equation*}
\end{proposition}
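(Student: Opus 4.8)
The plan is to mirror the scalar asymptotic relation of Proposition~\ref{prop:frechet}, using the representation of the expectile allocation from Proposition~\ref{prop:euler} to reduce the claim to the known tail behaviour of the expected shortfall allocation under multivariate regular variation. First I would observe that the joint tail hypothesis forces the aggregate $L=\sum_{k=1}^d L_k$ to be regularly varying of index $-\eta$, hence $L\in MDA(\Phi_\eta)$ with the same $\eta$, so that Relation~\eqref{eq:expectileQ} and Proposition~\ref{prop:frechet} apply to $L$. Since the $L_k$ are non-negative and only assumed integrable, $E[L]\neq 0$ in general; by Remark~\ref{rem:FR} centering changes only the second-order auxiliary function, so the first-order statements $\tfrac{1-\beta^\ast}{1-\alpha}\to\eta-1$ and $e_\alpha\sim\tfrac{(\eta-1)^{(\eta-1)/\eta}}{\eta}ES_\alpha$ remain valid, with $\beta^\ast=P[L\le e_\alpha(L)]$.

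Next I would insert these into Proposition~\ref{prop:euler}, namely
\begin{equation*}
 e_\alpha(L_k|L)=\Bigl(1-\tfrac{1-\alpha}{\alpha+(1-2\alpha)\beta^\ast}\Bigr)ES_{\beta^\ast}(L_k|L)+\tfrac{1-\alpha}{\alpha+(1-2\alpha)\beta^\ast}E[L_k],
\end{equation*}
and analyse the coefficients. Writing $u=1-\alpha$ and $v=1-\beta^\ast$ with $v\sim(\eta-1)u$, a direct expansion gives $\alpha+(1-2\alpha)\beta^\ast=u+v-2uv\sim\eta u$, so the leading coefficient tends to $(\eta-1)/\eta$ while the coefficient of $E[L_k]$ tends to $1/\eta$. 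As $ES_\alpha(L_k|L)\to\infty$ (established below) and $E[L_k]$ is a finite constant, the middle term is of lower order and may be dropped; it then remains only to control the ratio $ES_{\beta^\ast}(L_k|L)/ES_\alpha(L_k|L)$.

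The decisive ingredient is the asymptotics of the expected shortfall Euler allocation under the joint tail hypothesis. Writing $ES_\alpha(L_k|L)=E[L_k\mid L>q_L(\alpha)]=g_k(q_L(\alpha))/(1-\alpha)$ with $g_k(t):=E[L_k\mathbf 1_{\{L>t\}}]$, the multivariate regular variation condition yields, by a Karamata-type tail-integral argument against the limiting homogeneous tail measure, that $g_k(t)\sim \tilde c_k\, t\,P[L_1>t]$ for a constant $\tilde c_k\in(0,\infty)$, the finiteness being guaranteed by $\eta>1$. Hence $g_k\in RV_{1-\eta}$ and $ES_\alpha(L_k|L)\sim c_k\,ES_\alpha(L)$ with $c_k>0$, which in particular makes $ES_\alpha(L_k|L)\to\infty$. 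Consequently, using $ES_\beta(L)\sim\tfrac{\eta}{\eta-1}q_\beta$ at both levels $\beta\in\{\alpha,\beta^\ast\}$, the identity $q_L(\beta^\ast)=e_\alpha$ valid for continuous $F$, and Relation~\eqref{eq:expectileQ},
\begin{equation*}
 \frac{ES_{\beta^\ast}(L_k|L)}{ES_\alpha(L_k|L)}\sim\frac{ES_{\beta^\ast}(L)}{ES_\alpha(L)}\sim\frac{q_L(\beta^\ast)}{q_L(\alpha)}=\frac{e_\alpha}{q_\alpha}\longrightarrow(\eta-1)^{-1/\eta}.
\end{equation*}
Multiplying by the coefficient limit $(\eta-1)/\eta$ yields $e_\alpha(L_k|L)/ES_\alpha(L_k|L)\to\tfrac{\eta-1}{\eta}(\eta-1)^{-1/\eta}=\tfrac{(\eta-1)^{(\eta-1)/\eta}}{\eta}$, which is the assertion.

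I expect the main obstacle to be precisely the allocation asymptotic $g_k(t)\sim \tilde c_k\,t\,P[L_1>t]$ with a strictly positive constant: this is where the full strength of the joint tail hypothesis and the non-degeneracy of its limit enter, since one must transfer the joint tail convergence to the tail integral $E[L_k\mathbf 1_{\{L>t\}}]$ with enough uniformity to extract a regularly varying function of the correct index, and guarantee that the limiting allocation weight does not vanish (so that $ES_\alpha(L_k|L)\to\infty$ and the constant term in Proposition~\ref{prop:euler} is genuinely negligible). Once this is in place, the remaining steps are the same elementary coefficient expansion and scalar Fr\'echet relations already used in Proposition~\ref{prop:frechet}.
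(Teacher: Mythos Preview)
Your proposal is correct and follows essentially the same route as the paper: start from the Euler representation of Proposition~\ref{prop:euler}, use $(1-\beta^\ast)/(1-\alpha)\to\eta-1$ to get the coefficient limits $(\eta-1)/\eta$ and $1/\eta$, and then invoke the allocation asymptotic $E[L_k\mid L>t]\sim c_k t$ with $c_k>0$ to conclude via $e_\alpha/q_\alpha\to(\eta-1)^{-1/\eta}$. The step you single out as the main obstacle---namely $g_k(t)\sim \tilde c_k\,t\,P[L_1>t]$ with a strictly positive constant, equivalently $ES_\alpha(L_k|L)\sim c_k q_\alpha$---is exactly what the paper imports wholesale from \citet{asimit2011} rather than rederiving via a Karamata-type argument; apart from that citation, the two arguments coincide.
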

\begin{proof}
    From Proposition \ref{prop:euler}, we have that  
    \begin{multline*}
        e_\alpha(L_k|L)=\left( 1-\frac{1-\alpha}{1-\alpha+ (2\alpha-1)(1-\beta^\ast)} \right)E[L_k|L>e_\alpha]\\+\frac{1-\alpha}{1-\alpha+ (2\alpha-1)(1-\beta^\ast)}E[L_k].
    \end{multline*}
    Under the given assumption, \citep{asimit2011} showed that $F$ is Fr\'{e}chet type $MDA(\Phi_\eta)$, $E[L_k|L>t]\sim c_k t$ as $t$ goes to $\infty$ and $ES_\alpha(L_k|L)\sim c_k q_\alpha$ for some $c_k>0$ as $\alpha$ goes to$1$.
    Together with Proposition \ref{prop:frechet}, this yields
    \begin{multline*}
        e_\alpha(L_k|L)\sim \left( 1-\frac{1}{\eta} \right) c_k e_\alpha +\frac{E[L_k]}{\eta}\sim \frac{\eta-1}{\eta} c_k e_\alpha\\\sim \frac{\eta-1}{\eta} ES_\alpha(L_k|L) \frac{e_\alpha}{q_\alpha}\sim\frac{(\eta-1)^{\frac{\eta-1}{\eta}}}{\eta} ES_\alpha(L_k|L).
    \end{multline*}
\end{proof}

We now turn to the asymptotic comparison between expectile and expected shortfall for $F$ in the domain of attraction of Weibull and Gumbel type.
For $F$ that belongs to Weibull type $MDA(\Psi_\eta)$, it is known that $\hat{x}<\infty$.
A direct application of \citep{mao2012, mao2015} yields
\begin{proposition}\label{prop:WE01}
    Let $\hat{x}:=\sup\{x:F(x)<1\}$. 
    For $F$ in the maximum domain of attraction of Weibull type $MDA(\Psi_\eta)$, as $\alpha$ goes to $1$ , it holds 
    \begin{equation*}
        1-\alpha=o(1-\beta^\ast) \quad \text{and} \quad \frac{\hat{x} - ES_\alpha}{\hat{x}-e_\alpha}=o(1).
    \end{equation*}
\end{proposition}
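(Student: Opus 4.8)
The plan is to treat the two claims separately, first reducing everything to the two elementary facts that, for a Weibull-type loss with finite right endpoint $\hat x$, both $e_\alpha \to \hat x$ and $\beta^\ast \to 1$ as $\alpha \nearrow 1$. Since $\hat x < \infty$, and since $E[L]=0$ with $L$ non-degenerate forces $\hat x > 0$, while $\alpha \mapsto e_\alpha$ is non-decreasing and bounded above by $ES_\alpha \le \hat x$, the limit $\ell := \lim_{\alpha \nearrow 1} e_\alpha$ exists in $(0,\hat x]$. If $\ell < \hat x$ then $P[L > \ell] > 0$, so the first order condition \eqref{eq:foc} would give $0 < E[(L-\ell)^+] \le E[(L-e_\alpha)^+] = \tfrac{1-\alpha}{2\alpha-1} e_\alpha \to 0$, a contradiction; hence $e_\alpha \to \hat x$ and consequently $\beta^\ast = P[L \le e_\alpha] \to 1$.

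For the first claim I would estimate the numerator in \eqref{eq:foc} by the crude bound $E[(L-e_\alpha)^+] \le (\hat x - e_\alpha)\, P[L > e_\alpha] = (\hat x - e_\alpha)(1-\beta^\ast)$, valid precisely because $L \le \hat x$ almost surely. Substituting into \eqref{eq:foc} and rearranging yields
\begin{equation*}
    \frac{1-\alpha}{1-\beta^\ast} \le \frac{(2\alpha-1)(\hat x - e_\alpha)}{e_\alpha},
\end{equation*}
and letting $\alpha \nearrow 1$ the right-hand side tends to $(1\cdot 0)/\hat x = 0$, which is exactly $1-\alpha = o(1-\beta^\ast)$. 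This part uses only $\hat x < \infty$ and $e_\alpha \to \hat x$, not the full $MDA(\Psi_\eta)$ structure.

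For the second claim I would first recall the Weibull-type relation between expected shortfall and value at risk from \citep{mao2012}: since $F \in MDA(\Psi_\eta)$ is equivalent to $t \mapsto \hat x - q_{1-1/t}$ being in $RV_{-1/\eta}$, Karamata's theorem applied to $\hat x - ES_\alpha = \tfrac{1}{1-\alpha}\int_\alpha^1 (\hat x - q_u)\,du$ gives $\hat x - ES_\alpha \sim \tfrac{\eta}{\eta+1}(\hat x - q_\alpha)$, so that $\hat x - ES_\alpha$ and $\hat x - q_\alpha$ are of the same order. It therefore suffices to show $\hat x - q_\alpha = o(\hat x - e_\alpha)$. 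Writing $\hat x - q_u = (1-u)^{1/\eta}\,\ell\!\left(1/(1-u)\right)$ with $\ell$ slowly varying, and using $e_\alpha \in [q_{\beta^\ast}, q^+_{\beta^\ast}]$ from Proposition \ref{prop:OGALA} so that $\hat x - e_\alpha$ is comparable to $\hat x - q_{\beta^\ast}$, the ratio becomes a power $(\tfrac{1-\alpha}{1-\beta^\ast})^{1/\eta}$ times a slowly varying factor; by the first claim $\tfrac{1-\alpha}{1-\beta^\ast}\to 0$, and a Potter bound on the slowly varying factor lets me absorb it into a slightly smaller power, forcing the whole ratio to $0$. Combining the two estimates gives $\tfrac{\hat x - ES_\alpha}{\hat x - e_\alpha} = \tfrac{\hat x - ES_\alpha}{\hat x - q_\alpha}\cdot\tfrac{\hat x - q_\alpha}{\hat x - e_\alpha} \to \tfrac{\eta}{\eta+1}\cdot 0 = 0$. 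Alternatively, one may read off $\hat x - e_\alpha \in RV_{1/(\eta+1)}$ directly from the expectile expansion in \citep{mao2015} and simply compare the regular-variation indices $1/(\eta+1) < 1/\eta$.

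The hard part will be the second claim, specifically controlling the slowly varying parts when comparing the rates at which $q_\alpha$ and $e_\alpha$ approach the finite endpoint $\hat x$: the naive comparison only matches the leading powers, and one must either invoke the precise expectile asymptotics near a finite endpoint from \citep{mao2015} or supply a Potter-type uniformity argument to guarantee that the slowly varying correction does not overturn the strict inequality of exponents. Everything else is a routine consequence of the first order condition \eqref{eq:foc}, Karamata's theorem, and the characterization of $MDA(\Psi_\eta)$ via extended regular variation in \citep{dehaan2006}.
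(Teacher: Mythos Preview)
Your proposal is correct and follows the same overall decomposition as the paper for the second claim, namely combining $\hat x - ES_\alpha \sim \tfrac{\eta}{\eta+1}(\hat x - q_\alpha)$ from \citep{mao2012} with $\hat x - q_\alpha = o(\hat x - e_\alpha)$; the paper simply cites \citep[Proposition 3.3]{mao2015} for the latter, which is the alternative you mention at the end, so your Potter-bound derivation is extra work but not wrong.

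The genuine difference is in the first claim. The paper invokes the precise Karamata-type asymptotic $E[(L-x)^+]\big/\big((\hat x - x)(1-F(x))\big)\to 1/(\eta+1)$ from \citep{mao2012}, combines it with the first order condition $E[(L-e_\alpha)^+]\sim \hat x(1-\alpha)$, and reads off $\hat x(1-\alpha)\sim \tfrac{(\hat x - e_\alpha)(1-\beta^\ast)}{\eta+1}$. Your argument replaces this by the crude pointwise bound $E[(L-e_\alpha)^+]\le (\hat x - e_\alpha)(1-\beta^\ast)$, which after substitution into \eqref{eq:foc} already forces $(1-\alpha)/(1-\beta^\ast)\to 0$. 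This is strictly more elementary: it uses only $\hat x<\infty$, $\hat x>0$, and $e_\alpha\to\hat x$, not the $MDA(\Psi_\eta)$ regular-variation structure. The paper's route, on the other hand, yields the sharper information $(1-\alpha)/(1-\beta^\ast)\sim (\hat x - e_\alpha)/\big(\hat x(\eta+1)\big)$, which is what feeds into the second-order expansion in the next proposition; your inequality gives only an $o(1)$ upper bound, sufficient here but not reusable later.
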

\begin{proof}
    The first order condition given by Relation \eqref{eq:foc} can also be re-written as 
    \begin{equation*}
        E[(L-e_\alpha)^+] =\frac{e_\alpha (1-\alpha)}{2\alpha-1}.
    \end{equation*}
    Since $e_\alpha \nearrow \hat{x}$ asymptotically the first order condition becomes 
    \begin{equation*}
        E[(L-e_\alpha)^+] \sim \hat{x} (1-\alpha).
    \end{equation*}
    From \citep[Lemma 3.2 and Remark 3.3]{mao2012} we have
    \begin{equation*}
        \frac{E[(L-x)^+]}{(\hat{x}-x)(1-F(x))}\sim \frac{1}{\eta+1}.
    \end{equation*}
    Hence, as $\alpha$ goes to $1$, it follows that
    \begin{equation*}
        \frac{E[(L-e_\alpha)^+]}{1-\beta^\ast}\sim \frac{\hat{x}-e_\alpha}{\eta+1} 
    \end{equation*}
    which implies $1-\alpha=o(1-\beta^\ast)$.
    From \citep[Theorem 3.4]{mao2012} and \citep[Proposition 3.3]{mao2015} we have
    \begin{equation*}
        \hat{x}-ES_\alpha\sim \frac{\eta}{\eta+1}(\hat{x}-q_\alpha) \quad \text{and}\quad \frac{\hat{x}-q_\alpha}{\hat{x}-e_\alpha}=o(1)
    \end{equation*}
    which yields the desired asymptotic relationship between $e_\alpha$ and $ES_\alpha$.
\end{proof}

Since the Weibull type $MDA(\Psi_\eta)$ distributions have a finite right end point $\hat{x}$, both $e_\alpha$ and $ES_\alpha$ converges to $\hat{x}$ as $\alpha$ goes to $1$.
Proposition \ref{prop:WE01} implies that $ES_\alpha$ converges to $\hat{x}$ very fast as compared to $e_\alpha$.
Under additional assumption on the distribution of $F$, we also provide a  second order expansion.
\begin{proposition}\label{prop:WE2}
    For $F$ in the maximum domain of attraction of Weibull type $MDA(\Psi_\eta)$ such that $P[L>0]>0$ and $1-F(1/\cdot)\in 2RV_{-\eta,\rho}$ with $\eta>0$, $\rho< 0$ and auxiliary function $A$, as the confidence level $\alpha$ goes to $1$, it holds 
    \begin{multline*}
        \frac{\hat{x}-ES_\alpha}{\hat{x}-e_\alpha}=\frac{\eta((2\alpha-1)(\hat{x}-q_\alpha))^{\frac{1}{\eta+1}}}{(\eta+1)C_\eta }\\
        \left[1+\left(\frac{C_\eta (\hat{x}-q_\alpha)^{\frac{\eta}{\eta+1}}}{(\eta+1) \hat{x}}+\frac{C_\eta ^{-\rho}}{\rho(\eta-\rho+1)}A_0(q_\alpha)\right)(1+o(1))\right],
    \end{multline*}
    where 
    \begin{equation*}
        C_\eta=(\hat{x}(\eta+1))^{\frac{1}{\eta+1}} \quad \text{ and }  \quad A_0(q_\alpha)=A\left((\hat{x}-q_\alpha)^{-\frac{\eta}{\eta+1}}\right).
    \end{equation*}
    Furthermore,
    \begin{equation*}
        \frac{1-\beta^\ast}{1-\alpha}\sim \left(\frac{\hat{x}(\eta+1)}{\hat{x}-q_\alpha}\right)^{\frac{\eta}{\eta+1}}.
    \end{equation*}
\end{proposition}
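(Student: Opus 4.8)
The plan is to mirror the proof of Proposition \ref{prop:frechet2}, replacing the Fr\'echet-type ingredients by their Weibull-type analogues and then assembling the two displayed formulas by division and by the first-order condition \eqref{eq:foc}. Under the second-order condition $1-F(1/\cdot)\in 2RV_{-\eta,\rho}$ (the second-order refinement, near the finite endpoint $\hat{x}$, of the characterization of $MDA(\Psi_\eta)$ used in Proposition \ref{prop:WE01}), one has second-order expansions of both $\hat{x}-e_\alpha$ and $\hat{x}-ES_\alpha$ in terms of $\hat{x}-q_\alpha$ and the auxiliary function $A$. I would take the expectile expansion from the Weibull case of \citep[Theorem 3.1]{mao2015a} (whose first-order part already gives $\hat{x}-e_\alpha\sim C_\eta(\hat{x}-q_\alpha)^{\eta/(\eta+1)}$, in the spirit of \citep[Proposition 3.3]{mao2015}) and the expected-shortfall expansion from \citep[Theorem 3.4]{mao2012}, after transporting the second-order condition through de Haan's equivalences \citep[Theorems 2.3.3 and 2.3.9]{dehaan2006} so that $A$ is consistently evaluated at the quantile scale; this last conversion is what produces the argument $(\hat{x}-q_\alpha)^{-\eta/(\eta+1)}$ inside $A_0$ and the factor $C_\eta^{-\rho}$.

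I would first dispose of the ratio $(1-\beta^\ast)/(1-\alpha)$, which needs only first-order information. Since $F$ is continuous in the tail, $1-\beta^\ast=1-F(e_\alpha)$; writing \eqref{eq:foc} with $E[L]=0$ as $E[(L-e_\alpha)^+]=e_\alpha(1-\alpha)/(2\alpha-1)$ and combining it with $E[(L-e_\alpha)^+]\sim(\hat{x}-e_\alpha)(1-\beta^\ast)/(\eta+1)$ from \citep[Lemma 3.2 and Remark 3.3]{mao2012} (already used in Proposition \ref{prop:WE01}) yields
\begin{equation*}
    \frac{1-\beta^\ast}{1-\alpha}\sim\frac{(\eta+1)\,e_\alpha}{(2\alpha-1)(\hat{x}-e_\alpha)}.
\end{equation*}
Letting $e_\alpha\nearrow\hat{x}$ and $2\alpha-1\to1$ and substituting $\hat{x}-e_\alpha\sim C_\eta(\hat{x}-q_\alpha)^{\eta/(\eta+1)}$ with $C_\eta=(\hat{x}(\eta+1))^{1/(\eta+1)}$, the right-hand side becomes $(\eta+1)\hat{x}\,C_\eta^{-1}(\hat{x}-q_\alpha)^{-\eta/(\eta+1)}$; using $(\eta+1)\hat{x}=C_\eta^{\eta+1}$ this collapses to $\left(\hat{x}(\eta+1)/(\hat{x}-q_\alpha)\right)^{\eta/(\eta+1)}$, as claimed.

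For the ratio $(\hat{x}-ES_\alpha)/(\hat{x}-e_\alpha)$ I would divide the two second-order expansions. The leading orders already reproduce the stated prefactor: $\hat{x}-ES_\alpha\sim\frac{\eta}{\eta+1}(\hat{x}-q_\alpha)$, while solving the \emph{exact} first-order condition for $\hat{x}-e_\alpha$ (keeping the factor $2\alpha-1$ rather than sending it to $1$) produces $\hat{x}-e_\alpha\sim C_\eta(2\alpha-1)^{-1/(\eta+1)}(\hat{x}-q_\alpha)^{\eta/(\eta+1)}$, so the quotient is $\eta\left((2\alpha-1)(\hat{x}-q_\alpha)\right)^{1/(\eta+1)}/((\eta+1)C_\eta)$. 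The bracketed correction then assembles two contributions of different nature: the genuine power-type term $C_\eta(\hat{x}-q_\alpha)^{\eta/(\eta+1)}/((\eta+1)\hat{x})$, which is a finite-endpoint second-order effect (essentially $(\hat{x}-e_\alpha)/((\eta+1)\hat{x})$) with no counterpart in the Fr\'echet case, and the auxiliary-function term $C_\eta^{-\rho}A_0(q_\alpha)/(\rho(\eta-\rho+1))$ coming from the $2RV_{-\eta,\rho}$ correction, the factor $C_\eta^{-\rho}$ entering precisely when $A$ evaluated at the expectile's scale is converted to $A_0(q_\alpha)$ via $A\in RV_\rho$ and \citep[Proposition B.1.10]{dehaan2006}.

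I expect the main obstacle to be the bookkeeping in this last step. One must extract the second-order coefficients of the $e_\alpha$- and $ES_\alpha$-expansions exactly, reconcile the $(2\alpha-1)$ versus $(1-\alpha)$ scalings introduced by the expectile's first-order condition (here the $(2\alpha-1)$ must be retained in the prefactor, which is why it appears explicitly in the statement while it may be dropped in the coarser $(1-\beta^\ast)/(1-\alpha)$ relation), and verify that after division the power-type and auxiliary-function corrections survive with exactly the stated constants. The delicate point, absent from the Fr\'echet analysis, is that near a finite endpoint the expansion carries the extra polynomial correction $(\hat{x}-q_\alpha)^{\eta/(\eta+1)}$ alongside the usual $A(q_\alpha)$ term, so one must keep both and not prematurely discard the one that happens to be subdominant for a given sign and size of $\rho$.
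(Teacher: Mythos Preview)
Your proposal is correct and follows essentially the same route as the paper: derive the first-order relation $\hat{x}-e_\alpha\sim C_\eta(\hat{x}-q_\alpha)^{\eta/(\eta+1)}$, pull the second-order expectile expansion from \citep{mao2015a}/\citep{mao2015} and the second-order $ES$ expansion from the Weibull analogue in \citep{tang2012}/\citep{mao2012} after transporting the $2RV$ condition to $\hat{x}-U$, convert $A(1/(\hat{x}-e_\alpha))$ into $C_\eta^{-\rho}A_0(q_\alpha)$ via $A\in RV_\rho$, and divide; the $(1-\beta^\ast)/(1-\alpha)$ part is exactly your combination of \eqref{eq:foc} with the integrated-tail asymptotic. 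The only point the paper makes slightly more explicit than you do is that, because $\rho<0$, one has $A(1/(\hat{x}-q_\alpha))=o\!\left(A(1/(\hat{x}-e_\alpha))\right)$, so the $ES$ second-order correction is absorbed into the $o$-term and both surviving bracketed contributions come from the expectile side---precisely the ``do not prematurely discard'' caveat you flag at the end.
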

\begin{proof}
    Let $1-F(\hat{x}-1/\cdot)$ be in $2RV_{-\eta,\rho}$ with $\eta>0$, $\rho< 0$ and auxiliary function $A$.
    According to \citep[Proposition 2.4]{mao2012}, as $x$ goes to $\infty$ we have $1-F(\hat{x}-1/x)\sim c x^{-\eta}$ for some $c>0$.
    Hence, by \citep[Proposition 3.3]{mao2015}, it holds 
    \begin{equation}\label{eq:WE02}
        \hat{x}-e_\alpha \sim C_\eta(\hat{x}-q_\alpha)^{\frac{\eta}{\eta+1}}.
    \end{equation}
    In particular, for the same reason as in the proof of Proposition \ref{prop:frechet2}, it follows that
    \begin{equation}\label{eq:WE03}
        A\left( \frac{1}{\hat{x} - e_{\alpha}(L)} \right) \sim C_{\eta}^{-\rho}A_0\left( q_\alpha \right) \quad \text{and} \quad A\left(\frac{1}{\hat{x}-q_\alpha}\right)=o\left(A\left(\frac{1}{\hat{x}-e_\alpha}\right)\right).
    \end{equation}
    The regularity condition on $1-F(\hat{x}-1/\cdot)$ implies that $\hat{x}-U\in 2RV_{-\frac{1}{\eta},\frac{\rho}{\eta}}$ with auxiliary function asymptotically equivalent to $-\eta^{-2} A(1/(\hat{x}-U))$ as $t$ goes to $\infty$, see \citep{maolv2012}.
    Hence, using Relation \eqref{eq:WE03}, and \citep[Theorem 4.5]{tang2012}  gives 
    \begin{equation}\label{eq:WE04}
        \begin{split}
            \hat{x}-ES_\alpha & = \frac{\eta(\hat{x}-q_\alpha)}{\eta+1}\left[1-\frac{A\left(\frac{1}{\hat{x}-q_\alpha}\right)}{\eta(\eta-\rho+1)}(1+o(1))\right]\\
                                 & = \frac{\eta(\hat{x}-q_\alpha)}{\eta+1}\left[1+o\left(A\left(\frac{1}{\hat{x}-e_\alpha}\right)\right)\right].
        \end{split}
    \end{equation}
    Using Relation \eqref{eq:WE03} and \citep[Relation 3.14 and  3.17]{mao2015}, we also get that
    \begin{equation}\label{eq:WE4}
        e_\alpha=\frac{(2\alpha-1)(\hat{x}-e_\alpha)^{\eta+1}}{(\eta+1)(\hat{x}-q_\alpha)^{\eta}}\left[1+\frac{1}{\rho}\left(\frac{\eta+1}{\eta-\rho+1}\right)A\left(\frac{1}{\hat{x}-e_\alpha}\right) + o\left( A\left(\frac{1}{\hat{x}-e_\alpha}\right) \right) \right].
    \end{equation}
    Substituting the left hand side of Equation \eqref{eq:WE4} with \eqref{eq:WE02} and solving for $\hat{x}-e_\alpha$ gives
    \begin{multline}\label{eq:WE6}
        \frac{(2\alpha-1)^{\frac{1}{\eta+1}}}{C_\eta (\hat{x}-q_\alpha)^{\frac{\eta}{\eta+1}}}\left(\hat{x}-e_\alpha\right)\\
        =\left[1- \left(\frac{C_\eta(\hat{x}-q_\alpha)^{\frac{\eta}{\eta+1}}}{\hat{x}}+
            \frac{\eta+1}{\rho\left(\eta-\rho+1\right)}A\left(\frac{1}{\hat{x}-e_\alpha}\right)\right)(1+o(1))\right]^{\frac{1}{\eta+1}}\\
            =1- \left(\frac{C_\eta(\hat{x}-q_\alpha)^{\frac{\eta}{\eta+1}}}{(\eta+1)\hat{x}}+
            \frac{1}{\rho\left(\eta-\rho+1\right)}A\left(\frac{1}{\hat{x}-e_\alpha}\right)\right)(1+o(1)).
    \end{multline}
    Computations using Relations \eqref{eq:WE03}-\eqref{eq:WE04} and \eqref{eq:WE6} yields the required expression of $(\hat{x}-ES_\alpha)/(\hat{x}-e_\alpha)$.

    As for $(1-\beta^\ast)/(1-\alpha)$, using the fact that $e_\alpha$ goes to $\hat{x}$ as $\alpha$ goes to $1$, \citep[Relation 3.13]{mao2015a} and  the first order condition given by Relation \eqref{eq:foc} implies that
    \begin{equation*}
        \frac{E[(L-e_\alpha)^+]}{\hat{x}-e_\alpha}\sim \frac{1-\beta^\ast}{\eta+1} \quad \text{and}\quad \frac{\hat{x}}{\hat{x}-e_\alpha}\sim \frac{E[(L-e_\alpha)^+]}{\hat{x}-e_\alpha}.
    \end{equation*}
    Combining these Relations together with Relation \eqref{eq:WE02} yields the result.
\end{proof}
\begin{remark}\label{rem:WE}
    When $E[L]\neq 0$, the proof of Proposition \ref{prop:WE2} allow to derive the expression 
    \begin{multline*}
        \frac{\hat{x}-ES_\alpha}{\hat{x}-e_\alpha}=\frac{\eta ((2\alpha-1)(\hat{x}-q_\alpha))^{\frac{1}{\eta+1}} }{(\eta+1)\tilde{C}_\eta }\\
        \left[1+\left(\frac{\tilde{C}_\eta (\hat{x}-q_\alpha)^{\frac{\eta}{\eta+1}}}{(\eta+1) \left( \hat{x}-E[L] \right)}+\frac{\tilde{C}_\eta ^{-\rho}}{\rho(\eta-\rho+1)}A_0(q_\alpha)\right)(1+o(1))\right]
    \end{multline*}
    where
    \begin{equation*}
        \tilde{C}_\eta=((\hat{x}-E[L])(\eta+1))^{\frac{1}{\eta+1}}.
    \end{equation*}
\end{remark}
A direct combination of results by \citep{tang2012,mao2012} and \citep{bellini2017} yields 
\begin{proposition}
    For $F$ in the domain of attraction of Gumbel type $MDA(\Lambda)$, as the confidence level $\alpha$ goes to $1$, it holds $1-\alpha=o(1-\beta^\ast)$. 
    If further $F(x)=1-\exp{(-x^\tau g(x))}$ with $g\in RV_0$ and $\tau>0$, then
    \begin{equation*}
        \ln{(e_\alpha)}\sim \ln{(ES_\alpha)}.
    \end{equation*}
    Moreover, if  
    \begin{equation}\label{eq:GU01}
        \lim_{x \nearrow \infty}\left(\frac{g(c x)}{g(x)}-1\right) \ln{g(x)}=0
    \end{equation}
    for some constant $c>0$, then 
    \begin{equation*}
        e_\alpha\sim ES_\alpha.
    \end{equation*}
\end{proposition}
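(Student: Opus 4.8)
The plan is to run everything through the first–order condition \eqref{eq:foc} (here $E[L]=0$, so $\tfrac{1-\alpha}{2\alpha-1}=E[(L-e_\alpha)^+]/e_\alpha$), the elementary identity $E[(L-x)^+]=\int_x^\infty(1-F(t))\,dt$, and the classical tail behaviour of distributions in $MDA(\Lambda)$, importing the asymptotics of $ES_\alpha$ and $q_\alpha$ from \citep{tang2012,mao2012} and the quantile/expectile comparison from \citep{bellini2017}. Throughout, let $a$ denote the auxiliary (mean–excess) function of $F\in MDA(\Lambda)$, for which the standard estimates $\int_x^\infty(1-F(t))\,dt\sim a(x)(1-F(x))$ and $a(x)/x\to0$ hold, see \citep{dehaan2006}.

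For the first assertion I would note that $F$ is continuous in the tail, so $1-\beta^\ast=P[L>e_\alpha]=1-F(e_\alpha)$, and combine \eqref{eq:foc} with the two tail estimates above to obtain
\[
    \frac{1-\alpha}{1-\beta^\ast}
    =(2\alpha-1)\,\frac{E[(L-e_\alpha)^+]}{e_\alpha\,(1-F(e_\alpha))}
    \sim (2\alpha-1)\,\frac{a(e_\alpha)}{e_\alpha}\longrightarrow 0,
\]
which is exactly $1-\alpha=o(1-\beta^\ast)$.

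For the logarithmic equivalence I would specialise to $1-F(x)=\exp(-x^\tau g(x))$ and set $\psi(x):=x^\tau g(x)=-\ln(1-F(x))\in RV_\tau$. Taking logarithms in the displayed relation of the previous step gives $\psi(e_\alpha)-\psi(q_\alpha)\sim-\ln\!\big(a(e_\alpha)/e_\alpha\big)$; since $a(x)/x\sim 1/(\tau\psi(x))$ one has $-\ln(a(x)/x)\sim\ln\psi(x)\sim\tau\ln x$, whence $\psi(e_\alpha)-\psi(q_\alpha)\sim\tau\ln e_\alpha$. Because $\psi(q_\alpha)=-\ln(1-\alpha)\to\infty$ dominates $\tau\ln e_\alpha$ (which is only of order $\ln\psi(q_\alpha)$), this yields $\psi(e_\alpha)\sim\psi(q_\alpha)$. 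Taking logarithms once more and using $g\in RV_0$, i.e. $\ln g(x)=o(\ln x)$, collapses $\psi$ to its leading term $x^\tau$ and delivers $\ln e_\alpha\sim\ln q_\alpha$. Since $MDA(\Lambda)$ forces $ES_\alpha\sim q_\alpha$ by \citep{tang2012,mao2012}, we also have $\ln ES_\alpha\sim\ln q_\alpha$, and combining gives $\ln e_\alpha\sim\ln ES_\alpha$.

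For the final, strongest statement, since $ES_\alpha\sim q_\alpha$ it suffices to prove $r_\alpha:=e_\alpha/q_\alpha\to1$. Dividing the refined relation $\psi(e_\alpha)=\psi(q_\alpha)+\tau\ln q_\alpha(1+o(1))$ by $\psi(q_\alpha)$ gives $r_\alpha^{\,\tau}\,g(q_\alpha r_\alpha)/g(q_\alpha)\to1$, so the whole problem reduces to showing that the slowly varying factor $g(q_\alpha r_\alpha)/g(q_\alpha)$ tends to $1$; this is precisely where hypothesis \eqref{eq:GU01} enters, since from the previous step we only know $\ln r_\alpha=o(\ln q_\alpha)$ and the condition $\big(g(cx)/g(x)-1\big)\ln g(x)\to0$ is exactly what forces $\ln\!\big(g(q_\alpha r_\alpha)/g(q_\alpha)\big)\to0$, and hence $r_\alpha^{\,\tau}\to1$. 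I expect this last step to be the main obstacle: controlling $g$ along the ratio $r_\alpha$, which is a priori only known to grow subpolynomially, rather than along a fixed dilation as in the definition of $RV_0$; ruling out an escaping or oscillating $r_\alpha$ is what genuinely requires \eqref{eq:GU01} (or an appeal to the corresponding Gumbel-type expectile asymptotics in \citep{bellini2017}), whereas the first two parts are essentially forced by \eqref{eq:foc} and the generic $MDA(\Lambda)$ tail estimates.
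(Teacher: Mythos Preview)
Your approach is substantially more detailed than the paper's own proof, which is essentially a two-line citation: for $1-\alpha=o(1-\beta^\ast)$ the paper invokes \cite[Proposition 3.6]{mao2015} (giving $1-F(q_\alpha)=o(1-\beta^\ast)$) together with the standard fact $1-F(q_\alpha)\sim 1-\alpha$ in $MDA(\Lambda)$; for the two asymptotic equivalences it simply combines $e_\alpha\sim q_\alpha$ (respectively $\ln e_\alpha\sim\ln q_\alpha$) from \cite[Proposition 2.4]{bellini2017} with $ES_\alpha\sim q_\alpha$ from \cite{tang2012,mao2012}. You instead try to reprove the expectile--quantile comparison from scratch via the first-order condition \eqref{eq:foc} and the mean-excess asymptotics.

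Your argument for the first assertion is correct and in fact cleaner than the paper's indirect citation chain: the identity $\tfrac{1-\alpha}{(2\alpha-1)(1-\beta^\ast)}=a(e_\alpha)/e_\alpha$ combined with $a(x)/x\to 0$ gives the claim directly. For the logarithmic part your outline is sound, though the step $a(x)/x\sim 1/(\tau\psi(x))$ deserves a word of justification (it follows from $\psi\in RV_\tau$ and a Karamata-type tail integration, or equivalently from the von~Mises representation). Note also a sign slip: since $1-\alpha=o(1-\beta^\ast)$ forces $e_\alpha<q_\alpha$ eventually, one has $\psi(e_\alpha)-\psi(q_\alpha)<0$, so the refined relation should read $\psi(e_\alpha)=\psi(q_\alpha)-\tau\ln q_\alpha(1+o(1))$; this does not affect the conclusion.

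For the final statement you correctly identify the genuine obstacle: from $r_\alpha^\tau\, g(q_\alpha r_\alpha)/g(q_\alpha)\to 1$ and $\ln r_\alpha=o(\ln q_\alpha)$ alone one cannot force $r_\alpha\to 1$, because $r_\alpha$ is a moving dilation rather than the fixed $c$ appearing in \eqref{eq:GU01}. You do not close this gap yourself but defer to \cite{bellini2017}, which is exactly what the paper does. So your treatment of parts one and two is a more self-contained alternative to the paper's citations, while for part three both you and the paper ultimately rely on \cite[Proposition 2.4]{bellini2017}.
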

\begin{proof}
    From \citep[Proposition 3.6]{mao2015}, we have $1-F(q_\alpha)=o(1-\beta^\ast)$.
    For $F$ in $MDA(\Lambda)$, it is known that $1-F(q_\alpha)\sim 1-\alpha$, see \citep{tang2012} for instance.
    Therefore, $1-\alpha=o(1-\beta^\ast)$.
    As for the relationship between expectile and expected shortfall it is a direct consequence of \citep{tang2012,mao2012} and \citep[Proposition 2.4]{bellini2017}.
\end{proof}

As an application of the asymptotic results, we compare $e_\alpha$ and the upper bound
\begin{equation*}
    \left(1-\frac{1-\alpha}{\alpha}\right)ES_\alpha(L)+\frac{1-\alpha}{\alpha}E[L]
\end{equation*} 
when $F$ belongs to the domain of attractions of extreme value distributions.
In general, this bound is asymptotically equivalent to $ES_\alpha$.
For $F$ in the domain of attraction of Fr\'{e}chet type $MDA(\Phi_\eta)$ with $\eta>1$, it holds that $e_\alpha/ES_\alpha<1$ as $\alpha$ goes to $1$.
In this particular case, the bound is not asymptotically equivalent to $e_\alpha$.
Every distribution $F$ in the Weibull type  $MDA(\Psi_\eta)$ has finite end point $\hat{x}$.
This implies that both $e_\alpha$ and $ES_\alpha$ converges to $\hat{x}$.
Hence, the bound become asymptotically equivalent to $e_\alpha$ provided that $\hat{x}\neq 0$.
For $F$ in the Gumbel type $MDA(\Lambda)$ with finite end point $\hat{x}$ or satisfying condition \eqref{eq:GU01}, the bound also becomes asymptotically equivalent to $e_\alpha$. 

\section{Examples and Simulations}
For many common distributions explicit or semi-explicit expressions for both the quantile and expected shortfall are known.
Taking this advantage, in this section we illustrate the explicit or semi-explicit computations of expectile using the optimal $\beta^\ast$ and illustrating some of the results of Section \ref{sec:04} for Beta, exponential, Pareto and Student $t$ distributions.
While Beta distribution is a Weibull type $MDA(\Psi_1)$, the exponential is Gumbel type $MDA(\Lambda)$.
The Pareto and Student $t$ distributions are Fr\'{e}chet type $MDA(\Phi_\eta)$ with $\eta=a$ and $\eta=\nu$, respectively.
We also include  a simulation study for Pareto and Student $t$ distributions to compare the sample size required for empirical ratio $e_{\alpha,n}/q_{\alpha,n}$ and $e_{\alpha,n}/ES_{\alpha,n}$ to converge the theoretical ratio  $e_\alpha/q_\alpha$ and $e_\alpha/ES_\alpha$, respectively.

\begin{example}[Beta]
    For $a>0$, let $ F_L(x)=x^a$ with $x$ in $[0,1]$. Then 
    \begin{equation*}
        q_L(\beta^\ast)={\beta^\ast}^{1/a}, \quad E[L]=\frac{a}{a+1}\quad \text{and}\quad ES_{\beta^\ast}(L) =\frac{a\left(1-{\beta^\ast}^{\frac{1}{a}+1}\right)}{(1-\beta^\ast)(a+1)}.
    \end{equation*}
    Relation \eqref{eq:opt} gives the optimal $\beta^\ast$ solving 
    \begin{equation*}
        {\beta^\ast}^{\frac{1}{a}}(\alpha(a+1)+(1-2\alpha)\beta^\ast)=a\alpha.
    \end{equation*}
    Hence,
    \begin{equation*}
        e_\alpha(L)={\beta^\ast}^{1/a}.
    \end{equation*} 
    For $a=1$, the Beta distribution coincides with the standard uniform distribution and it holds that 
    \begin{equation*}
        \beta^\ast=\frac{\sqrt{\alpha(1-\alpha)}-\alpha}{1-2\alpha} = e_\alpha(L).
    \end{equation*}
    If $a\neq 1$, then $1-F_L(1/\cdot)\in 2RV_{-1,-1}$ with auxiliary function $A(x)=(a-1)x^{-1}/2$, see \citep{mao2012,mao2015} for instance.
    By Remark \ref{rem:WE}, we have
    \begin{multline*}
    \frac{1-ES_\alpha(L)}{1-e_\alpha(L)}=\frac{\sqrt{(a+1)(2\alpha-1)(1-\alpha^{\frac{1}{a}})}}{2\sqrt{2}}\\
        \left(1+\frac{a+2}{3}\sqrt{\frac{2(1-\alpha^{\frac{1}{a}})}{a+1}}(1+o(1))\right).
    \end{multline*}
\end{example}
\begin{figure}[H]
    \centering
    \begin{subfigure}{.49\textwidth}
        \includegraphics[width=\textwidth]{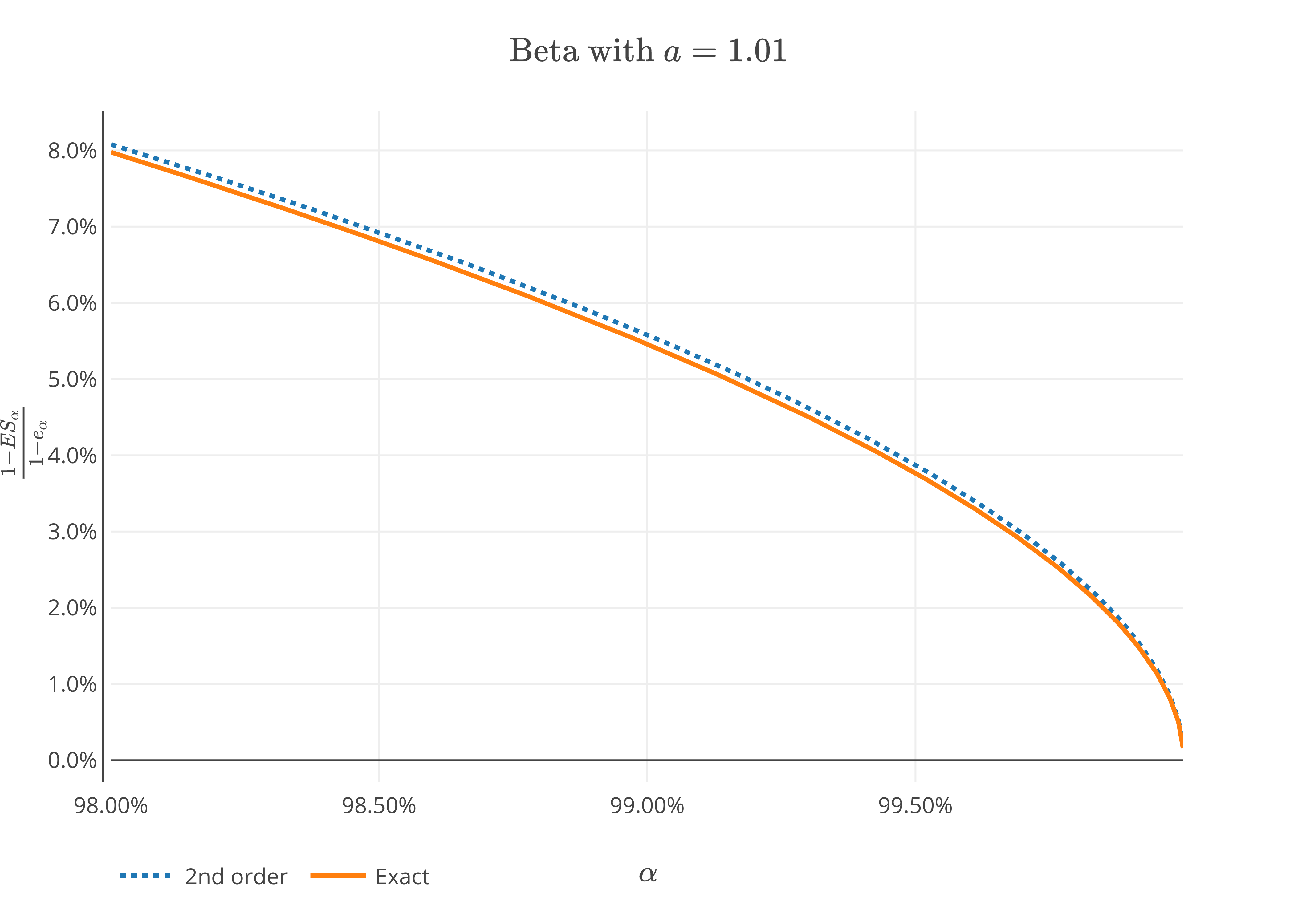}
    \end{subfigure}
    \begin{subfigure}{.49
        \textwidth}
        \includegraphics[width=\textwidth]{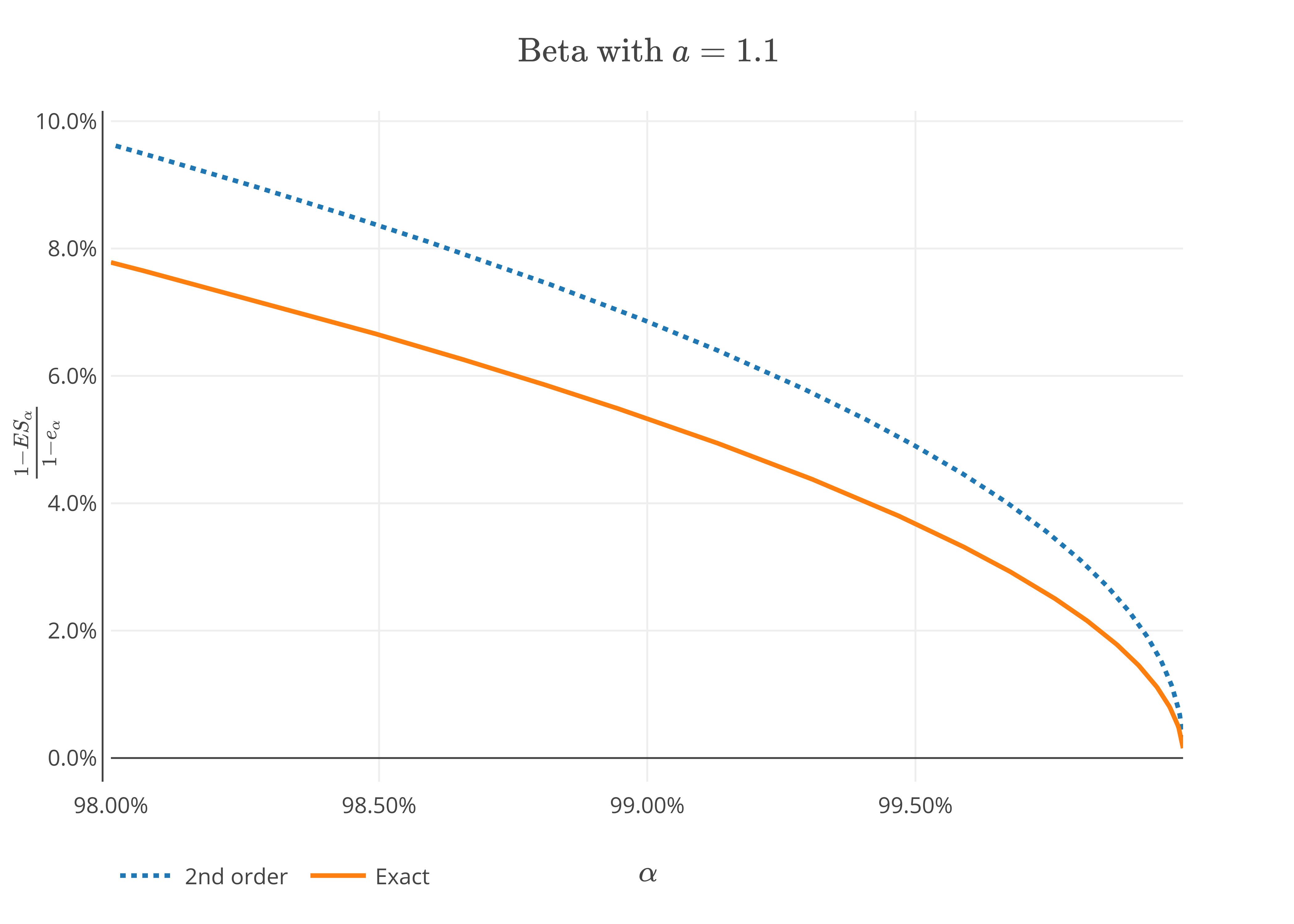}
    \end{subfigure}
    \caption{Graph of the ratio $(1-e_\alpha) /(1-ES_\alpha)$ for Beta distribution  with $a=1.01$ and $a=1.1$.}
    \label{fig:figbeta}
\end{figure}
As $\alpha$ goes to $1$, the ratio $(1-e_\alpha) /(1-ES_\alpha)$ goes to $0$.
As shown in figure \ref{fig:figbeta}, the accuracy of the second order expansion for $(1-e_\alpha) /(1-ES_\alpha)$ depends on the parameter $a$.
As $a$ close to $1$, the second order expansion become more accurate.

\begin{example}[Exponential]\label{eg:exponential}
    Let $F_L(x)=1-\exp(-x)$ for $x\geq 0$. 
    Then $ES_{\beta^\ast}(L)=1-\ln(1-\beta^\ast)$ and $q_L(\beta^\ast)=-ln(1-\beta^\ast)$.
    The Relation \eqref{eq:opt} becomes $1+(1-2\alpha)\beta^\ast=(1-\alpha)(1-\ln(1-\beta^\ast))$.
    For $x:=1-\ln{\beta^\ast}$, it holds $(x-2)e^{x-2}=(2\alpha-1)/((1-\alpha) e)$.
    Thus, $x=2+\mathcal{W}((2\alpha-1)/(1-\alpha)e)$ and 
    $\beta^\ast=1-\exp(1-x)$ where, $\mathcal{W}$ is Lambert function\footnote{$\mathcal{W}$ is a function such that $xe^x=y$ if and only if $x=\mathcal{W}(y)$.}.
    Therefore,
    \begin{equation*}
        e_\alpha(L)=1+\mathcal{W}\left(\frac{2\alpha-1}{(1-\alpha) e}\right).
    \end{equation*}
    A similar expression for $e_\alpha$ can also be found in \citep{bellini2017}.
    It is also known that $F_L$ belongs to Gumbel type $MDA(\Lambda)$ and satisfy condition \eqref{eq:GU01}.
    Hence, $e_\alpha(L)\sim ES_\alpha(L)$.
\end{example}

\begin{example}[Pareto]
    \label{eg:Pareto}
    For $a>1$ and $x\geq 0$, let $ F_L(x)=1-(1/(x+1))^a$.
    It follows that $q_L(\beta^\ast)={(1-\beta^\ast)}^{-1/a}-1$, $E[L]=1/(a-1)$ and $ES_{\beta^\ast}(L)= a E[L]{(1-\beta^\ast)}^{-1/a}-1$.
    Relation \eqref{eq:opt} gives the optimal $\beta^\ast$ solving 
    \begin{equation*}
        a(1-\alpha) \left({(1-\beta^\ast)}^{\frac{1}{a}} -1 \right)+\alpha +(1-2\alpha)\beta^\ast=0.
    \end{equation*}
    Hence,
    \begin{equation*}
        e_\alpha(L)={(1-\beta^\ast)}^{-1/a}-1.
    \end{equation*} 
    In particular, for $a=2$,
    \begin{equation*}
        \beta^\ast=\frac{\alpha+2\sqrt{\alpha(1-\alpha)}}{1+2\sqrt{\alpha(1-\alpha)}}, \quad \text{and} \quad e_\alpha(L)= \frac{\sqrt{\alpha(1-\alpha)}}{1-\alpha}.
    \end{equation*}
    It also holds that $1-F_L\in 2RV_{-a,-1}$ with auxiliary function $A(x)=a/x$, see \citep{hua2011,mao2015}.
    By Remark \ref{rem:FR}, for $\tilde{L}=L-E[L]$ it follows that $1-F_{\tilde{L}}\in 2RV_{-a,-1}$ with auxiliary function $A^\ast(x)=a^2x^{-1}/(a-1)$.
    Hence, by Proposition \ref{prop:frechet2}, it holds that
    \begin{equation*}
        \frac{e_\alpha(\tilde{L})}{ES_\alpha(\tilde{L})}=
        \frac{(a-1)^{\frac{a-1}{a}}(2\alpha-1)^{\frac{1}{a}}}{a}\left(1+\frac{1-(a-1)^{\frac{1}{a}}}{(1-\alpha)^{-\frac{1}{a}}-\frac{a}{a-1}}(1+o(1))\right).
    \end{equation*}
    The cash-invariant property gives 
    \begin{equation*}
        e_\alpha(L)=\frac{1}{a-1}+\left(\frac{2\alpha-1}{a-1}\right)^{\frac{1}{a}}\left((1-\alpha)^{-\frac{1}{a}}-1\right)\left(1+\frac{1-(a-1)^{\frac{1}{a}}}{(1-\alpha)^{-\frac{1}{a}}-\frac{a}{a-1}}(1+o(1))\right).
    \end{equation*}
    \begin{figure}[H]
        \centering
        \begin{subfigure}{.49\textwidth}
            \includegraphics[width=\textwidth]{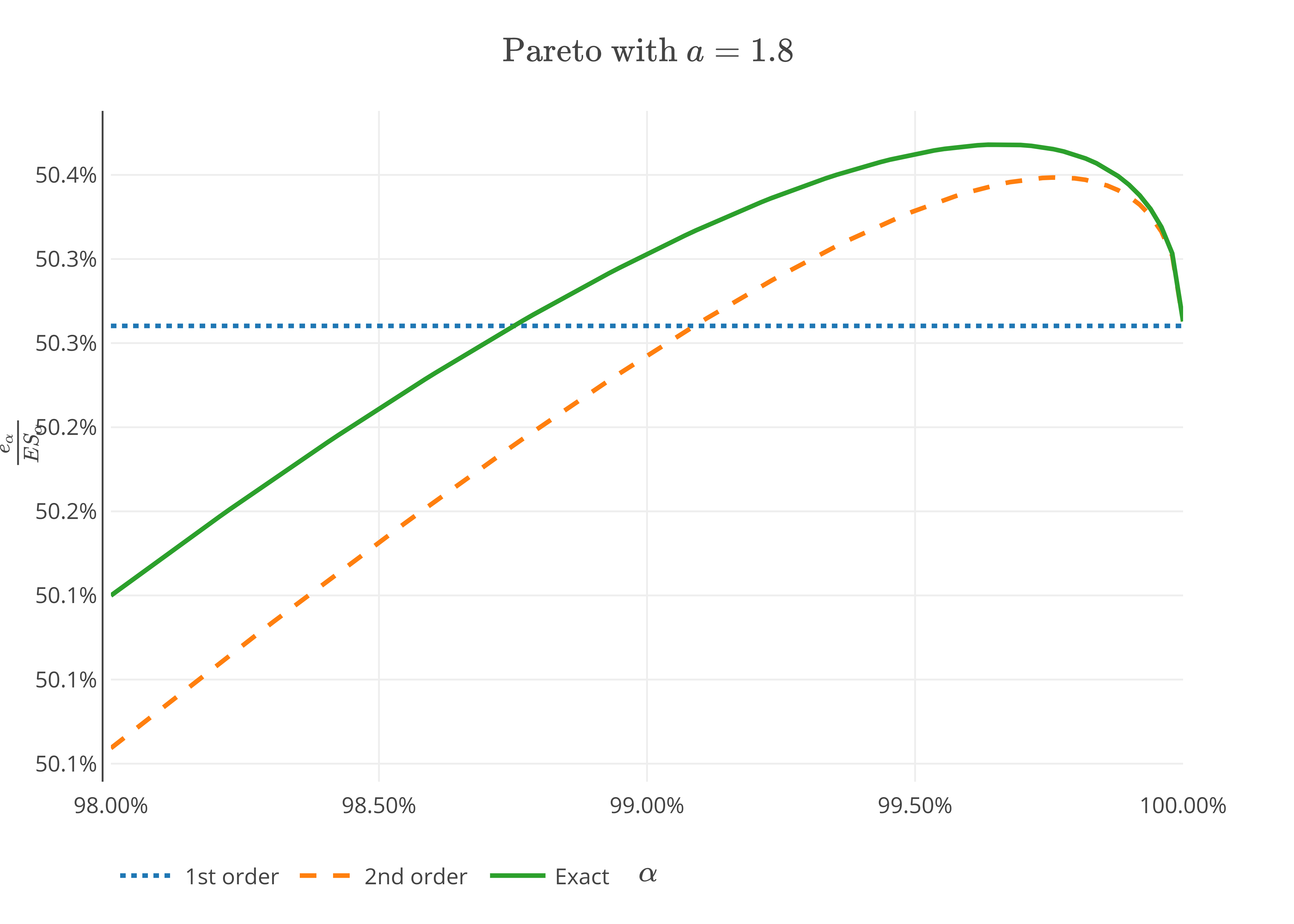}
        \end{subfigure}
        \begin{subfigure}{.49\textwidth}
            \includegraphics[width=\textwidth]{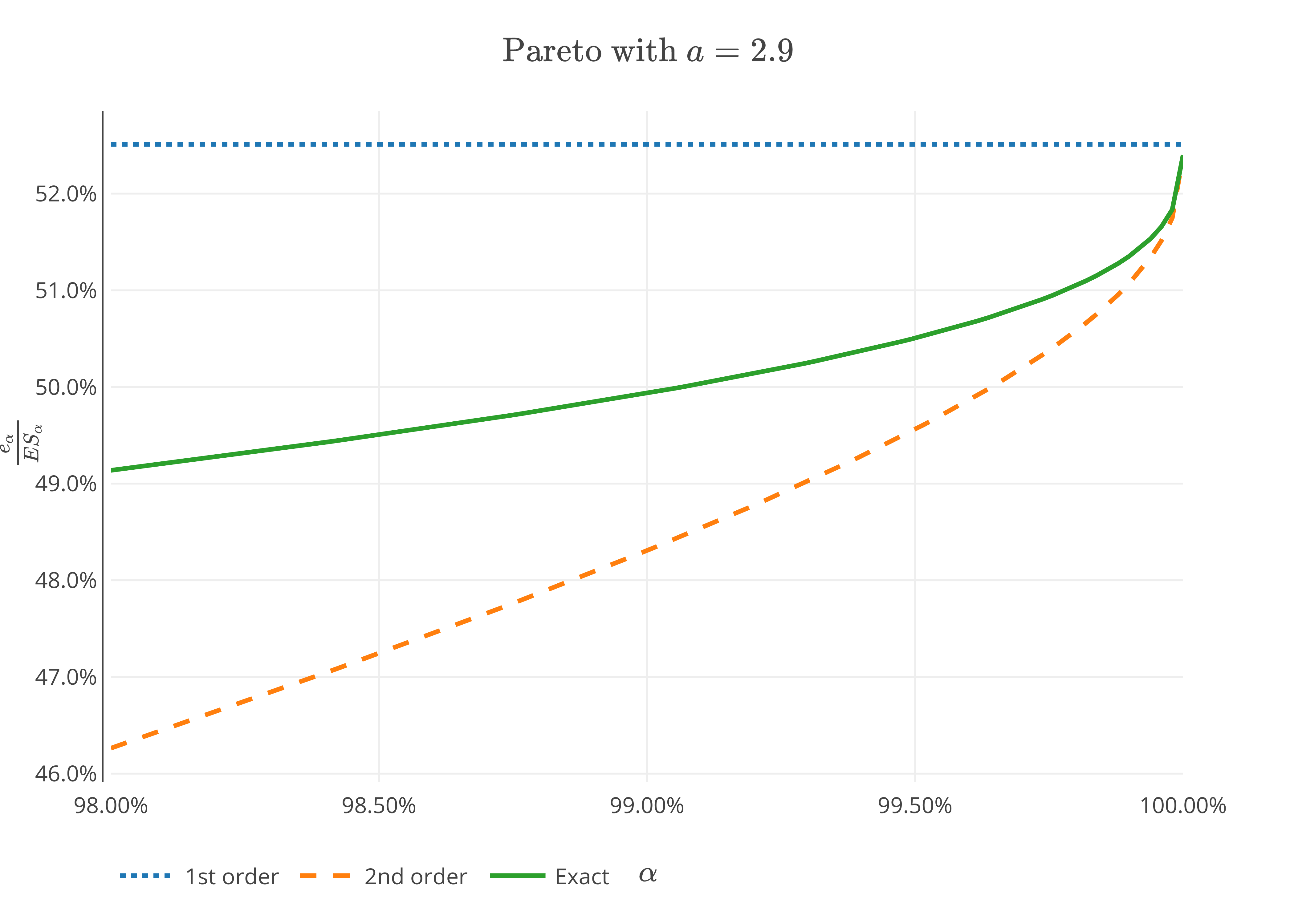}
        \end{subfigure}
        \caption{Graph of $e_\alpha(\tilde{L}) /ES_\alpha(\tilde{L})$ for Pareto distribution for $a=1.8$ and $a=2.9$.}
        \label{fig:Pareto01}
    \end{figure}
    The second order expansion is more accurate than the first order one.
    The accuracy seems better when the tail become more heavier, see Figure \ref{fig:Pareto01}.
\end{example}

\begin{example}[Standard Student t]\label{eg:Student}
    Let $L$ be a standard Student $t$ with degree of freedom $v>1$. 
    From \citep{neil2015}, we get 
    \begin{equation*}
        ES_{\beta^\ast}(L)=\frac{1}{(1-\beta^\ast)(v-1)}\psi(\Psi^{-1}(\beta^\ast)) \left(v+(\Psi^{-1}(\beta^\ast))^2\right)
    \end{equation*}
    where $\Psi$ and $\psi$ are the cumulative distribution and probability density function of the standard Student $t$ distribution with $v$ degree of freedom, respectively.
    Relation \eqref{eq:opt}, yields the optimal $\beta^\ast$ solving 
    \begin{equation*}
        \Psi^{-1}(\beta^\ast)=\frac{(2\alpha-1)\psi(\Psi^{-1}(\beta^\ast))(v+(\Psi^{-1}(\beta^\ast))^2)}{(v-1)((1-2\alpha)\beta^\ast+\alpha)}.
    \end{equation*}
    Hence, 
    \begin{equation*}
        e_\alpha(L)=\frac{(2\alpha-1)\psi(\Psi^{-1}(\beta^\ast))(v+(\Psi^{-1}(\beta^\ast))^2)}{(v-1)((1-2\alpha)\beta^\ast+\alpha)}.
    \end{equation*}
    It also holds that $\eta=\nu$ such that $1-F_L\in 2RV_{-\nu,-2}$ with auxiliary function $A(x)=\nu^2(\nu+1)x^{-2}/(\nu+2)$, see \citep{hua2011,mao2015}.
    By Proposition \ref{prop:frechet2}, it holds that
    \begin{equation*}
        \frac{e_\alpha(L)}{ES_\alpha(L)}=
        \frac{(\nu-1)^{\frac{\nu-1}{\nu}}(2\alpha-1)^{\frac{1}{\nu}}}{\nu}\left(1+\frac{(\nu-1)\left(1-(\nu-1)^{\frac{2}{\nu}}\right)}{2(\nu+2)(q_L^2(\alpha))}(1+o(1))\right).
    \end{equation*}
    Figure \ref{fig:Studentt01} shows that the second order expansion is more accurate than the first order one.
    The simulation results suggests that the accuracy of the second order expansion is better when the distribution become more heavier.
    \begin{figure}[H]
        \centering
        \begin{subfigure}{.49\textwidth}
            \includegraphics[width=\textwidth]{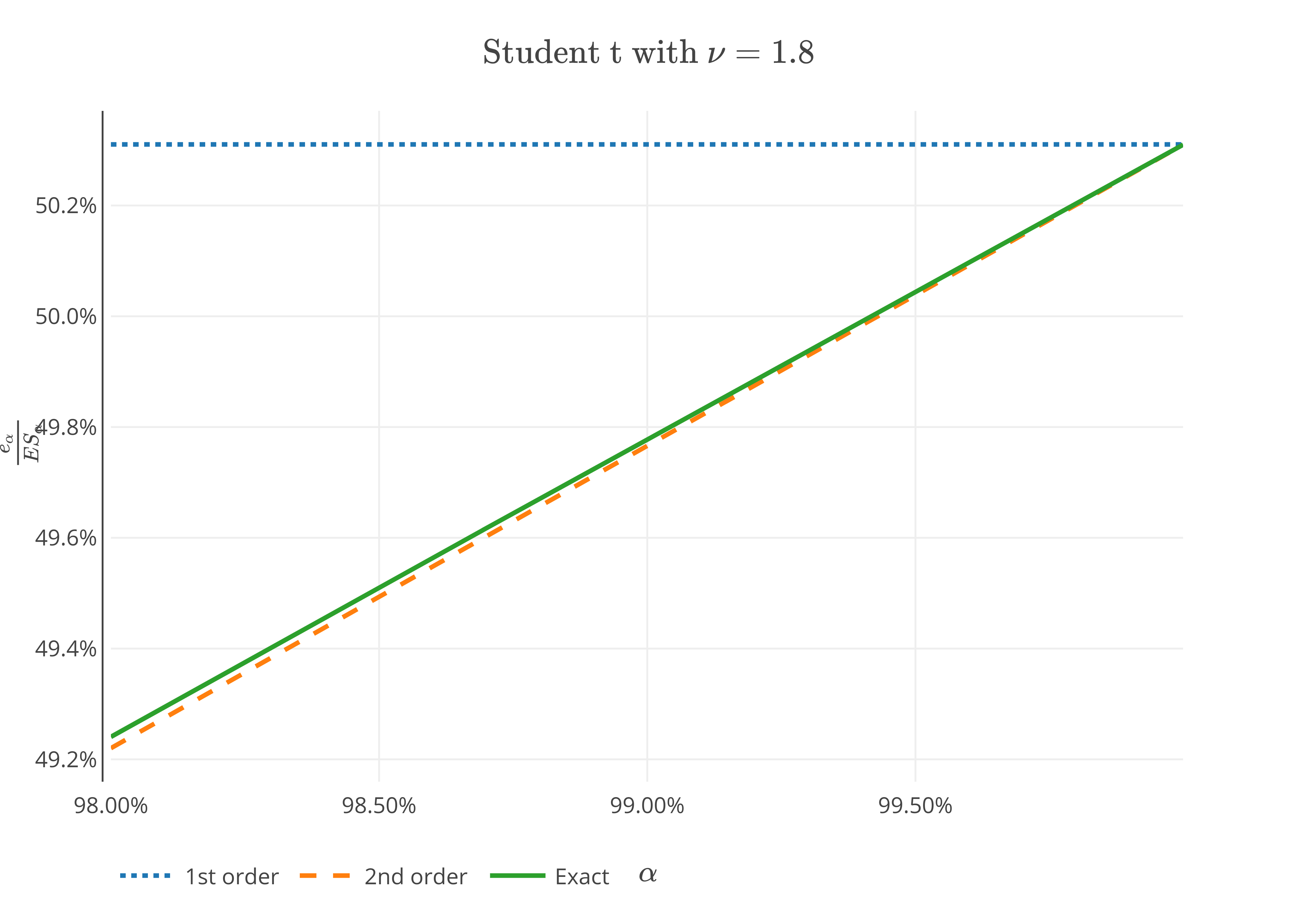}
        \end{subfigure}
        \begin{subfigure}{.49\textwidth}
            \includegraphics[width=\textwidth]{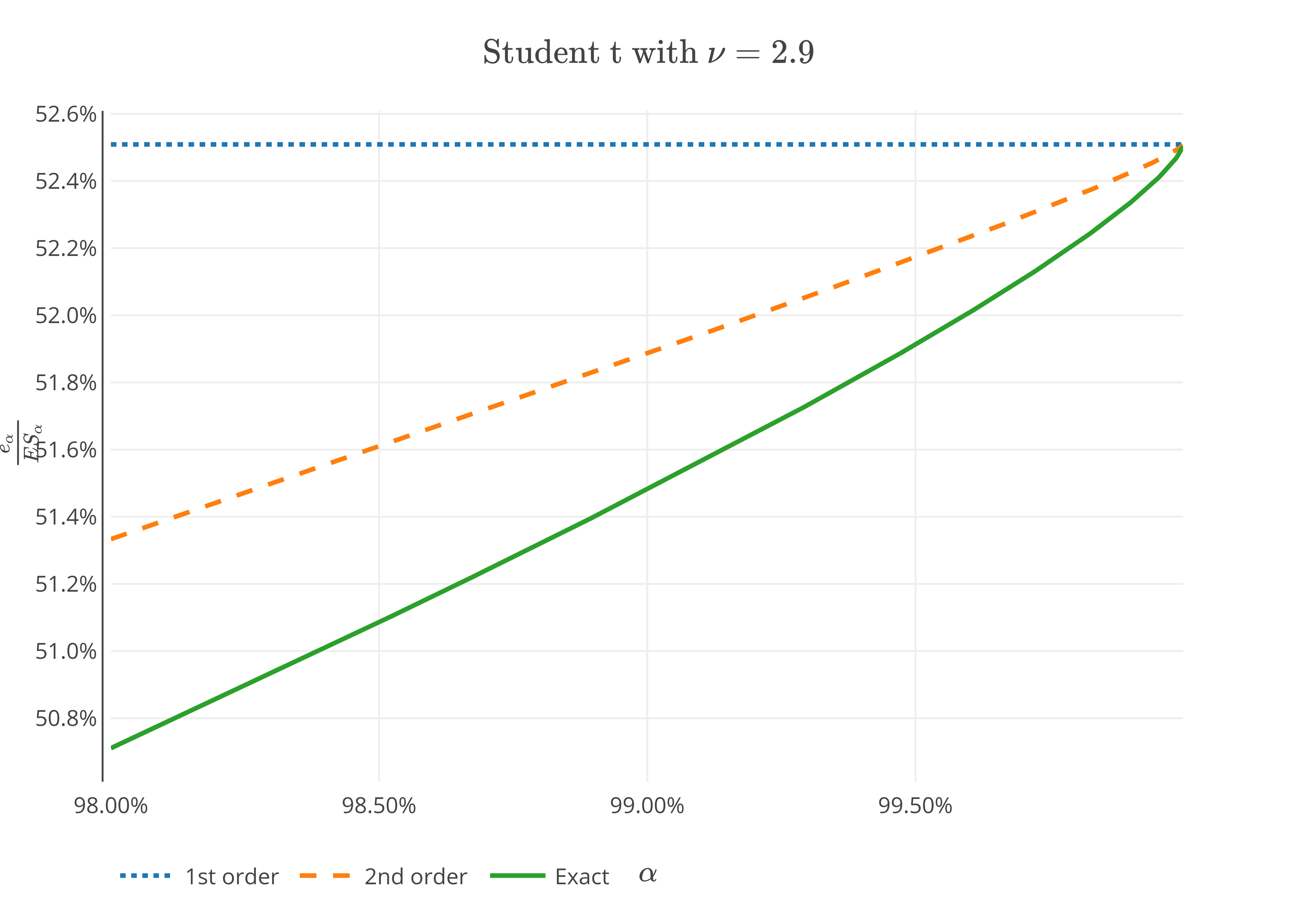}
        \end{subfigure}
        \caption{Graph of $e_\alpha /ES_\alpha$ for standard Student $t$  distribution for $\nu=1.8$ and $\nu=2.9$.}
        \label{fig:Studentt01}
    \end{figure}
\end{example}
As shown in Section \ref{sec:04}, in order to have the same bound for the probability of an estimation error bigger than a fixed threshold $\varepsilon$, value at risk  needs more observations than  expected shortfall and expectile as $\alpha$ goes to $1$, when the data is sampled from a heavy tailed distribution.
As a result of this fact, the same argument can be applied for the ratio of expectile to quantile versus the ratio of expectile to expected shortfall.
To illustrate this fact, we compare the empirical ratios $e_{\alpha,n}/q_{\alpha,n}$ and $e_{\alpha,n}/ES_{\alpha,n}$ with the theoretical ratio $e_\alpha/q_\alpha$ and $e_\alpha/ES_\alpha$, respectively by generating a random sample from Pareto and Standard Student $t$ distributions with tail index $\eta=2.1$ and $\eta=2.3$.
We denote by $Err\%$ the absolute relative percentage error 
\begin{equation*}
    \left|\frac{e_\alpha/q_\alpha-e_{\alpha,n}/q_{\alpha,n}}{e_\alpha/q_\alpha}\right|\times 100\% \quad \text{and}\quad
    \left|\frac{e_\alpha/ES_\alpha-e_{\alpha,n}/ES_{\alpha,n}}{e_\alpha/ES_\alpha}\right|\times 100\%
\end{equation*}
of the empirical ratio to the theoretical ratio for quantile and expected shortfall, respectively.
Table \ref{tab:pareto21}--\ref{tab:T17} compares the relative percentage error of the empirical ratios $e_{\alpha,n}/q_{\alpha,n}$ and $e_{\alpha,n}/ES_{\alpha,n}$ with the theoretical ratio $e_\alpha/q_\alpha$ and $e_\alpha/ES_\alpha$, respectively.
Indeed, both tables suggest that for Pareto and Student $t$ distributions the ratio $e_{\alpha,n}/ES_{\alpha,n}$ converges to  $e_\alpha/ES_\alpha$ faster than the ratio of expectile to value at risk.
\begin{table}[H]
    \begin{center}
        \resizebox{\columnwidth}{!}{
            \begin{tabular}{@{}lrrrrrrrrrrr@{}}
                \toprule
                \multicolumn{1}{c}{} & \multicolumn{1}{c}{} & \multicolumn{1}{c}{} & \multicolumn{1}{c}{}  & \multicolumn{2}{c}{$n=10^6$} & \multicolumn{1}{c}{}& \multicolumn{2}{c}{$n=5\times 10^5$}& \multicolumn{1}{c}{}& \multicolumn{2}{c}{$n=10^5$}\\
                    \cmidrule{5-6} \cmidrule{8-9} \cmidrule{11-12}
                    {}& $\alpha$& Theo. Ratio &{}& Emp. Ratio& $Err\%$ &{} & Emp. Ratio& $Err\%$ &{}& Emp. Ratio & $Err\%$\\
                    \midrule
                               &$98.3\%$  &$1.0941$   &{}        &$1.1066$
                               &$1.14\%$  &{}         &$1.0948$  &$0.05\%$
                               &{}        &$1.0697$   &$2.24\%$\\
                Expectile vs   &$98.7\%$  &$1.0759$   &{}        &$1.0922$ 
                               &$1.53\%$  &{}         &$1.0753$  &$0.06\%$
                               &{}        &$1.0593$   &$1.55\%$\\
                Value at Risk  &$99.1\%$  &$1.0551$   &{}        &$1.0742$
                               &$1.81\%$  &{}         &$1.0557$  &$0.05\%$
                               &{}        &$1.0499$   &$0.49\%$\\
                               &$99.5\%$  &$1.0294$   &{}        &$1.0491$
                               &$1.91\%$  &{}         &$1.0373$  &$0.76\%$
                               &{}        &$0.9940$   &$3.44\%$\\
                               &$99.9\%$  &$0.9888$   &{}         &$1.0238$
                               &$3.53\%$  &{}         &$0.9805$   &$0.84\%$
                               &{}        &$0.9456$   &$4.37\%$\\
                \midrule
                               &$98.3\%$  &$0.5307$   &{}         &$0.5308$
                               &$0.02\%$  &{}         &$0.5307$   &$0.00\%$
                               &{}        &$0.5310$   &$0.05\%$\\
                               &$98.7\%$  &$0.5273$   &{}         &$0.5275$
                               &$0.04\%$  &{}         &$0.5272$   &$0.02\%$
                               &{}        &$0.5278$   &$0.11\%$\\
                Expectile vs   &$99.1\%$  &$0.5231$   &{}         &$0.5223$
                               &$0.03\%$  &{}         &$0.5231$   &$4.50\%$
                               &{}        &$0.5238$   &$0.13\%$\\
                ES             &$99.5\%$  &$0.5177$   &{}         &$0.5177$
                               &$0.00\%$  &{}         &$0.5177$   &$0.01\%$
                               &{}        &$0.5183$   &$0.11\%$\\
                               &$99.9\%$  &$0.5086$   &{}         &$0.5083$
                               &$0.05\%$  &{}         &$0.5086$   &$0.00\%$
                               &{}        &$0.5103$   &$0.34\%$\\
                \bottomrule
            \end{tabular}
        }
    \end{center}
    \caption{The ratio $e_{\alpha,n}/q_{\alpha,n}$,  $e_\alpha /q_\alpha$, $e_{\alpha,n}/ES_{\alpha,n}$, $e_\alpha /ES_\alpha$ and relative percentage error of Pareto distribution with $a=2.1$.}
    \label{tab:pareto21}
\end{table}
\begin{table}[H]
    \begin{center}
        \resizebox{\columnwidth}{!}{
            \begin{tabular}{@{}lrrrrrrrrrrr@{}}
                \toprule
                \multicolumn{1}{c}{} & \multicolumn{1}{c}{} & \multicolumn{1}{c}{} & \multicolumn{1}{c}{}  & \multicolumn{2}{c}{$n=10^6$} & \multicolumn{1}{c}{}& \multicolumn{2}{c}{$n=5\times 10^5$}& \multicolumn{1}{c}{}& \multicolumn{2}{c}{$n=10^5$}\\
                    \cmidrule{5-6} \cmidrule{8-9} \cmidrule{11-12}
                    {}& $\alpha$& Theo. Ratio &{}& Emp. Ratio& $Err\%$ &{} & Emp. Ratio& $Err\%$ &{}& Emp. Ratio & $Err\%$\\
                    \midrule
                               &$98.3\%$  &$0.9577$   &{}        &$0.9422$
                               &$4.71\%$  &{}         &$0.9387$  &$4.34\%$
                               &{}        &$0.9839$   &$9.07\%$\\
                Expectile vs   &$98.7\%$  &$0.9574$   &{}        &$0.9386$ 
                               &$4.42\%$  &{}         &$0.9363$  &$4.17\%$
                               &{}        &$0.9892$   &$9.69\%$\\
                Value at Risk  &$99.1\%$  &$0.9569$   &{}        &$0.9370$
                               &$4.34\%$  &{}         &$0.9268$  &$3.27\%$
                               &{}        &$0.9929$   &$10.18\%$\\
                               &$99.5\%$  &$0.9565$   &{}        &$0.9269$
                               &$3.39\%$  &{}         &$0.9220$  &$2.88\%$
                               &{}        &$0.9913$   &$10.13\%$\\
                               &$99.9\%$  &$0.9559$   &{}         &$0.9180$
                               &$2.63\%$  &{}         &$0.8859$   &$0.73\%$
                               &{}        &$0.9593$   &$6.94\%$\\
                \midrule
                               &$98.3\%$  &$0.4918$   &{}         &$0.4919$
                               &$0.56\%$  &{}         &$0.4919$   &$0.56\%$
                               &{}        &$0.4911$   &$0.73\%$\\
                               &$98.7\%$  &$0.4938$   &{}         &$0.4942$
                               &$0.56\%$  &{}         &$0.4942$   &$0.55\%$
                               &{}        &$0.4932$   &$0.77\%$\\
                Expectile vs   &$99.1\%$  &$0.4959$   &{}         &$0.4964$
                               &$0.56\%$  &{}         &$0.4965$   &$0.52\%$
                               &{}        &$0.4952$   &$0.79\%$\\
                ES             &$99.5\%$  &$0.4980$   &{}         &$0.4989$
                               &$0.48\%$  &{}         &$0.4990$   &$0.47\%$
                               &{}        &$0.4973$   &$0.81\%$\\
                               &$99.9\%$  &$0.5000$   &{}         &$0.5017$
                               &$0.38\%$  &{}         &$0.5030$   &$0.12\%$
                               &{}        &$0.4998$   &$0.78\%$\\
                \bottomrule
            \end{tabular}
        }
    \end{center}
    \caption{The ratio $e_{\alpha,n}/q_{\alpha,n}$,  $e_\alpha /q_\alpha$, $e_{\alpha,n}/ES_{\alpha,n}$, $e_\alpha /ES_\alpha$ and relative percentage error for standard Student $t$ distribution with $\nu=2.1$.}
    \label{tab:T21}
\end{table}

The simulation result also suggest that with $\nu=2.1$ for  Student $t$ distribution  more observations may be needed for  $e_{\alpha,n}/q_{\alpha,n}$ than  $e_{\alpha,n}/ES_{\alpha,n}$ to converge to the theoretical ratio as compared to Pareto distribution with $a=2.1$, see Table \ref{tab:pareto21} and \ref{tab:T21}.
\begin{table}[H]
    \begin{center}
        \resizebox{\columnwidth}{!}{
            \begin{tabular}{@{}lrrrrrrrrrrr@{}}
                \toprule
                \multicolumn{1}{c}{} & \multicolumn{1}{c}{} & \multicolumn{1}{c}{} & \multicolumn{1}{c}{}  & \multicolumn{2}{c}{$n=10^6$} & \multicolumn{1}{c}{}& \multicolumn{2}{c}{$n=5\times 10^5$}& \multicolumn{1}{c}{}& \multicolumn{2}{c}{$n=10^5$}\\
                    \cmidrule{5-6} \cmidrule{8-9} \cmidrule{11-12}
                    {}& $\alpha$& Theo. Ratio &{}& Emp. Ratio& $Err\%$ &{} & Emp. Ratio& $Err\%$ &{}& Emp. Ratio & $Err\%$\\
                    \midrule
                               &$98.3\%$  &$1.0272$   &{}        &$1.0364$
                               &$0.88\%$  &{}         &$1.0136$  &$1.34\%$
                               &{}        &$1.0173$   &$0.97\%$\\
                Expectile vs   &$98.7\%$  &$1.0102$   &{}        &$1.0229$ 
                               &$1.26\%$  &{}         &$0.9992$  &$1.08\%$
                               &{}        &$0.9921$   &$1.79\%$\\
                Value at Risk  &$99.1\%$  &$0.9905$   &{}        &$1.0051$
                               &$1.47\%$  &{}         &$0.9785$  &$1.21\%$
                               &{}        &$0.9576$   &$3.33\%$\\
                               &$99.5\%$  &$0.9661$   &{}        &$0.9794$
                               &$1.37\%$  &{}         &$0.9474$  &$1.94\%$
                               &{}        &$0.9229$   &$4.47\%$\\
                               &$99.9\%$  &$0.9270$   &{}         &$0.9630$
                               &$3.88\%$  &{}         &$0.8820$   &$4.85\%$
                               &{}        &$0.8852$   &$4.52\%$\\
                \midrule
                               &$98.3\%$  &$0.5331$   &{}         &$0.5330$
                               &$0.02\%$  &{}         &$0.5335$   &$0.07\%$
                               &{}        &$0.5340$   &$0.17\%$\\
                               &$98.7\%$  &$0.5299$   &{}         &$0.5297$
                               &$0.03\%$  &{}         &$0.5304$   &$0.09\%$
                               &{}        &$0.5309$   &$0.19\%$\\
                Expectile vs   &$99.1\%$  &$0.5260$   &{}         &$0.5257$
                               &$0.06\%$  &{}         &$0.5266$   &$0.12\%$
                               &{}        &$0.5275$   &$0.29\%$\\
                ES             &$99.5\%$  &$0.5209$   &{}         &$0.5204$
                               &$0.11\%$  &{}         &$0.5218$   &$0.17\%$
                               &{}        &$0.5238$   &$0.55\%$\\
                               &$99.9\%$  &$0.5123$   &{}         &$0.5107$
                               &$0.33\%$  &{}         &$0.5157$   &$0.65\%$
                               &{}        &$0.5184$   &$1.18\%$\\
                \bottomrule
            \end{tabular}
        }
    \end{center}
    \caption{The Ratio $e_{\alpha,n}/q_{\alpha,n}$,  $e_\alpha/q_\alpha$, $e_{\alpha,n}/ES_{\alpha,n}, $ $e_\alpha/ES_\alpha$ and relative percentage error  for Pareto distribution with $a=2.3$.}
        \label{tab:pareto17}
\end{table}
\begin{table}[H]
    \begin{center}
        \resizebox{\columnwidth}{!}{
            \begin{tabular}{@{}lrrrrrrrrrrr@{}}
                \toprule
                \multicolumn{1}{c}{} & \multicolumn{1}{c}{} & \multicolumn{1}{c}{} & \multicolumn{1}{c}{}  & \multicolumn{2}{c}{$n=10^6$} & \multicolumn{1}{c}{}& \multicolumn{2}{c}{$n=5\times 10^5$}& \multicolumn{1}{c}{}& \multicolumn{2}{c}{$n=10^5$}\\
                    \cmidrule{5-6} \cmidrule{8-9} \cmidrule{11-12}
                    {}& $\alpha$& Theo. Ratio &{}& Emp. Ratio& $Err\%$ &{} & Emp. Ratio& $Err\%$ &{}& Emp. Ratio & $Err\%$\\
                    \midrule
                               &$98.3\%$  &$0.8971$   &{}        &$0.8967$
                               &$0.04\%$  &{}         &$0.8962$  &$0.09\%$
                               &{}        &$0.9190$   &$2.45\%$\\
                Expectile vs   &$98.7\%$  &$0.8963$   &{}        &$0.8953$ 
                               &$0.12\%$  &{}         &$0.8940$  &$0.26\%$
                               &{}        &$0.9165$   &$2.25\%$\\
                Value at Risk  &$99.1\%$  &$0.8955$   &{}        &$0.8925$
                               &$0.34\%$  &{}         &$0.9034$  &$0.88\%$
                               &{}        &$0.9211$   &$2.86\%$\\
                               &$99.5\%$  &$0.8944$   &{}        &$0.8976$
                               &$0.36\%$  &{}         &$0.8990$  &$0.51\%$
                               &{}        &$0.9130$   &$2.07\%$\\
                               &$99.9\%$  &$0.8929$   &{}         &$0.8933$
                               &$0.04\%$  &{}         &$0.8666$   &$2.94\%$
                               &{}        &$0.9429$   &$5.60\%$\\
                \midrule
                               &$98.3\%$  &$0.4947$   &{}         &$0.4946$
                               &$0.04\%$  &{}         &$0.4950$   &$0.05\%$
                               &{}        &$0.4932$   &$0.32\%$\\
                               &$98.7\%$  &$0.4969$   &{}         &$0.4970$
                               &$0.00\%$  &{}         &$0.4972$   &$0.04\%$
                               &{}        &$0.4951$   &$0.37\%$\\
                Expectile vs   &$99.1\%$  &$0.4991$   &{}         &$0.4992$
                               &$0.02\%$  &{}         &$0.4989$   &$0.05\%$
                               &{}        &$0.4972$   &$0.40\%$\\
                ES             &$99.5\%$  &$0.5014$   &{}         &$0.5012$
                               &$0.04\%$  &{}         &$0.5008$   &$0.11\%$
                               &{}        &$0.4996$   &$0.36\%$\\
                               &$99.9\%$  &$0.5037$   &{}         &$0.5033$
                               &$0.07\%$  &{}         &$0.5066$   &$0.56\%$
                               &{}        &$0.5009$   &$0.55\%$\\
                \bottomrule
            \end{tabular}
        }
    \end{center}
    \caption{The ratio $e_{\alpha,n}/q_{\alpha,n}$,  $e_\alpha /q_\alpha$, $e_{\alpha,n}/ES_{\alpha,n}$, $e_\alpha /ES_\alpha$ and relative percentage error for standard Student $t$ distribution with $\nu=2.3$.}
        \label{tab:T17}
\end{table}

\bibliographystyle{abbrvnat}
\bibliography{biblio}
\end{document}